\documentclass[11pt]{article}
\pdfoutput=1

\usepackage[dvipsnames]{xcolor}
\usepackage{hyperref}  
\hypersetup{
     colorlinks,
     citecolor=OliveGreen,
     filecolor=black,
     linkcolor=blue,
     urlcolor=black,
     linktoc=all
}
\usepackage[shortlabels]{enumitem} 
\usepackage{amsmath, amssymb, amsthm, mathtools}

\newcommand{\ignore}[1]{}
\newtheorem{theorem}{Theorem}
\newtheorem{lemma}{Lemma}[section]

\newtheorem{observation}{Observation}

\newtheorem{definition}{Definition}
\newtheorem{remark}{Remark}

\newcommand{\ilan}[1]{{\color{blue} ilan: #1}}

\usepackage{bbm}
\usepackage{mathrsfs}
\usepackage{latexsym}
\usepackage{microtype}
\usepackage{xspace}
\usepackage{framed}
\usepackage{tocloft}
\PassOptionsToPackage{dvipsnames,usenames,table}{xcolor}
\usepackage{tikz}
\usetikzlibrary{calc,math}
\usetikzlibrary{arrows,shapes}
\usepackage{lineno} 

\newbox\mytempbox
\tikzstyle{embeds} = [->, >=open triangle 45]

\usepackage{xpunctuate}
\usepackage[all,british]{foreign} 
\redefnotforeign[ie]{i.e\xperiodafter} 
\redefnotforeign[eg]{e.g\xperiodafter}

\newenvironment{tightcenter}
 {\parskip=0pt\par\nopagebreak\centering}
 {\par\noindent\ignorespacesafterend}

\usepackage{marginnote} 

\usepackage[draft,english,silent]{fixme} 
\fxsetup{theme=color,layout=marginnote,innerlayout=noinline}

\usepackage{amsmath, amsthm, amssymb, amsfonts} 
\usepackage{thmtools} 
\usepackage{mathtools}

\usepackage[full,small]{complexity}

\usepackage{environ}

\usepackage{ifthen}
\newboolean{NotfullVersion}

\setboolean{NotfullVersion}{false} 

\newcommand{\Naturals}{\mathbb{N}}

\newcommand{\dist}{\mbox{dist}}

\newcommand{\eps}{\epsilon}

\def\adm_#1{ \operatorname{adm}_{#1} }
\def\wcol_#1{ \operatorname{wcol}_{#1} }
\def\scol_#1{ \operatorname{scol}_{#1} }

\newtheorem*{theorem*}{Theorem}
\newlength{\RoundedBoxWidth}
\newsavebox{\GrayRoundedBox}
   {\setlength{\RoundedBoxWidth}{\textwidth-4.5ex}
    \def\boxheading{#1}
    \begin{lrbox}{\GrayRoundedBox}
       \begin{minipage}{\RoundedBoxWidth}%
   }{%
       \end{minipage}
    \end{lrbox}%
    \begin{tightcenter}%
    \begin{tikzpicture}%
       \node(Text)[draw=black!20,fill=white,rounded corners,%
             inner sep=2ex,text width=\RoundedBoxWidth]%
             {\usebox{\GrayRoundedBox}};
        \coordinate(x) at (current bounding box.north west);
        \node [draw=white,rectangle,inner sep=3pt,anchor=north west,fill=white] 
        at ($(x)+(6pt,.75em)$) {\boxheading};
    \end{tikzpicture}
    \end{tightcenter}\vspace{0pt}%
    \ignorespacesafterend
}

\usepackage[longend,vlined,plain,linesnumbered]{algorithm2e}

\SetKwInput{KwInput}{Input}
\SetKwInput{KwOutput}{Output}
\SetKwFor{RepTimes}{repeat}{times}{end}

\let\oldnl\nl
\newcommand{\nlnonumber}{\renewcommand{\nl}{\let\nl\oldnl}}

\newtheorem{fact}[lemma]{Fact}

\newcommand{\Ex}{\mathop{{\bf E}\/}}
\renewcommand{\Pr}{\operatorname{{\bf Pr}}}
\newcommand{\Prx}{\mathop{{\bf Pr}\/}}

\newcommand{\bG}{\mathbf{G}}

\newcommand{\bS}{\mathbf{S}}

\newcommand{\bZ}{\mathbf{Z}}

\newcommand{\bu}{\boldsymbol{u}}
\newcommand{\bv}{\boldsymbol{v}}
\newcommand{\bw}{\boldsymbol{w}}

\newcommand{\bpi}{\boldsymbol{\pi}}
\newcommand{\bell}{\boldsymbol{\ell}}

\newcommand{\calA}{\mathcal{A}}
\newcommand{\calB}{\mathcal{B}}
\newcommand{\calC}{\mathcal{C}}
\newcommand{\calD}{\mathcal{D}}
\newcommand{\calE}{\mathcal{E}}
\newcommand{\calF}{\mathcal{F}}
\newcommand{\calG}{\mathcal{G}}
\newcommand{\calH}{\mathcal{H}}

\newcommand{\calK}{\mathcal{K}}

\newcommand{\calP}{\mathcal{P}}
\newcommand{\calQ}{\mathcal{Q}}

\newcommand{\calT}{\mathcal{T}}

\newcommand{\calbE}{\boldsymbol{\calE}}

\title{A characterization of one-sided error testable graph properties in bounded degeneracy graphs}

\author{Oded Lachish\thanks{Birkbeck, University of London, UK. Email: \href{mailto:o.lachish@bbk.ac.uk}{o.lachish@bbk.ac.uk}.} \and  Amit Levi\thanks{University of Haifa, Israel. Email: \href{mailto:alevi@cs.haifa.ac.il}{alevi@cs.haifa.ac.il}.} \and Ilan Newman\thanks{University of Haifa, Israel. Email: \href{mailto:ilan@cs.haifa.ac.il}{ilan@cs.haifa.ac.il}.} \and Felix Reidl\thanks{Birkbeck, University of London, UK. Email: \href{mailto:f.reidl@bbk.ac.uk}{f.reidl@bbk.ac.uk}.}}
\date{}

\begin{document}

\maketitle
\begin{abstract}
We consider graph property testing in $p$-degenerate graphs under the random neighbor oracle model (Czumaj and Sohler, FOCS 2019). In this framework, a tester explores a graph by sampling uniform neighbors of vertices, and a property is testable with one-sided error if its query complexity is independent of the graph size.  It is known that one-sided error testable properties for minor-closed families are exactly those that can be defined by forbidden subgraphs of bounded size. However, the much broader class of $p$-degenerate graphs allows for high-degree ``hubs" that can structurally hide forbidden subgraphs from local exploration. 

In this work, we provide a complete structural characterization of all properties testable with one-sided error in $p$-degenerate graphs. We show that testability is fundamentally determined by the connectivity of the forbidden structures: a property is testable if and only if its violations cannot be fragmented across disjoint high-degree neighborhoods. Our results define the exact structural boundary for testability under these constraints, accounting for both the connectivity of individual forbidden subgraphs and the collective behavior of the properties they define.
\end{abstract}
\thispagestyle{empty}
\newpage

\setcounter{page}{1}
\section{Introduction}\label{sec:intro}
The framework of graph property testing centers on randomized algorithms that, given query access to an input graph $G$, must distinguish between graphs satisfying a property $\calP$ and those that are $\eps$-far from it. A graph is considered $\eps$-far if an $\eps$-fraction of its representation must be modified to satisfy $\calP$. The primary objective is to design algorithms whose query complexity is independent of the size of the graph. If the tester accepts any input satisfying the property with probability $1$, it is said to have \emph{one-sided} error.

The field was initiated by the seminal work of Goldreich, Goldwasser, and Ron~\cite{GGR98} in the dense graph model. In this regime, graphs are represented by adjacency matrices, and testers can query any potential edge $(u,v)$. This line of research eventually led to a profound characterization of all testable graph properties in the dense graph model~\cite{alon2006combinatorial,alon2005every,alon2008characterization}, both in general (with two-sided error) and for one-sided error testers.

Another model that has received significant attention is the bounded-degree model, initially formulated in~\cite{goldreichBndDeg1997}. Several significant results have been directed toward characterizing all testable properties within this model~\cite{FichtenbergerPS19,IKN20,AdlerKP24}. However, the question of a full characterization of all testable properties in this model remains open.

Beyond the dense and bounded-degree regimes, graph property testing has been extensively studied over the past two decades in the general graph model, and more specifically for sparse  graphs~\cite{goldreichBndDeg1997,parnas2002testing,kaufman2004tight,alonTriangle2008,newman2011every,kusumoto2014testing,babu2016every,CS19,edenCycles2024,ELRR25}. There are several distinct models for the sparse graph regime, varying with respect to the allowed query types. The general graph model enables the study of properties in graphs with a bounded \emph{average} degree. However, this strengthening comes at a price: even simple properties such as triangle-freeness~\cite{alonTriangle2008} and bipartiteness are no longer testable with a constant number of queries; rather, they require a query complexity that scales as a fractional power of the graph's size. By contrast, both of these properties are testable in the dense- and bounded-degree models.

These results indicated that the approach for this model should inherently differ from that of the dense and bounded-degree models. One solution was to restrict the family of input graphs, leading to new results that we discuss later. Czumaj and Sohler~\cite{CS19}, who provided a characterization of the testable graph properties for minor-free graphs in the \emph{random neighbor} oracle model, took a major step further. In this framework, the tester interacts with the graph by querying a vertex $v$ and receiving a neighbor $\bu$ selected uniformly at random. This demonstrated that in a model slightly weaker than the general graph model, full characterizations are possible when the input is a member of the restricted family of minor-free graphs. This is to be contrasted with the fact that  characterization of the testable graph properties (one- or two sided error) is not known, even for planar graphs.

It is important to note that Czumaj and Sohler's characterization of testable properties in this model is inherently tied to the nature of one-sided error. A one-sided error tester must accept any graph that satisfies the property $\calP$ with probability $1$. Consequently, it can reject an input graph only if it finds evidence of a violation: a set of edges that exist in the graph and explicitly forbid it from satisfying $\calP$. Because the random neighbor oracle can only confirm the existence of edges (and never their absence), it is impossible for a constant-query tester to certify that a graph is, for instance, an induced cycle or a complete bipartite graph. This leads to the fundamental observation that every one-sided testable property in this model is equivalent to a monotone property characterized by a finite family of forbidden subgraphs $\calF$. If a graph is $\eps$-far from being $\calF$-free, the tester must be able to find a copy of at least one $H\in \calF$ with high probability.

These observations gave rise to the main tool that Czumaj and Sohler used for their result: proving that $H$-freeness is testable for every graph $H$ in the family of interest. Our paper focuses on understanding what happens in much larger graph families where we know that $H$-freeness is \emph{not} testable for every graph in the family. The goal of this line of research is to discover the tools required to characterize the testable properties of natural graph families in the random neighbor oracle model. 

In this paper, we study properties of $p$-degenerate graphs for a constant $p$ (also known as $(p,1)$-admissible graphs, and bounded-arboricity graphs). A graph is $p$-degenerate if its vertices can be ordered so that every vertex has at most $p$ neighbors preceding it in the ordering. Note that while the average degree of a $p$-degenerate graph is bounded, such a graph may still contain vertices of any arbitrarily high degree. The structural definition of $p$-degenerate graphs is robust, as it admits several equivalent characterizations. In particular, it is related to \emph{bounded arboricity}, where the arboricity of the graph is $\Theta(p)$. $p$-degenerate graphs significantly generalize planar graphs and all minor-free classes,\footnote{The class of bounded-arboricity graphs contains an infinite nested collection of subfamilies---the family of $(p,r)$-admissible graphs, $r \in \mathbb{N}$---each of which contains all minor-free graphs.} providing a rich landscape for studying properties in graphs with unbounded degrees. Indeed, much of the recent study on graph property testing and local algorithms has focused on this family~\cite{ERR19, ERS20,TrianglesArboricity,EMR22,ERR22, edenCycles2024}.

\subsection{Our results}
In this work, we provide a complete structural characterization of the one-sided error testable properties for bounded-degeneracy graphs under the random neighbor oracle model. 
The logical progression of our proof proceeds as follows:
\begin{itemize}
\item \textbf{Reduction to a forbidden family:} It is standard to note that any one-sided error testable graph property is equivalent to being $\calH$-free for a finite family of forbidden subgraphs $\calH$. Hence, we only consider testing $\calH$-freeness, for a finite fixed set of forbidden graphs $\calH$. 
    \item \textbf{Characterization of $H$-freeness:} We start with $\calH = \{H\}$, namely a unique forbidden graph $H$, and characterize those $H$'s for which $H$-freeness is one-sided error testable in the random neighbor model. The characterization is essentially about the connectedness of $H$. 

\end{itemize}
While the testability of each individual $H\in \calH$ is a sufficient condition for the testability of the family $\calH$, \textbf{this condition  is not  necessary}. As a motivating example (discussed further in Section~\ref{sec:family}), consider the property of forbidding  the $4$-cycle $C_4$, and the star  with $10$ leaves $\mathrm{ST}_{10}$. While $C_4$-freeness is non-testable in general $p$-degenerate graphs~\cite{edenCycles2024}, the property of  $\calH=\{C_4,\mathrm{ST}_{10}\}$-freeness  is testable. The inclusion of the star in the forbidden set ensures that any graph satisfying the property has a maximum degree of at most $9$. In this bounded-degree regime, the structural ``blind spots" that would normally hide $C_4$ are prohibited, rendering the collective property to be testable.

The fact that individual testability is not necessary forces us to look deeper at the interaction between the forbidden subgraphs and the specific input graphs that attempt to hide them. This leads to our next two core results:

\begin{itemize}
    \item \textbf{Sufficient conditions  of testable input instances:} We further characterize some specific classes of $p$-degenerate graphs $\calG$ for which $H$-freeness is testable, even when $H$ is not generally testable. 
    \item \textbf{Characterization of testable families of forbidden graphs:} Testing $\calH$-freeness may be possible even if $\calH$ contains a graph $H$ that is \emph{not testable} in general. We show that this occurs because, if an input graph $G$ is far from being $\calH$-free, it must fall into one of two scenarios:
    (i) $G$ is far from $H'$-freeness for a graph $H' \in \calH$, and for which $H'$-freeness is testable (for an easy and obvious reason).
    (ii) $G$ is far from being $H$-free for $H \in \calH$ for which $H$-freeness is {\em not testable} in general, but, either $H$-freeness is testable for the particular graph $G$, or --- the fact that $G$ has many copies of $H$ implies that it also has many copies of some $H' \in \calH$ for which $H'$-freeness is testable.

 Using the sufficient condition above of testable input instances, we provide a final characterization for general families $\calH$. We show that a family is testable if every configuration that is ``hard" to test for one member $H_i\in \calH$ is effectively prohibited or 'exposed' by the presence of another member $H_j\in \calH$ for which $H_j$-freeness is testable.

\end{itemize}

\subsection{Technical Overview}
We begin with the standard observation that any tester in the random neighbor oracle model can be simplified to a canonical tester. This tester selects a set of initial nodes at random and initiates a local graph exploration (of constant depth) from each. Its decision to accept or reject is based solely on the subgraph discovered during this exploration (see Section~\ref{sec:caninical}). Crucially, a tester in this model only knows for sure that there exist edges between a queried vertex and a random vertex returned by the oracle; 
it cannot verify the non-existence of an edge in this model, and hence a one-sided error tester can only reject if it discovers a set of edges that form a forbidden subgraph. Therefore,  a property is one-sided error testable in this model only if it is $\calH$-free for some fixed size set $\calH$ of forbidden graphs. 

Next, we start (a main part of the paper) with the characterization of properties defined by a single forbidden subgraph $H$. So, for a fixed graph $H$, we say that $H$ is testable if $H$-freeness is testable for every bounded degeneracy graph, and otherwise we say that $H$ is not testable.

A first and relatively simple observation is that $H$ is testable if and only if each 2-connected block (maximal $2$-connected subgraph) of it is testable.  This reduces the general characterization task to that of  $2$-connected $H$'s. We prove the following theorem (formally referred to as Theorem~\ref{thm:2}).
\begin{theorem}
  $H$-freeness is one-sided error testable for a
  $2$-connected graph $H$ if and only if for every independent set $S \subseteq  V(H),$ the subgraph induced by  $V(H)\setminus S$ is connected.
\end{theorem}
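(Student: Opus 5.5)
The proof has two directions. For necessity, I would build a lower-bound family from an independent set $S\subseteq V(H)$ such that $H-S$ is disconnected. For sufficiency, I would show that a canonical tester finds copies of $H$ by exploring from random vertices.

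\textbf{Necessity.} Suppose $S$ is independent and $H[V(H)\setminus S]$ splits into components $C_1,\dots,C_k$ with $k\ge 2$. Since $H$ is $2$-connected, every $C_i$ has neighbors in $S$. I would build a $p$-degenerate graph $G$ by blowing up each vertex $s\in S$ into a single high-degree hub. Each component $C_i$ becomes many disjoint copies, and each copy is attached to the appropriate hubs. To obtain $\Omega(\eps n)$ edge-disjoint copies of $H$, I would take $\Theta(\sqrt n)$ hubs per vertex of $S$. I would then use a design-like pairing, in the spirit of the $C_4$ lower bound of \cite{edenCycles2024}: the copies of $C_1$ and $C_2$ attached to the same tuple of hubs combine into an $H$. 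The pairing must be arranged so that a copy of $H$ exists only through such a combination.

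Degeneracy stays bounded because $S$ is independent. Every non-hub vertex has at most $|S|$ hub neighbors plus $O(1)$ neighbors inside its component copy, so ordering the hubs last gives degeneracy $O(|V(H)|)$. To see that $G$ is $\eps$-far, I would exhibit $\Omega(n)$ edge-disjoint copies of $H$.

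For the indistinguishability step, I would compare $G$ with an $H$-free graph $G'$ in which the components are attached to independently random hubs. A constant-depth random-neighbor exploration reaches a hub only through a random neighbor query, and it then lands on a uniformly random neighbor of that hub. With $o(1)$ probability it returns to a component copy that is matched with a previously seen one. Hence the distributions of explored subgraphs in $G$ and $G'$ are $o(1)$-close. A one-sided tester never sees a full copy of $H$ in $G$ with constant probability, so it must accept.

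\textbf{Sufficiency.} Assume every removal of an independent set leaves $H$ connected. Let $G$ be $\eps$-far from $H$-free, and fix $\Omega(\eps n)$ edge-disjoint copies. Using $p$-degeneracy, I would call a vertex heavy if its degree exceeds a large constant $D(\eps,p,H)$. Only $O(pn/D)$ edges are incident to light vertices of large total degree, so most copies either avoid heavy vertices or use them in a controlled way. In a copy, the heavy vertices $S'$ need not be independent. Here I would use degeneracy: by a counting argument over orientations, a constant fraction of the copies can be chosen so that heavy vertices are adjacent within a copy only along edges reachable from the light side.

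The tester samples a random vertex in a light part and does a BFS of depth $|V(H)|$, querying each vertex $\mathrm{poly}(D)$ times. Light vertices are fully explored with high probability. Each heavy vertex is entered from a light neighbor, and connectivity of $H-S$ for independent $S$ guarantees that all light parts are reached without passing through a hub. So it suffices to hit the copy's light vertices and to discover the edges from them to heavy vertices, and these edges are seen from the light endpoint. Edges between two heavy vertices of the copy, which cannot be seen this way, are the main obstacle. I would handle them by iterating: when $S'$ is not independent, contract or choose copies so that the problematic heavy–heavy edges are also found. For example, a random neighbor of a heavy vertex $v$ lands with probability $\Omega(1)$ on a vertex whose own light exploration closes the copy, by averaging over the copies through $v$ that are edge-disjoint. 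Making this averaging rigorous, possibly through a potential argument on the degeneracy ordering, is where I expect the bulk of the work to lie.
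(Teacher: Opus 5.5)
Your necessity direction follows the paper's idea: hub sets of size $\Theta(\sqrt n)$, $\Theta(n)$ edge-disjoint copies glued along hub tuples, and a birthday bound of $O(q^2/\sqrt n)$ on ever seeing two matched components. The paper avoids any design. It lets only two roles of a \emph{minimal} independent separator vary over hub sets $A,C$, and fixes the other $|S|-2$ roles at single vertices; minimality guarantees that both varying roles touch every component. One correction: a graph whose components are attached to independently random hub tuples is in general not $H$-free. Already for $|S|=2$, attaching $\Theta(n)$ copies of each component to random pairs from $\Theta(\sqrt n)\times\Theta(\sqrt n)$ hubs creates $\Theta(n)$ matched pairs. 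You do not need such a $G'$ at all. For a one-sided tester it suffices that the explored subgraph of $G$ contains a copy of $H$ with probability $o(1)$, and that is exactly your birthday bound.

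The genuine gap is in sufficiency, at the point you flag yourself: copies containing an edge between two heavy vertices. Your proposed remedy cannot work. An edge $uv$ with $\deg(u),\deg(v)\ge D$ is revealed only if a query at $u$ returns $v$ or vice versa, which has probability at most $1/D$ per query. No averaging over the copies through $v$ makes such edges visible with constant probability. Moreover, if the heavy set $S'$ of a copy is not independent, the hypothesis says nothing about connectivity of $H-S'$, so even the light part of the copy may be unreachable.

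The missing idea is that these copies can simply be discarded. By degeneracy $|\mathrm{Heavy}_D|\le 2pn/D$, so $G[\mathrm{Heavy}_D]$ has at most $2p^2n/D$ edges. For $D\ge 4p^2/\eps$, deleting them leaves a graph that is still $\eps/2$-far from $H$-free (Lemma~\ref{cl:bipar}). Equivalently, among your edge-disjoint copies at most $2p^2n/D$ use a heavy--heavy edge. In every remaining copy the heavy vertices form an independent set, so the hypothesis applies verbatim:
\begin{itemize}
\item the light vertices of the copy induce a connected subgraph;
\item every edge of the copy has a light endpoint;
\item each light vertex lies in fewer than $D$ edge-disjoint copies, so an $\Omega(\eps/(kpD))$ fraction of all vertices are light vertices of such copies.
\end{itemize}
A depth-$k$ \textsf{Bounded-BFS} from such a vertex, with $\mathrm{poly}(D)$ samples per vertex, sees all neighbors of every light vertex it reaches. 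It therefore recovers the whole copy, which is also a copy in $G$. No orientation counting, iteration or potential argument is needed.
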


In order to better understand our necessary condition, it is illuminating to consider the special case where $H$ is a labeled $C_4=\{a,b,c,d\}$ with separator $S=\{a,c\}$ that is an independent set. The reason why $H$ is not testable follows from the following distribution on  ``hard-to-test" graphs. We construct  a $2$-degenerate graph $G$ on $2n + 2 \sqrt{n}$ vertices as follows:  $A$ and $C$ are two disjoint sets of $\sqrt{n}$ vertices of high degree, called ``hubs". In addition $L_1,L_2$ are two $n$-sized disjoint sets of degree $2$ vertices.  Each vertex $v \in L_1$ is connected to a unique pair of \emph{hubs} $(i,j) \in A \times C$. Similarly, the same is done for each $u \in L_2$.  Hence, every pair $(i,j)\in A\times C$ is connected by a $2$-path through $L_1$ and another $2$-path through $L_2$ forming a copy of $C_4$. Altogether, the graph $G$ is  $2$-degenerate (with the order of $A,C$ first, followed by $L_1 \cup L_2$). Further, $G$ contains $n$ edge-disjoint copies of $C_4$ (one for each unique pair of hubs). This ensures that the graph is $\Omega(1)$-far from being $C_4$-free. 

The difficulty of finding a $C_4$ where the vertices are randomly permuted is due to the fact that in order to find a $C_4$-copy one needs to find a specific pair $(i,j) \in A \times C$, and the two corresponding $u \in L_1, ~ v \in L_2$ that are a ``matched" pair, each connected to the same $i,j$. 
Because the oracle returns a random neighbor, a tester querying a hub vertex $i$ is essentially pulling a ``random ticket" for a path to some hub $j$.
The core of the lower bound lies in the independence of $L_1$ and $L_2$ and the large degree of the vertices of $A$ and $C$. The probability of finding such a match $u,v$ is vanishingly small (with respect to $n$). Specifically, after $\ell$ queries, we show that the probability of finding a $C_4$ is only $O(\ell^2/\sqrt{n})$ - this is essentially by the birthday-paradox argument. Hence,  a lower bound of $\Omega(n^{1/4})$ is obtained for the number of queries. This argument is generalized to every $H$ for which there is a separating independent set $S$ as stated in the theorem.

To prove the sufficient condition for testability, we note that if the graph $G$ has degree bound $h$ for any constant $h$, then $H$-freeness is testable (for any $H$)\footnote{this is quite standard - if $G$ is bounded degree, then sampling a vertex $v$, $v$ will be in a $H$-appearance w.h.p. Then, running a depth $|H|$ BFS starting from $v$ will find this $H$-appearance. }. Hence, we define for an input graph $G$, the set of heavy vertices; these are vertices of degree higher than some suitable constant $h$. We then employ a two-stage ``cleaning" procedure. Since a one-sided tester only rejects upon finding an explicit copy of $H$, we can conceptually ``remove" edges that are difficult to sample without significantly changing the graph's distance from being $H$-free (if the original graph $G$ is $\eps$-far from $H$-freeness, the resulting graph remains $\Omega(\eps)$-far).

\begin{itemize}
    \item \textbf{Degree-based cleaning:} We first remove all edges between high-degree vertices (``heavy" nodes). 
    \item \textbf{Density cleaning:} We further refine the graph to ensure that any heavy vertex that participates in an $H$-copy actually participates in a large number of such copies. This step ensures that the random neighbor oracle has a non-negligible probability of sampling a ``useful" neighbor from a random heavy vertex.
\end{itemize}
The above conceptual process allows us to prove that if a $2$-connected $H$ has no independent sets that are separating (unlike a $C_4$), and $G$ is far from $H$-freeness, then at least one of the following is true: (a) $G$ contains many $H$-copies in which all of its vertices are non-heavy; in this case, finding an $H$-copy is easy, as in a bounded degree graph; (b) $G$ contains many $H$-copies in which there are some heavy vertices, but that are not separating --- then, again as in the bounded degree case, an $H$ copy can be found by making a suitable BFS induced only on non-heavy vertices. Such a search will find an $H$ copy, as  $H$ is not separated by the heavy vertices that it might contain. 

Next, we move to the characterization of when $\calH$-freeness is testable, for a family of forbidden graphs $\calH$ (the other main part of the paper). An  easy observation is that if all members in $\calH$ are testable, then $\calH$-freeness is testable. However, the converse is not generally correct. 
The non-testability of a graph $H_1\in \calH$ does not necessarily ``doom" the testability of a family $\calH$ containing it. The inclusion of additional forbidden subgraphs $\{H_2,\ldots, H_\ell\}$ can ``rescue" the property by fundamentally altering the structural regime in which the tester operates. As we saw in the lower bound construction, hiding $C_4$'s requires the existence of high-degree vertices. Let $\mathrm{ST}_{10}$ be the star with $10$ leaves. If we consider the example discussed above where $\calH=\{C_4,\mathrm{ST}_{10}\}$, the canonical tester rejects in the case of the lower bound construction described, because with high probability it discovers $\mathrm{ST}_{10}$. The crux of our work is to generalize this to any family $\calH$.

The above example turns out to be just a simple example that does not exhibit the real complexity of the problem.
The full characterization of when $\calH$-freeness is testable is stated in Theorem~\ref{thm:generalized-characterization}. It is not stated right here as it requires some preliminary technical preparations. However, the logic that we apply is the following.

  Let $H_1 \in \calH$ be non-testable. As we construct some specific graphs that are far from $H_1$-free and are hard to test for $H_1$-freeness, to be able to test $\calH$-freeness it must be that for some $H \in \calH$ that is testable, the specific hard to test graphs that we design must have many copies of $H$. This already places some restrictions on the possible testable members of $\calH$.  
This motivates our definition of the \emph{cactus-representation} relative to a non-testable $H_1$ and its separating independent set $S$ (Definition~\ref{def:cactus}).  Intuitively, a cactus is a ``thin" structure where the components are subgraphs of $H_1$ (petals) that are attached to each other at a single articulation point. 

We first identify a sufficient condition under which for a graph $G$ that is far from $H_1$-free, $H_1$-appearance can be found in $O(1)$-queries. Conversely, for cases where $H_1$-freeness is not generally testable, we demonstrate that $G$ must contain many copies of some testable $H\in \calH$, provided that $H$ admits the appropriate cactus representation.


Ultimately, we prove that a family $\calH$ is testable if and only if for every non-testable $H_i$, there exists a cactus $H_j\in \calH$  that is structurally forced to be exposed. In this regime, a $p$-degenerate graph cannot be far from $\calH$-freeness without creating a detectable trail of these cactus structures, which the random neighbor oracle will discover with high probability.

\vspace{-0.3cm}
\subsection{Related work}
Testing properties of sparse graph has been studied extensively under different models of access. In the \emph{general graph} model~\cite{parnas2002testing,kaufman2004tight,alonTriangle2008}, the algorithm can only use queries of these three types: degree query where for an input $v$ the oracle returns $\deg(v)$, neighbor query where for an input $v$ and $i\in[n]$ it returns the $i$-th neighbor (or a special character $\bot$ if $\deg(v)<i$) and pair query $(u,v)$ on which the return is whether the edge $(u,v)$ exists. Many natural graph properties have been proven to be non-testable in this model, and some exhibit a huge gap between one-sided error and two-sided error algorithms~\cite{goldreichBndDeg1997,alonTriangle2008, kaufman2004tight, newman2011every,edenCycles2024}. 
We still seem to be far from having a characterization of the testable graph properties in general. This is also true  if we restrict the testers to be one-sided error.

A less general model (in terms of relevant input graphs) that has received significant attention  is the \emph{bounded degree} model~\cite{goldreichBndDeg1997}. In this model, we have the additional restriction that the degree of the graph is bounded by a predefined constant $d$. The oracle access in this model is as before, although it is easy to observe that allowing $O(1)$-overhead, the weakest random-neighbor query can simulate every query type as above. As for the general model, we seem to be very far from understanding which graph properties are testable. One reason is that this model contains expander graphs, which are notoriously hard to test for many properties, as the local view around a vertex  may be just a tree.  A relatively general result characterizing the monotone and hereditary one-sided error testable properties appears in \cite{IKN20}. Another general result for this model is when the input graph is restricted to be planar. \cite{CSS09} show that in this case any hereditary graph property is testable. This approach can be generalized to any class of graphs that can be partitioned into constant-size components by removing a small number of  edges of the graph. Graphs satisfying this property are called \emph{hyperfinite}, and they include all bounded-degree minor-closed graph families. A sequence of works~\cite{benjamini2008every,hassidim2009local,newman2011every} culminated in the result that all hyperfinite properties are testable. Using similar methods, ~\cite{ito2015every} shows that every property of a certain class of scale-free multigraphs is testable, and recently some more general results were obtained for simpler subclasses of planar graphs: for trees in~\cite{kusumoto2014testing} and outerplanar graphs in~\cite{babu2016every}. Some other results for specific properties can be found in~\cite{goldreichBndDeg1997,ADPR03,nguyen2008constant,alonTriangle2008}, and others. 

Other areas of sparse graph properties that were studied extensively, as for property testing per-se, and in the general local-algorithms and approximate counting is when the input graphs are restricted to be locally sparse.  As explained above, one such very general family of graphs is the $O(1)$-degenerate graphs, or equivalently, bounded arboricity graphs. However, as noted above, characterization of the testable graph properties is not known for these families of graphs (under the general query-model), in any of the different models, and even for one-sided error testing.


  The other relevant results, in view of the non-testability of $C_4$-freeness for  general $O(1)$-degenerate graphs~\cite{edenCycles2024}, but contrasted with the fact that any $H$-freeness is testable for  minor free graphs~\cite{CS19}, is that of~\cite{H-freeness,AwofesoGLLR25,awofeso2025sufficient,KMPV26}. They consider a hierarchy of sparse graph families by using the measure of $r$-admissibility~\cite{NdM12}. The family of $O(1)$-degenerate graphs is exactly that of $1$-admissible graphs. The parameter of $r$-admissibility partitions the set of $O(1)$-degenerate graphs into an infinite nested families of graphs, each of which contains minor-free graphs. As the family becomes smaller (being $r$-admissible with larger $r$) $H$-freeness can be tested for larger collection of forbidden subgraphs $H$. As stated before, one of the main results in this draft is the characterization of these forbidden graphs $H$ for which $H$-freeness is testable for the $1$-admissible graphs. 
  




\section{Preliminaries and Notations}
We denote $[k]=\{1,\ldots,k\}$. We use boldface letters such as $\bv$ to denote random variables. For a graph $G=(V,E)$ and $v\in V$, we use $\deg(v)$ to denote the degree of $v$, and $N(v)$ as the set of neighbors of $v$. For $V'\subseteq V$ we use $G[V']$ to denote the subgraph of $G$ induced by $V'$.

\begin{definition}[\emph{2-block}] Let $H=(V(H),E(H))$ be a connected graph. A subgraph $H'$ of $H$ is called a \emph{2-block} of $H$ if it is a maximal $2$-connected subgraph of $H$.
\end{definition}
The following is well known, 
\begin{fact}[block decomposition] \label{lem:2-block-decomp}For every connected graph $H$, there exists a \emph{2-block decomposition} $\calB=(H_1,\ldots,H_\ell)$ such that $V(H)=\bigcup_{j\in [\ell]}V(H_j)$, every $H_j$ is a 2-block and for every $i\neq j\in[\ell]$, $|V(H_i)\cap V(H_j)|\leq 1$.   $v \in V(H_i)\cap V(H_j)$ is a separation point, also called \emph{articulation point}.
\end{fact}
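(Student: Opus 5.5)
The plan is to build the decomposition from a single relation on edges. Call two edges $e,f$ of $H$ \emph{equivalent} if $e=f$ or some cycle of $H$ contains both. The blocks will be the subgraphs spanned by the equivalence classes. To make the statement correct for trees, we regard a bridge (an edge on no cycle) as a $2$-block, namely a copy of $K_2$. We may assume $H$ has at least one edge; an isolated vertex is its own single block.

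The first step is to show that this relation is transitive. Suppose $e,f$ lie on a cycle $C_1$ and $f,g$ lie on a cycle $C_2$. Start at an endpoint of $g$ and walk along $C_2$ in each of the two directions until the walk first meets $V(C_1)$. This produces a path $P$ through $g$ whose two endpoints $x\neq y$ lie on $C_1$, and whose interior avoids $C_1$. The two arcs of $C_1$ between $x$ and $y$ cover $C_1$, so one of them contains $e$. Joining that arc with $P$ gives a cycle through both $e$ and $g$. This closing of the cycle is the only delicate point. The degenerate situation, where $g$ already lies on $C_1$, is trivial, since then $C_1$ itself contains $e$ and $g$.

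Next, let $H_1,\ldots,H_\ell$ be the subgraphs spanned by the equivalence classes. Each $H_j$ is $2$-connected (or a $K_2$), because any two of its edges lie on a common cycle. Each $H_j$ is also maximal. Indeed, any $2$-connected subgraph $K$ of $H$ has every pair of its edges on a common cycle, by Menger's theorem, so all edges of $K$ fall into a single class. It follows that every $2$-block of $H$ is exactly one of the $H_j$. Since every edge lies in some class and $H$ is connected, the union of the vertex sets $V(H_j)$ is $V(H)$.

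For the intersection property, suppose $x\neq y$ are both in $V(H_i)\cap V(H_j)$ with $i\neq j$. Then $H_i\cup H_j$ is $2$-connected. To see this, delete any single vertex $v$. Each of $H_i-v$ and $H_j-v$ stays connected, since each $H_t$ is $2$-connected or a $K_2$ contained in the other. The two pieces still share at least one of $x,y$, so their union is connected. This contradicts the maximality of $H_i$. Hence $|V(H_i)\cap V(H_j)|\le 1$.

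Finally, let $v$ be a common vertex of $H_i$ and $H_j$, and suppose $H-v$ were connected. Take a neighbour $a$ of $v$ in $H_i$ and a neighbour $b$ of $v$ in $H_j$, and let $Q$ be an $a$–$b$ path in $H-v$. Then $Q$ together with the edges $va$ and $vb$ forms a cycle through an edge of $H_i$ and an edge of $H_j$, which contradicts $i\neq j$. Therefore $v$ is an articulation point.
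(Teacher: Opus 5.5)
Your proof is correct. The paper gives no proof of this fact (it is quoted as well known), so there is no argument in the paper to compare against. What you have written is the standard textbook proof. You take the blocks to be the classes of the relation ``$e=f$ or $e,f$ lie on a common cycle'' on $E(H)$, and you get maximality from the Menger/Whitney fact that any two edges of a $2$-connected graph lie on a common cycle.

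What your version adds over the bare citation is that it forces the definitions to be made precise. Under the paper's literal definition of a $2$-block as a maximal $2$-connected subgraph, bridge edges lie in no block. The covering condition $V(H)=\bigcup_j V(H_j)$ then fails as soon as some vertex lies on no cycle, for instance in a tree or at a pendant edge. So your convention of counting each bridge as a $K_2$-block is genuinely needed. It is also harmless for the paper's later use, since $K_2$-freeness is trivially testable.

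Three small expository points:
\begin{itemize}
\item In the transitivity step you should extend $g$ along $C_2$ from \emph{both} of its endpoints, stopping at the first vertex of $V(C_1)$ (possibly an endpoint of $g$ itself), rather than walking from a single endpoint.
\item In the same step you should say why $x\neq y$: $C_2$ meets $V(C_1)$ in at least the two endpoints of $f$.
\item When you conclude that each class spans a $2$-connected graph, note that the common cycle through two edges of a class lies entirely inside that class, since all edges of a cycle are pairwise equivalent.
\end{itemize}
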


%
\begin{definition}[components]
    Let $G=(V,E)$ be a graph,  $S \subseteq V$ a separating set, and $C=(V',E')$ a component of $G \setminus S$.  We denote by $C(S)$ the connected induced subgraph of $G[V' \cup S]$ that contains $C$. Note that $C(S)$ may not contain all $S$-vertices. We call $C(S)$ an $S$-component.
\end{definition}





\begin{definition}[$p$-degenerate] For $p\in \mathbb{N}$, a graph $G=(V,E)$ is \emph{$p$-degenerate} if every non-empty subgraph of $G$, contains a vertex of degree at most $p$. In particular, any subgraph of size $k$ of $G$ can have at most $kp$ edges. We use $\calA_p$ to denote the family of $p$-degenerate graphs.\footnote{The family $\calA_p$ is also known as $(p,1)$-admissible graphs.}
\end{definition}

\begin{definition}[$H$-appearance]
A subgraph $G'$ of  $G$ is an $H$\emph{-appearance} if $G'$  is isomorphic to $H$.    
\end{definition}

There are several equivalent and related notions to that of being $p$-degenerate. Most importantly, a graph is $p$-degenerate if and only if its arboricity is bounded. Specifically, the arboricity $\alpha(G)$ of a $p$-degenerate graph satisfies $\lceil(p+1)/2\rceil\le\alpha(G)\le p$. Furthermore, $p$-degeneracy is a specific case of the admissibility hierarchy~\cite{NdM12}. In this context, $p$-degenerate graphs are exactly those that are $p$-bounded $1$-admissible (usually denoted by $(p,1)$-admissible).

\noindent
\vspace{0.2cm}
{\bf Property testing and forbidden subgraphs freeness.}\\
A \emph{graph property} $\calP$ is a family of graphs closed under isomorphism. 
For $\eps\in[0,1]$ we say that a graph $G=(V,E)$ is \emph{$\eps$-far from} $\calP$ if one has to modify at most $\eps|V|$ edges from $G$ to obtain a graph satisfying $\calP$, and otherwise we say that it is \emph{$\eps$-close}\footnote{While the distance is typically defined by the fraction of edge modifications required to satisfy property $\calP$, the linear edge bound of $p$ degenerate graphs ($|E|\le p|V|$) ensures that definitions based on the number of vertices versus the number of edges are equivalent up to a factor of $p$.} to $\calP$.

\begin{definition} A $q$-query \emph{property tester} for a graph property $\calP$ is a randomized algorithm that receives as input parameters $n\in\mathbb{N}$, $\eps>0$ and random neighbor oracle access to a graph $G$ with $n$ vertices. The algorithm makes at most $q$ random neighbor queries to the input and satisfies the following. If the graph is $\eps$-far from $\calP$, then the algorithm rejects with probability at least $2/3$. If $G\in\calP$, then the algorithm accepts with probability at least $2/3$. The tester has {\em one-sided error} if it accepts every $G \in \calP$ with probability $1$, and otherwise it has a \emph{two-sided error}.
\end{definition}

Our focus in this work is on {one-sided error} testers in the {random-neighbor} model. In this model, the oracle access to the graph is by the {random neighbor oracle}, where an algorithm may query any vertex $v\in V(G)$ and the oracle returns a vertex $\bu$ chosen uniformly at random from the set of neighbors of $v$ in $G$. Thus, for non-bounded degree graphs, a degree of a vertex, as well as the existence or non-existence of an edge between a pair of vertices cannot be determined in constant time.  An $O(1)$-query tester can only discover a constant size subgraph of the input graphs and form its decision only on this subgraph. A one-sided error tester in the model can thus reject only on discovering a forbidden subgraph, and hence can accept only downwards monotone properties.  

We say that a property $\calP$ is \emph{testable} if it has a tester with query complexity which is independent on the graph size $n$, but may depend  on $\eps$ and possibly some structural parameters of the input graphs (such as the degeneracy $p$). We sometimes use the notion of an $\eps$-test, referring to a (non-uniform) property tester in which the proximity parameter $\eps$ is fixed in advance, as opposed to the standard uniform setting where $\eps$ is given as input to the tester.

{\bf Canonical Testers} \label{sec:caninical}
The following result from~\cite{CS19} provides a canonical way of describing any tester in the random neighbor oracle model.

To analyze the local structure around a vertex $v$, we use a specialized \textsf{Bounded-BFS} subroutine. The procedure explores the graph up to a fixed depth $t$ using a random-neighbor oracle. To remain query-efficient, the algorithm uses a fixed sampling parameter $s$: at each vertex encountered, the search probes exactly $s$ neighbors chosen uniformly at random.
The value of $s$ is chosen specifically to ensure that if a vertex is ``light" (its degree is below a certain threshold $h$), the algorithm will have sampled all of its neighbors with high probability. This allows the search to behave like a standard BFS for low-degree regions while providing a representative sparse sample of higher-degree neighborhoods.
A formal description of the \textsf{Bounded-BFS} procedure and the derivation of the sampling parameter $s$ are provided in Appendix~\ref{sec:BFS}.

\begin{theorem} Let $\calP=(\calP_n)_{n\in \mathbb{N}}$ be a graph property that can be tested in the random neighbor oracle model with $q$ queries and error probability at most $1/3$. Then, for every $\eps>0$, there exists $q'=\Theta(q)$ and a sequence $\calQ=(\calQ_n)_{n\in \mathbb{N}}$ such that for every $n\in \mathbb{N}$, $\calQ_n$ is a set of bounded $q$-size graphs.
The property $\calP_n$ (of $n$ vertex graphs) can be tested with error probability at most $1/3$ by the following \emph{canonical tester} that uses $q^{O(q)}$ queries: 

$(1)$ Sample a multi-set $\bS$ of $q'$ vertices uniformly at random.
    $(2)$ For each sampled vertex $\bv\in \bS$ run \textsf{Bounded-BFS}$(G,v,q',q')$ and obtain the explored subgraph $G'_{\bv}$.
    $(3)$ If the union of explored $\{G'_{\bv}\}_{\bv\in \bS}$ contains an element $Q\in \calQ_n$, then the tester rejects and otherwise it accepts. 
    
    Additionally, if $\calP$ can be tested with one sided error in the random neighbor oracle model, then the canonical tester for $\calP$ also has one-sided error. We refer to such a tester as a \emph{$q'$-canonical tester}.
\end{theorem}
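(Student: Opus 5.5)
The plan is to simulate an arbitrary $q$-query tester $T$ by the canonical tester, following the approach of Czumaj and Sohler~\cite{CS19}. First, I would normalize $T$ so that it never uses vertex identities beyond what it has seen. Since the tester may be run on a uniformly random relabeling of $G$, the first step is to symmetrize. Run $T$ on $\pi(G)$ for a uniformly random permutation $\pi$. By isomorphism-invariance of $\calP$, the error guarantees are unchanged, and one-sidedness is preserved because $\pi(G)\in\calP$ whenever $G\in\calP$.

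After symmetrization, every query of $T$ is one of two kinds: a query to a fresh vertex, which is uniformly random among the unseen vertices, or a query to an already-discovered vertex, which returns a uniformly random neighbor of it. Because $T$'s view is then distributed exactly as in a process that depends only on the labeled-up-to-isomorphism explored subgraph, its decision can be taken to be a (randomized) function of the isomorphism type of the discovered subgraph, with multiplicities of repeated answers. This type has at most $q$ edges.

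The second step is a coupling. I would show that the exploration of $T$ can be embedded into the exploration of the canonical procedure. Sample $q'=\Theta(q)$ roots; these cover the at most $q$ fresh-vertex queries up to a birthday-type collision correction, which is negligible when $n$ is large and can be hardcoded for small $n$. From each root, run \textsf{Bounded-BFS} of depth $q'$ with $q'$ samples per visited vertex. Any adaptive sequence of at most $q$ random-neighbor queries starting from these roots reaches depth at most $q$. It also uses at most $q$ samples at each vertex. So it can be read off from independent pre-sampled neighbor lists, namely the first $q$ samples at each vertex in the BFS tree. This gives a coupling in which $T$'s transcript is a deterministic function of the BFS output and $T$'s internal coins. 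Counting the vertices in this tree gives the $q^{O(q)}$ query bound.

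The third step defines $\calQ_n$. For each possible explored union, decide to reject exactly when $T$ rejects on it with probability above some threshold after averaging over the internal randomness and the embedding. Then $\calQ_n$ is the set of isomorphism types on which this happens. The obstacle is that a deterministic rejection rule loses the error guarantee: the canonical tester only has access to the union of the explorations rather than $T$'s exact view. To handle this, I would first amplify $T$ to error $1/10$ by repetition, with a constant factor in $q$. I would then apply Markov's inequality on the rejection probability conditioned on the explored graph to get error at most $1/3$.

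The part I expect to be most delicate is one-sidedness. If $T$ has one-sided error, it never rejects any view consistent with some $G\in\calP_n$. So every view with positive rejection probability is a subgraph not contained in any member of $\calP_n$. Taking $\calQ_n$ to be exactly these subgraphs, the canonical tester rejects only on witnessing one, and therefore accepts every $G\in\calP_n$ with probability $1$.
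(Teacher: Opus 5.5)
The paper gives no proof of this theorem. It is quoted from \cite{CS19}, and your outline is the standard argument behind that result: random relabeling, coupling $T$'s adaptive exploration into the \textsf{Bounded-BFS} trees via pre-sampled neighbor lists, amplification plus Markov to get a deterministic rejection rule, and, for one-sided error, letting $\calQ_n$ consist of explored graphs contained in no member of $\calP_n$. Two details need care: the two-sided threshold must be applied to the full exploration record (roots and each vertex's ordered sample list), since conditioned on the bare union graph the rejection probability can depend on $G$; and ``hardcoding'' small $n$ does not work, because $q'$ is fixed and only $\calQ_n$ may vary with $n$, so when $n$ is not much larger than $q$ the $\Theta(q)$ with-replacement roots cannot emulate $T$'s without-replacement fresh-vertex queries---you must restrict to $n\ge n_0(q)$, where the $O(q^2/n)$ collision error is absorbed by the amplification slack, or work in a model where new vertices are obtained only by uniform sampling.
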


The above shows that, without loss of generality, one can assume that any testable property can be tested by a canonical tester with constant (independent of the input size) query complexity. In particular, this implies that for one-sided error testers, the test can reject only upon discovering a forbidden subgraph. This implies the following definitions. 

We say that a graph $G$ is \emph{$H$-free} if $G$ does not contain $H$ as a subgraph and denote by $\calP_H$ the family of all such graphs. Correspondingly, $G$ is said to be $\eps$-far from being $H$-free (alternatively, $H$-freeness) if more than $\eps|V(G)|$ edges must be deleted from $G$ in order to make it $H$-free.

These definitions extend naturally to families of forbidden graphs. If $\calH$ is a {\em finite} family of finite graphs, $G$ is $\calH$-free if it is $H$-free for every $H\in \calH$. We denote by $\calP_\calH$ the property of all graphs that are $\calH$-free, and define the distance analogously. We note that $\calP_\calH$ is a downwards monotone graph property for any collection $\calH$. Moreover, every downwards monotone property can be described by $\calP_\calH$ for some (possibly infinite) collection of forbidden graphs $\calH$.

It is standard (see also \cite{CS19,czumaj2020testable}) that only properties of the form 
$\mathcal{P}_{\mathcal{H}}$ for
a  finite collection of bounded size forbidden graphs $\mathcal{H}$ are  
 one-sided error testable 
 in the random-neighbor model.  
 
 Our goal is to characterize these  families $\calP_{\calH}$ for which $\calP_{\calH}$-freeness is one-sided
error testable. As a one-sided error test rejects only if it finds an $H$-appearance in the input graph $G$, for $H \in \calP_{\calH}$, we may assume in what follows that the input graph $G$ is always $\epsilon$-far from being $\calP_{\calH}$-free, and will characterize these family for which such a graph $H \in \calP_{\calH}$ can be found with $\Omega(1)$ probability.

The following standard fact relates $G$ being $\epsilon$-far from being $\mathcal{P}_{\mathcal{H}}$-free in the aforementioned Hamming metric to the presence of a linear number of edge-disjoint forbidden subgraphs in $G$.

\begin{lemma}
\label{lemma:edge-disjoint-app}
 Let $\mathcal{H}$ be a $r$-collection of $k$-size graphs. If graph $G$ is $\epsilon$-far from $\mathcal{H}$-free 
 then it has an edge-disjoint collection $\mathcal{H} (G)$, 
 of size at least $\eps n/krp$ of $H$-appearances, for some 
 $H \in {\calH}$ 
 (note that if $k\ge p$, then we have at least $\eps n/rk^2$ such appearances).
\end{lemma}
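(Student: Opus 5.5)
The approach is a greedy construction of an edge-disjoint family, followed by pigeonhole over the members of $\calH$. If $G$ is $\eps$-far from $\calH$-free, then any set $F$ of fewer than $\eps n$ edges leaves $G\setminus F$ still containing an appearance of some $H\in\calH$.

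\textbf{Step 1: greedy packing.} Start with $F=\emptyset$ and an empty collection $\calC$, and repeat the following while $|F|<\eps n$. The graph $G\setminus F$ is not $\calH$-free, so pick an $H'$-appearance in it for some $H'\in\calH$. Add this appearance to $\calC$, and add all its edges to $F$. Each appearance has at most $\binom{k}{2}$ edges. In fact, since appearances are subgraphs of a $p$-degenerate graph, each has at most $kp$ edges, which is the bound that yields the factor $kp$ in the statement. The appearances in $\calC$ are pairwise edge-disjoint, because each one is found in $G\setminus F$ and its edges are then removed. The process stops only once $|F|\ge\eps n$, so we obtain $|\calC|\ge \eps n/(kp)$.

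\textbf{Step 2: pigeonhole.} Partition $\calC$ according to which member of $\calH$ each appearance is isomorphic to. Since $|\calH|=r$, some $H\in\calH$ accounts for at least $|\calC|/r\ge \eps n/(krp)$ appearances. These form the required edge-disjoint collection $\calH(G)$ of $H$-appearances.

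\textbf{Remark on the parenthetical.} A $k$-vertex graph also has at most $\binom{k}{2}\le k^2$ edges. Using $\min(kp,k^2)$ as the per-appearance edge bound gives the stated $\eps n/(rk^2)$ when $k\ge p$.

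The only point requiring care is the bookkeeping of the per-appearance edge count and matching it to the distance definition, which counts $\eps|V|$ edge modifications. No real obstacle is expected.
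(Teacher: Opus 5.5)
Your proof is correct and takes the same approach as the paper. The paper takes a maximal edge-disjoint collection of $\calH$-appearances and notes that deleting all their edges leaves an $\calH$-free graph; this is your greedy packing, and you additionally spell out the two steps the paper leaves implicit, namely the bound of at most $kp$ edges per appearance and the pigeonhole over the $r$ members (your $\binom{k}{2}$ bound even gives $\eps n/(rk^2)$ without assuming $k\ge p$).
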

\begin{proof}
 Let $\mathcal{C}$ be any maximal collection of edge-disjoint appearances of $\calH$-members in $G$. Removing all edges in every such appearance results in a $H$-free graph, and the lemma follows.
\end{proof}

{\bf An Important Note:} In our characterization we think of the family of input graphs as $p$-degenerate for some $p=O(1)$. Namely, if we say that $\calH$-freeness is testable for $p$-degenerate graphs, we mean that it so for any $p=O(1)$, with the test knowing $p$ (and in particular, its query complexity may depend on $p)$. On the other hand, if $\calH$-freeness is not testable, that means that for every $p=O(1),~ p \geq p^*$, for some constant $p^*$ that may depends on $\calH$, no algorithm of constant query complexity can test $\calH$-freeness against every $p$-degenerate graph.

\section{Testing $H$-freeness for $p$-degenerate graphs}\label{sec:H-free}

Throughout, we fix $H=(V(H),E(H))$ to be an arbitrary simple,  undirected graph. 
  We start with the following simpler property of being $H$-free for a fixed graph of (constant) size $k$,  
 $H=(V(H),E(H))$. The following is a simple fact stated for the one-sided error testability of being $H$-free. In what follows, $G$ is always an $n$-vertex $p$-degenerate graph that has a linear size collection $\calH(G)$ of $H$-appearances, as stated in Lemma~\ref{lemma:edge-disjoint-app}.

\newcommand{\Heavy}{\textrm{Heavy}}
\newcommand{\Light}{\textrm{Light}}
\begin{definition}
  [semi-bipartite structure] \label{def:semi-bipartite}
Let $G=(V,E)$.   For a natural number $h\in\mathbb{N}$ (usually a fixed integer
independent of $n$ but may be a function of $\eps$ and the graph $H$) let $\Heavy_h= \{v \in V(G)\mid
~ \deg(v) \geq h\}$ and $\Light_h= V(G)\setminus \Heavy_h$.
We say that $G$ is semi-bipartite with respect to $\Heavy_h$ if $\Heavy_h$ is an
independent set in $G$.
\end{definition}

\begin{lemma} \label{cl:bipar}
 Let $p\in \Naturals$, $\epsilon > 0$ and  $h\geq 4p^2/\eps$. 
  If $G$ is $p$-degenerate and $\eps$-far from
  a monotone graph property $\mathcal{P}$, then there exists a spanning subgraph 
$G'$ of $G$, obtained by deleting at most $\eps n/2$ edges, which is semi-bipartite with respect to $\Heavy_h$, and is $\eps/2$-far from $\mathcal{P}$. In particular $G'$ contains a collection of edge-disjoint $H$-appearances as in Lemma~\ref{lemma:edge-disjoint-app}.
  \end{lemma}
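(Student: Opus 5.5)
The plan is to take $G'$ to be $G$ with every edge having both endpoints in $\Heavy_h$ deleted, and then check two things: few edges are deleted, and $G'$ remains far from $\mathcal{P}$. Here $\Heavy_h$ is computed with respect to the degrees in $G$. By construction, $\Heavy_h$ is an independent set in $G'$, so $G'$ is semi-bipartite with respect to it. The statement is about $\Heavy_h$ as defined from $G$; if degrees in $G'$ are the intended reference, I would note that the heavy set can only shrink, and a subset of an independent set is still independent.

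\textbf{Bounding the number of deleted edges.} First, since $G$ is $p$-degenerate, $|E(G)|\le pn$. The sum of degrees is therefore at most $2pn$, so
\[
|\Heavy_h|\le \frac{2pn}{h}\le \frac{2pn\,\eps}{4p^2}=\frac{\eps n}{2p}.
\]
Second, the subgraph induced on $\Heavy_h$ is again $p$-degenerate, since degeneracy is hereditary. Hence it has at most
\[
p\,|\Heavy_h|\le p\cdot \frac{\eps n}{2p}=\frac{\eps n}{2}
\]
edges. These are exactly the deleted edges, so at most $\eps n/2$ edges are removed.

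\textbf{Farness is preserved.} Suppose, for contradiction, that $G'$ is $\eps/2$-close to $\mathcal{P}$. Then deleting at most $\eps n/2$ further edges from $G'$ gives a graph in $\mathcal{P}$; for a monotone property, deletions suffice. Combining both batches, at most $\eps n$ edge deletions turn $G$ into a member of $\mathcal{P}$, contradicting that $G$ is $\eps$-far. So $G'$ is $\eps/2$-far from $\mathcal{P}$.

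The ``in particular'' part then follows by applying Lemma~\ref{lemma:edge-disjoint-app} to $G'$ with $\eps/2$ in place of $\eps$. This needs $G'$ to be $p$-degenerate, which holds because it is a subgraph of $G$.

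I expect no real obstacle here. The only delicate points are bookkeeping: using hereditary degeneracy to bound the edges inside the heavy set, rather than the crude bound $|\Heavy_h|^2$, and the strict versus non-strict inequalities in the definition of ``far''. The threshold $h\ge 4p^2/\eps$ is chosen exactly so that the deletion count comes out to $\eps n/2$.
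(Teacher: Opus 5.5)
Your proposal is correct and follows the paper's argument. You delete the edges inside $\Heavy_h$, bound $|\Heavy_h|\le 2pn/h$ by averaging degrees, and use degeneracy to get at most $p|\Heavy_h|\le 2p^2n/h\le \eps n/2$ deleted edges, spelling out the $\eps/2$-farness step and the heavy-set-of-$G'$ remark that the paper leaves implicit.
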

  \begin{proof}
    By the assumption that $G$ is $p$-degenerate, 
    any size $t$ subgraph of $G$ has at most $tp$ edges. 
    By averaging (using that $|E(G)| \leq pn$) we conclude that $|\Heavy_h| \leq
    2pn/h$. Hence, 
    $G[\Heavy_h]$ has at most $2p^2 n/h$ edges. Thus, since $h\geq 4p^2/\eps$, deleting these edges results in a graph $G'$ that is semi-bipartite with respect to $\Heavy_h$ and is $\eps/2$-far from $\mathcal{P}$. 
  \end{proof}

We use $G'$ only as a conceptual object in the analysis. The tester is one-sided and rejects only upon discovering a constant-size forbidden subgraph. Since $G' \subseteq G$, any forbidden subgraph found in $G'$ is also present in $G$, and therefore constitutes a valid witness for rejecting $G$.
On no-instances, although deleting edges may remove some forbidden subgraphs, our analysis shows that if $G$ is $\epsilon$-far from the property, then the resulting graph $G'$ still contains many forbidden subgraphs of the relevant type.

In view of Lemma~\ref{cl:bipar} and the discussion above, we may henceforth assume that the input graph $G$ is semi-bipartite with respect to the appropriate parameter $h$. More specifically, we state the following:

\begin{remark}\label{rem:props}
    As we consider only one-sided error tests of downward monotone properties, we may assume in what follows that all our input graphs are $\eps$-far from being $H$-free for some forbidden $k$-size graph $H$, and have the semi-bipartite property with respect to $\Heavy_h$.  In particular, the input graph $G$ has an edge-disjoint collection $\calH(G)$, of $H$-appearances of size at least $\eps n/kp$ (note that if $k\ge p$, then we have at least $\eps n/k^2$ such appearances).
\end{remark}

Our aim is to characterize the graphs $H$ for which there is a  test, under the above assumption, that can find an $H$-appearance with high probability. 
We first consider the easy case\footnote{This case is also treated in all previous studies of the bounded degree models.} in which $\calH(G)$ contains at least $\Omega(\eps n/kp)$ $H$-appearances $H'$ for which $\Heavy_h\cap V(H')=\emptyset$.

\begin{lemma}\label{lem:all-light-test} Fix $h\in\mathbb{N}$ and suppose $G$ is semi-bipartite with respect to $\Heavy_h$ and  $\eps$-far from $\calP_H$. If all the $H$-appearances in $\calH(G)$ contain only vertices in $\Light_h$, then there exists a one-sided error $q_H$-canonical $\eps$-tester for $\calP_H$, where $q_H=O(\max(kp/\eps, h))$.
\end{lemma}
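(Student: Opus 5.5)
The plan is to show that the canonical tester (sample $q'$ vertices, run \textsf{Bounded-BFS} of depth $k$ from each with sampling parameter $s$, and reject iff an $H$-appearance is seen) finds an $H$-appearance with constant probability. By Remark~\ref{rem:props}, $\calH(G)$ consists of at least $\eps n/kp$ edge-disjoint $H$-appearances, all contained in $\Light_h$ by assumption. Call a vertex \emph{good} if it lies in some appearance of $\calH(G)$. Each appearance has at least one vertex, so there are at least $\eps n/kp$ appearances. A light vertex has degree less than $h$, so it belongs to at most $h$ edge-disjoint appearances (each appearance containing $v$ uses an edge at $v$; we may assume $H$ has no isolated vertices, since such vertices are trivially completed by arbitrary vertices). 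Hence at least $\eps n/(kph)$ vertices are good. I would also try to avoid the factor $h$ by counting vertex incidences instead: each appearance contributes $k$ vertex-slots, so the number of good vertices is at least $k\cdot(\eps n/kp)/h$, which is still $\Omega(\eps n/(ph))$. In either case a sample of $O(ph/\eps)$ uniform vertices hits a good vertex with probability at least $5/6$.

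Next, fix a good vertex $v$ and an appearance $H'\in\calH(G)$ containing it. We may assume $H$ is connected, handling components separately otherwise; the disconnected case needs a little care but follows by sampling each component's starting vertex. Every vertex of $H'$ is then within distance $k-1$ of $v$ through a path inside $H'$ whose vertices are all in $\Light_h$. The key step is that \textsf{Bounded-BFS} with $s$ chosen as in Appendix~\ref{sec:BFS} (roughly $s=O(h\log(h^{k}))$) retrieves \emph{all} neighbors of every light vertex it visits with probability at least $1-\delta$ per vertex. Since the BFS restricted to light vertices within depth $k$ visits at most $h^{k}$ vertices, a union bound shows that with probability at least $5/6$ the BFS fully explores the depth-$k$ light neighbourhood of $v$, which contains $H'$. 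The tester therefore discovers an $H$-appearance. Combining the two events gives rejection probability at least $2/3$. One-sidedness is immediate because the tester rejects only upon seeing an actual subgraph of $G$.

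The main obstacle is the parameter bookkeeping needed to match the claimed $q_H=O(\max(kp/\eps,h))$. The sample size depends on the good-vertex density, which I bound above by roughly $\eps/(ph)$, whereas the claim suggests $kp/\eps$ samples. I would first try to sharpen the counting, for instance by arguing that appearances with many light vertices of high multiplicity still yield $\Omega(\eps n/kp)$ distinct good vertices once $h$ is treated as a constant absorbed into the BFS cost. The BFS parameters $t=k$ and $s$ depending only on $h$ and $k$ supply the $h$ term. If the sharper count fails, I would accept a query bound that is polynomial in $h,k,p,1/\eps$ and note that it is independent of $n$, which is what testability requires.
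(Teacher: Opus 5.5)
Your strategy is the same as the paper's: sample uniform vertices until one lands in an all-light appearance $H'\in\calH(G)$, then let \textsf{Bounded-BFS} of depth $k$ exhaust the light neighbourhood so that $H'$ is seen. The BFS part of your argument is sound. Your breadth $s=O(h\log(h^k))$ is what Appendix~\ref{sec:BFS} actually uses, so the bare $h$ in the stated $q_H$ should be read up to that logarithmic factor.

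The gap is the one you flag yourself and leave open. Your count of covered vertices, $\eps n/(ph)$, yields only $O(ph/\eps)$ samples, and that is not $O(\max(kp/\eps,h))$: in the paper's regime $h\ge 4p^2/\eps$ it is of order $p^3/\eps^2$. Falling back on ``some polynomial bound'' proves testability, but not the lemma as stated. Your idea of ``absorbing $h$ into the BFS cost'' does not touch the problem, because the sample size is fixed before any BFS is run.

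The missing idea is to count covered vertices using degeneracy instead of the degree bound.
\begin{itemize}
\item Let $U$ be the set of vertices lying in some member of $\calH(G)$. Every edge of every appearance lies in $G[U]$.
\item The appearances are edge-disjoint, so they contribute at least $|\calH(G)|\cdot|E(H)|$ distinct edges. If $\calH(G)$ is taken maximal as in Lemma~\ref{lemma:edge-disjoint-app}, deleting these edges makes $G$ $H$-free, so there are more than $\eps n$ of them.
\item $G$ is $p$-degenerate, so $|E(G[U])|\le p|U|$.
\end{itemize}
Hence $|U|>\eps n/p$, with no dependence on $h$. Then $O(p/\eps)\subseteq O(kp/\eps)$ uniform samples hit $U$ with probability $5/6$, and the only remaining $h$-dependence is the BFS breadth. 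Even if you use only $|\calH(G)|\ge\eps n/kp$ and $|E(H)|\ge k/2$, you get $|U|\ge \eps n/(2p^2)$, hence $O(p^2/\eps)$ samples, which is $O(h)$ once $h\ge 4p^2/\eps$.

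The paper's two-line proof asserts this hitting step without justifying it, so this counting is exactly the piece that has to be supplied.
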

\begin{proof} Since $\calH(G)$ contains $\Omega(\eps n/kp)$ edge-disjoint appearances $H'$ containing only vertices in $\Light_h$, a $q_H$-canonical tester finds a vertex $v$ in such appearance $H'$ with high probability. Conditioned on this event, and the fact that all vertices in such appearance are light, the explored subgraph centered around $v$ contains $H'$ as a subgraph, causing the tester to reject.
\end{proof}

In view of Lemma~\ref{lem:all-light-test}, we assume in what follows that our input graphs $G$ have the property that every $H$-appearance in $\calH(G)$ contains a $\Heavy_h$ vertex. We first argue that there exists a collection $\calH'(G)\subseteq\calH(G)$ of $H$-appearances, such that for every $H'\in \calH'(G)$ every vertex $v\in V(H')\cap\Heavy_h$ participates in many other $H$-appearances.

\begin{definition}\label{def:good}
Let $G, \calH(G)$ be as in Remark~\ref{rem:props}. For $\delta\in(0,1]$, we say that $v \in \Heavy_h$ is \emph{$\delta$-good} if at least $\delta \deg(v)$ of its edges appear in an $H$-appearance in $\calH(G)$. In that case we also call the edge that appears in an $H$-appearance {\em $\delta$-good}. Finally, we call an $H$-appearance in $\calH(G)$ \emph{$\delta$-good} if all its edges adjacent to vertices in $\Heavy_h$ are $\delta$-good. \end{definition}

\begin{lemma}\label{cl:full-degree}
Fix $\delta\in(0,1]$ and let $G=(V,E)$ be $\eps$-far from being $H$-free with properties as in Remark~\ref{rem:props}. Then there are at least $(\frac{\eps}{kp} -2\delta p)n$ $\delta$-good edges, and at least $(\frac{\eps}{kp} - 2\delta p)n$ $\delta$-good appearances in $\calH(G)$. In particular, there exists a sub-collection $\calH'(G)\subseteq \calH(G)$ of size at least $(\frac{\eps}{kp} - 2\delta p)n$ where each member in the collection is $\delta$-good.
\end{lemma}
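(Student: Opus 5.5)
The plan is a counting argument. We charge every $H$-appearance that fails to be $\delta$-good to a heavy vertex that is not $\delta$-good, and then show that the non-good heavy vertices can destroy only few appearances.

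First recall the setup. By Remark~\ref{rem:props}, $\calH(G)$ consists of at least $\eps n/kp$ edge-disjoint $H$-appearances. Since the appearances are edge-disjoint, each edge of $G$ lies in at most one member of $\calH(G)$.

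Call a heavy vertex $v$ \emph{bad} if it is not $\delta$-good. Then fewer than $\delta\deg(v)$ of the edges at $v$ lie in appearances of $\calH(G)$. The key estimate is
\[
\sum_{v\in \Heavy_h} \deg(v)\le 2|E(G)|\le 2pn ,
\]
which holds because $G$ is $p$-degenerate. It follows that the total number of edges incident to bad heavy vertices that lie in some appearance of $\calH(G)$ is less than
\[
\sum_{v \text{ bad}} \delta\deg(v)\le 2\delta p n .
\]
These are precisely the edges that are used in appearances but are not $\delta$-good.

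We now count appearances. An appearance in $\calH(G)$ fails to be $\delta$-good only if it contains an edge adjacent to a heavy vertex that is not $\delta$-good. An edge is adjacent to a heavy vertex $v$ and lies in an appearance; if it is not $\delta$-good, then $v$ is bad. So every non-good appearance contains at least one of the fewer than $2\delta pn$ "bad" appearance-edges counted above. By edge-disjointness, distinct non-good appearances contain distinct such edges. Hence fewer than $2\delta p n$ appearances are not $\delta$-good, and at least $(\frac{\eps}{kp}-2\delta p)n$ members of $\calH(G)$ are $\delta$-good. These form the sub-collection $\calH'(G)$.

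For the edge count, each $\delta$-good appearance that contains a heavy vertex contributes at least one $\delta$-good edge. Under the standing assumption after Lemma~\ref{lem:all-light-test}, every appearance contains a heavy vertex, and since $H$ is connected and $G$ is semi-bipartite, that vertex has an incident edge inside the appearance. By edge-disjointness these edges are distinct, which yields at least $(\frac{\eps}{kp}-2\delta p)n$ $\delta$-good edges.

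The main subtlety is the definition of a $\delta$-good edge. The definition attaches goodness to an edge through its heavy endpoint. Because $G$ is semi-bipartite, each such edge has exactly one heavy endpoint, so this is well defined and the charging above is unambiguous. The remaining steps are just the degree-sum bound.
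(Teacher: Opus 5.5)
Your proof is correct and follows the same argument as the paper. The paper's one-line proof bounds the appearance edges at non-good heavy vertices by $\sum_{v\in\Heavy_h}\delta\deg(v)\le 2\delta pn$; your charging of non-good appearances to these edges via edge-disjointness, and your use of the standing assumption that every appearance contains a heavy vertex for the edge count, make that line explicit. One minor correction: the heavy vertex has an incident edge inside its appearance because $H$ has no isolated vertices, not because $G$ is semi-bipartite.
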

\begin{proof}
Trivial, as there are at most $\sum_{v \in \Heavy_h} \delta \deg(v) \leq 2\delta pn$ edges that are not $\delta$-good. 
\end{proof}

\begin{theorem}\label{thm:2-conn}
$H$-freeness is one-sided error testable for $p$-degenerate graphs 
if and only if for each
2-connected block $H'$ of $H$, $H'$-freeness is one-sided error testable for $p$-degenerate graphs.
\end{theorem}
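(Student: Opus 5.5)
We prove the two directions separately, working with the canonical tester and Lemmas~\ref{cl:bipar} and~\ref{lem:all-light-test}. Throughout, $H$ is connected with block decomposition $\calB=(H_1,\ldots,H_\ell)$ as in Fact~\ref{lem:2-block-decomp}. A disconnected $H$ is handled component by component in the same way, with an extra step of finding several disjoint components.

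\textbf{Only if.} Suppose some block $H_1$ is not testable. We take a hard family of $p$-degenerate graphs $G$ for $H_1$: each is far from $H_1$-free, yet a constant-query tester finds an $H_1$-copy only with probability $o(1)$. We build $G^*$ by attaching to each of the $\Omega(n)$ edge-disjoint $H_1$-appearances a private copy of the rest of $H$, namely the other blocks glued at the appropriate articulation points. These attachments are trees of blocks on fresh vertices, so degeneracy grows only to $\max(p,\text{degeneracy}(H))$ and $|V(G^*)|=O(n)$.

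Then $G^*$ is $\Omega(\eps)$-far from $H$-free. Indeed, every $H$-copy must contain an $H_1$-copy, since blocks are 2-connected and map into 2-connected subgraphs. Every $H_1$-copy in $G^*$ lies inside the original $G$, because the attached pieces meet $G$ only at cut vertices. Hence destroying all $H$-copies requires destroying the $H_1$-copies of the edge-disjoint family.

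A one-sided tester for $H$ on $G^*$ can be simulated by one on $G$: a random neighbor query either returns an attached-piece vertex or a vertex of $G$. The delicate point is that hubs' degrees and the sampling probabilities change. To keep the distribution of $G$-neighbors uniform, we give every vertex of $G$ the same number of attached neighbors in proportion; alternatively, we attach at a single vertex per appearance and rescale. Finding an $H$-copy implies finding an $H_1$-copy inside $G$, so this contradicts the hardness. I expect this simulation and the uniformity of the oracle to be the main technical obstacle. The fallback is to reprove the lower bound of Theorem~\ref{thm:2} directly for $G^*$, since the birthday-paradox argument is robust to pendant attachments.

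\textbf{If.} Suppose every block is testable. Let $G$ be $\eps$-far from $H$-free, with $\Omega(\eps n)$ edge-disjoint $H$-appearances. We induct on the number of blocks along a leaf block $B$ attached at articulation point $a$, with $H=H'\cup B$. The plan is to use the tester for $B$ to find a $B$-copy and then extend it. This is hard because $B$ and $H'$ copies may be disjoint.

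Instead we argue via the structure used for 2-connected blocks. After the semi-bipartite cleaning (Lemma~\ref{cl:bipar}) and the $\delta$-good cleaning (Lemma~\ref{cl:full-degree}), the successful tester for each block explores a BFS tree from a random vertex. We run the canonical tester with depth $k$ and $s$ large, starting from a random vertex of a $\delta$-good appearance. In that appearance, each block's copy is reached with constant probability, using the block's testability guarantee applied to the subgraph formed by $\calH'(G)$ restricted to that block. Consecutive blocks share their articulation vertices, so the union of explorations reconstructs the whole appearance with probability bounded by a product of constants. Making the per-block guarantee ``localized", meaning it finds the copy inside the specific appearance through a given vertex, is the second obstacle. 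For this we would use the concrete tester structure from the proof of Theorem~\ref{thm:2}.
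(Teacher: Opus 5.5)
\textbf{Only if.} The step ``every $H_1$-copy in $G^*$ lies inside $G$'' does not follow. The cut-vertex argument only places each 2-connected subgraph of $G^*$ inside $G$ or inside a single pendant piece, and a pendant piece (a copy of the other blocks of $H$) may itself contain $H_1$.

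The more serious failure is the implication ``an $H$-copy in $G^*$ yields an $H_1$-copy in $G$''. It breaks whenever the glue vertex is shared by many appearances, which is exactly what happens at the hubs of the hard instances of Lemma~\ref{lem:lb}. Such a vertex receives many pendant pieces, and these can assemble into an $H$-copy using no edge of $G$. For example, let $H$ be two $C_4$'s sharing a vertex, let $G$ be the $C_4$ lower-bound graph, and glue at hubs. A few random-neighbour queries at a hub, followed by a BFS through the degree-$2$ pendant vertices, find $H$.

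Your proposed oracle fix, attaching pieces in proportion to the degree at every candidate vertex, is essentially what the paper does in the proof of Lemma~\ref{cl:2-conn} ($\deg(v)$ pendant copies at each candidate $v$). It makes the self-assembly problem worse: for the $H$ above, two pendant $C_4$'s at a single vertex already form $H$, so this step cannot simply be copied.

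What is missing is control over where pieces are glued. You need to work with the explicit construction of Definition~\ref{def:LB_construction} rather than a black-box hard family. Glue at vertices that lie in a single appearance where possible. Then verify directly that every $H$-copy of the resulting graph contains an $H_1$-copy from the hard part. Farness is fine: each appearance together with its private piece already gives $\Omega(n)$ edge-disjoint $H$-copies.

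\textbf{If.} The claim that the explorations ``reconstruct the whole appearance with probability bounded by a product of constants'' fails as soon as an articulation vertex $x$ of the appearance is heavy. \textsf{Bounded-BFS} makes only $s$ random-neighbour queries at $x$, so it returns to the same appearance with probability at most $sk/\deg(x)$. $\delta$-goodness only guarantees entering \emph{some} appearance through $x$. The block copies found beyond $x$ therefore come from other appearances, and their union with what was already explored is an $H$-copy only if they meet it in $x$ alone. This collision problem, not localisation, is the missing idea.

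Localisation within light parts is indeed available from Theorem~\ref{thm:2}. After cleaning, the heavy vertices of a block copy are independent, so a testable block minus them is connected, and a light BFS sweeps it as in Lemma~\ref{lem:ub}. Collisions through light vertices can be bounded by degree counting: such a vertex lies in fewer than $h$ of the edge-disjoint appearances, which is negligible once $\deg(x)\gg hk^2/\delta$, so the thresholds need to be separated. Collisions through heavy vertices already explored are different. They need a genuine argument that the appearances through $x$ are not concentrated on those vertices, in the spirit of the collision bounds in Section~\ref{sec:family}.

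The paper's proof of Lemma~\ref{cl:1-conn-inverse-(1,p)} is precisely the leaf-block induction you set aside. It splits $H$ at an articulation vertex $v^*$ and distinguishes two cases: the vertex playing $v^*$ is light (the BFS then sees all its neighbours and stays in the same appearance), or it is heavy (the paper appeals to $\delta$-goodness). You should carry out that case split and supply the collision bound in the heavy case.
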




The proof of Theorem~\ref{thm:2-conn} is a direct consequence of Lemma~\ref{cl:2-conn} and Lemma~\ref{cl:1-conn-inverse-(1,p)} below. We provide the proofs for both lemmas in Appendix~\ref{sec:deferred}.

\begin{lemma}\label{cl:2-conn}
Fix $\epsilon>0$ and $q\in \mathbb{N}$. Let $H$ be a graph on $k$ vertices and suppose that $u \in V(H)$ is a separating vertex of $H$. Let $H_1$ be a component of $H\setminus \{u\}$ and $H_2 = H \setminus \{V(H_1) \cup \{u\})$. If $\mathcal{P}_{H_1}$ is not one-sided error $\epsilon$-testable with $q$ queries  on $p$-degenerate graphs, then $\mathcal{P}_H$ is  not one-sided error $\epsilon/k$-testable with $q$ queries  for $p'$-degenerate graphs for $p'=O(p)$.
\end{lemma}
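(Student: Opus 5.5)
We argue by contrapositive, via a reduction: from any one-sided $q$-query $\epsilon/k$-tester $T$ for $\calP_H$ on $p'$-degenerate graphs we build a one-sided $q$-query $\epsilon$-tester $T_1$ for $\calP_{H_1}$ on $p$-degenerate graphs. Let $H_1^+ = H[V(H_1)\cup\{u\}]$. Given an $n$-vertex $p$-degenerate input $G_1$, consider the auxiliary graph $G$ obtained by attaching to every vertex $x\in V(G_1)$ a fresh private copy $R_x$ of $H_2^+ = H\setminus V(H_1)$ (which contains $u$), identifying $x$ with the copy of $u$. Then $|V(G)|\le kn$. Moreover $G$ is $p'$-degenerate with $p'=\max(p,k)=O(p)$: order $V(G_1)$ by its degeneracy order and put each $R_x$ after it. The oracle for $G$ is simulated query-by-query with the oracle for $G_1$ plus internal randomness. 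When a vertex of $R_x\setminus\{x\}$ is queried, answer inside the known gadget. When $x\in V(G_1)$ is queried, return a uniform gadget neighbour with probability $d_{R}(u)/(\deg_{G_1}(x)+d_R(u))$ and otherwise forward to the $G_1$-oracle. This mixing requires knowing $\deg_{G_1}(x)$, which we cannot, so I would not simulate exactly. Instead I would use the canonical tester for $H$, namely Bounded-BFS with fixed $s$, and argue that sampling $s$ neighbours in $G$ is dominated by sampling $s$ neighbours of $x$ in $G_1$ together with exploring all of $R_x$ deterministically. Concretely, $T_1$ runs, from each sampled vertex, Bounded-BFS in $G_1$ with the same parameters and adds all gadgets around every discovered vertex. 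By a coupling, every subgraph $T$ discovers is contained in the subgraph $T_1$ discovers (with gadgets). The uniform vertex sample of $G$ maps to $G_1$ by projecting $R_x\mapsto x$, which distorts probabilities by at most a factor $k$; we compensate by repeating $O(k)$ times.

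\textbf{One-sidedness.} If $G_1$ is $H_1$-free, we need $G$ to be $H$-free. This can fail, since an $H$-copy might be embedded with $H_1$ landing partly in gadgets. The fix I would use is to choose $H_1$ to be the "hardest" piece, for instance a component of minimum size or one not containing $H$-subconfigurations. Alternatively, $T_1$ rejects only when the discovered subgraph of $G_1$ itself contains $H_1$. Then we must argue that whenever $T$'s view contains an $H$-copy, the $G_1$-part contains an $H_1$-copy. This step is the main obstacle. I would handle it by taking a copy of $H$ in $G$, noting that each gadget attaches at a single cut vertex, and using that $H_1$ is a block-structured piece to force some copy of $H_1$ entirely into $G_1$. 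If this fails for arbitrary components, I would instead apply the lemma inductively to a leaf block of $H$, which is what Theorem~\ref{thm:2-conn} needs.

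\textbf{Farness.} If $G_1$ is $\epsilon$-far from $H_1$-free, then by Lemma~\ref{lemma:edge-disjoint-app} it has $\Omega(\epsilon n)$ edge-disjoint $H_1$-copies. For each copy, the vertex $w$ playing the role of a neighbour-attachment of $u$ lies in $G_1$, but its gadget is rooted at a copy of $u$, not at a neighbour. So I would change the gadget to attach $R_x$ along all of $H_2^+$ via $u$ being a new vertex adjacent to $x$ as dictated by $H$'s edges between $u$ and $H_1$. Equivalently, root a private copy of $H_2^+$ at each vertex so that each $H_1$-copy extends to an $H$-copy using gadget edges. Edge-disjointness in $G_1$ plus private gadgets yields $\Omega(\epsilon n)$ edge-disjoint $H$-copies. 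Hence $G$ is $(\epsilon/k)$-far, since $|V(G)|\le kn$. Thus $T$ rejects $G$ with probability $2/3$, and so does $T_1$ on $G_1$, a contradiction.
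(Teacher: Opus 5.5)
\textbf{Gap 1: farness fails with one gadget per vertex.} Many edge-disjoint copies of $H_1^+$ in $G_1$ can use the \emph{same} vertex in the role of $u$. In the hard instances they are forced to, since $u$ may be a role that only hubs can play. All of these copies can then be completed by gadgets only through the single $R_x$, so the resulting $H$-copies are not edge-disjoint.

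Here is a concrete instance. Read $H_1$ as containing $u$, as the paper's proof does and as your rooting at $u$ requires. Let $H_1^+=K_{2,3}$ with $u$ one of its degree-$3$ vertices, let $H_2^+$ be a triangle through $u$, and let $G_1$ be the graph of Definition~\ref{def:LB_construction} for the obstacle formed by the two degree-$3$ vertices. Then:
\begin{itemize}
\item every $K_{2,3}$ in $G_1$ has its degree-$3$ vertices on the $2m=\Theta(\sqrt n)$ hubs;
\item the only triangles of $G$ are gadget triangles;
\item $u$ has degree $5$ in $H$, so it must map to a gadget's root.
\end{itemize}
Hence deleting the $6m$ gadget edges at the hubs makes $G$ $H$-free, although $G_1$ is $\Omega(1)$-far from $K_{2,3}$-free.

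The missing idea is the paper's multiplicity: attach $\deg(v)$ vertex-disjoint copies of $H_2$ at $v$. Edge-disjoint $H_1$-appearances using $v$ as $u$ number at most $\deg(v)$, so each one gets its own gadget. This adds only $O(k\sum_v \deg(v))=O(kpn)$ vertices and keeps degeneracy $\max(p,k)$. It also makes the fraction of gadget neighbours at $v$ the fixed constant $d/(1+d)$, where $d=\deg_{H_2^+}(u)$. That is exactly the unknown-degree obstacle you hit when simulating the oracle. Your alternative of hanging a new $u$ off a single vertex $x$ does not make sense when $u$ has several neighbours in $H_1$.

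\textbf{Gap 2: one-sidedness is left open, and for your construction it is false.} Take $H$ to be two vertex-disjoint $C_4$'s joined by an edge $au$. Let $H_1$ be the $C_4$ through $a$, so $\mathcal{P}_{H_1}$ is $C_4$-freeness, which is non-testable, and let $H_2^+$ be the $C_4$ through $u$. Gluing a private $C_4$ at every vertex turns \emph{every} edge $xy$ of $G_1$ into an $H$-copy $R_x\cup\{xy\}\cup R_y$.

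If $G_1$ is a perfect matching, it is $C_4$-free, while $G$ contains vertex-disjoint $H$-copies on a constant fraction of its vertices and so is $\Omega(1)$-far from $H$-free. For small $\epsilon$, the rule ``$T_1$ rejects iff $T$ rejects'' therefore rejects a yes-instance with probability at least $2/3$. Your other rule needs the implication ``an $H$-copy in $T$'s view yields an $H_1$-copy inside the $G_1$-part of the view''. That implication is false for the same reason. Choosing a ``hardest'' component or passing to a leaf block does not address this.

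The paper sidesteps the issue by attaching gadgets only at the set $A$ of vertices that play $u$ in some $H_1$-appearance. An $H_1$-free input then receives no gadgets, and $G'=G$ is trivially $H$-free.

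Separately, routing through the canonical tester turns $q$ queries into $q^{O(q)}$. That is weaker than the lemma as stated, though harmless for Theorem~\ref{thm:2-conn}.
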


\begin{lemma}
  \label{cl:1-conn-inverse-(1,p)} Suppose that $H=(V(H),E(H))$ is a connected graph on $k$ vertices. Let $\calB=(H_1,\ldots,H_\ell)$ be a 2-block decomposition of $H$ (as given in Lemma~\ref{lem:2-block-decomp}).
If for every 2-block $H'\in \calB$, the property $\mathcal{P}_{H'}$ is one-sided
    error $\eps/k$-testable with $q_{H'}$ queries for the family of $p$-degenerate
    graphs then $\mathcal{P}_H$ has a one-sided error $q_H$-canonical $\eps$-tester where $q_H=O\left(\sum_{H'\in \calB}\max\left(q_{H'},\frac{(kp)^2}{\eps}\right)\right)$. 
\end{lemma}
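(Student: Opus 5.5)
The plan is to argue by induction on the number $\ell$ of 2-blocks, peeling off a leaf block each time. Let $B=H_\ell$ be a leaf block of $\calB$, meeting the rest of $H$ at a single articulation point $u$. Let $R$ be the union of the other blocks, so that $H=R\cup B$ with $V(R)\cap V(B)=\{u\}$. By induction, $\calP_R$ has a canonical tester of the stated complexity, and by hypothesis $\calP_B$ is $\eps/k$-testable with $q_B$ queries. The task is to combine the two testers so that the copies of $R$ and $B$ they find are glued at a common vertex into an $H$-appearance.

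\textbf{Step 1 (cleaning).} Start from Remark~\ref{rem:props}, so that $G$ is semi-bipartite with respect to $\Heavy_h$ and contains the edge-disjoint collection $\calH(G)$ of size at least $\eps n/kp$. Apply Lemma~\ref{cl:full-degree} with $\delta=\Theta(\eps/(kp^2))$ to obtain a sub-collection $\calH'(G)$ of $\delta$-good appearances of size $\Omega(\eps n/kp)$. Let $G^*$ be the union of the edges of $\calH'(G)$. Each $H$-appearance in $\calH'(G)$ contains a $B$-appearance and an $R$-appearance, and these are edge-disjoint across the collection. Hence $G^*$ is $\Omega(\eps/k)$-far from both $B$-freeness and $R$-freeness, with constants absorbed by the $\eps/k$ in the hypothesis.

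\textbf{Step 2 (simulating on $G^*$).} We cannot query $G^*$ directly. However, a random-neighbor query from a light vertex, repeated $O(h\log h)$ times (the \textsf{Bounded-BFS} parameter $s$), reveals its whole neighborhood. A query from a heavy $\delta$-good vertex returns a $G^*$-edge with probability at least $\delta$. Therefore running \textsf{Bounded-BFS} on $G$ with sampling parameter enlarged by a factor $O(1/\delta)=O(kp^2/\eps)$ yields an explored subgraph that, with constant probability, dominates the subgraph that the canonical $B$-tester would explore in $G^*$. This is where the $(kp)^2/\eps$ term in $q_H$ comes from. The tester therefore finds a $B$-appearance $B'$ inside $G^*$ with constant probability.

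\textbf{Step 3 (gluing).} Let $w$ be the image of $u$ in $B'$. Since $w\in V(G^*)$, it lies in some appearance of $\calH'(G)$, and that appearance contains an $R$-copy through $w$ at the position of $u$. I would start a second exploration from $w$, running the (inductive) $R$-search from $w$ with the same enlarged sampling, to expose such an $R$-copy; the union of the two copies is an $H$-appearance. The search splits by the type of $w$:
\begin{itemize}
\item If $w$ is light, the BFS from $w$ sees everything within distance $k$, so the $R$-copy is explored with high probability.
\item If $w$ is heavy, $\delta$-goodness guarantees that $w$ lies in at least $\delta h/k$ appearances of $\calH'(G)$, and a random neighbor lands in one of them with probability at least $\delta$.
\end{itemize}
Repeating the whole procedure $O(1)$ times and summing query costs over the blocks gives the bound $q_H=O\bigl(\sum_{H'\in\calB}\max(q_{H'},(kp)^2/\eps)\bigr)$.

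\textbf{Main obstacle.} The hardest part of Step 3 is ensuring that the exposed $R$-copy is vertex-disjoint from $B'\setminus\{w\}$, since edge-disjoint appearances may still share vertices. For heavy $w$, I would take $h$ large compared with $k^2/\delta$. Only $O(k^2)$ of the appearances through $w$ can meet the other vertices of $B'$ (using semi-bipartiteness and edge-disjointness), so most of the $\ge\delta h/k$ choices are usable. For light $w$, I would instead choose $B'$ to be the $B$-part of the same appearance, arguing via the same domination coupling that with constant probability the explored copy is one belonging to $\calH'(G)$. This part of the argument I expect to need the most care.
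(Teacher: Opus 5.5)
Your outline is the paper's: induction on blocks, peeling one block off at an articulation vertex, semi-bipartite and $\delta$-good cleaning, finding one side with a block tester, then extending from the image of the articulation vertex with a light/heavy split. (The paper finds the rest $H'_2$ first and then the leaf block.) The simulation on $G^*$ is a sensible refinement. There is, however, a genuine gap in Step~3.

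\textbf{The gap in Step~3.}
First, $B'$ is whatever $B$-copy the black-box tester for $\calP_B$ returns in $G^*$. It may be assembled from edges of several different appearances, and $w$ may occupy a position other than $u$ in every appearance of $\calH'(G)$ containing it. So ``that appearance contains an $R$-copy through $w$ at the position of $u$'' does not follow. Your repair for light $w$ (that the coupling makes the explored copy one belonging to $\calH'(G)$) has no basis, since a tester's guarantee only says that \emph{some} copy is found, not which one.

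Second, the induction hypothesis gives a tester for $\calP_R$ started at uniformly random vertices. It does not give a search that finds an $R$-copy rooted at a prescribed $w$ with $u\mapsto w$. Even for light $w$, \textsf{Bounded-BFS} does not ``see everything within distance $k$'' once the $R$-part contains a heavy vertex (e.g.\ a heavy articulation point of $R$), because heavy vertices are only sampled.

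Third, the disjointness count fails. Semi-bipartiteness forbids only heavy--heavy \emph{edges}, so $B'$ can contain a heavy $y$ non-adjacent to $w$; the two degree-$2$ vertices of a diamond, which is a testable block, are an example. Edge-disjointness does not bound how many appearances through $w$ contain $y$ on their $R$-side. Moreover, each light vertex of $B'$ can lie in up to $h-1$ appearances. That already matches the $\delta\deg(w)/k$ appearances guaranteed at $w$ when $\deg(w)\approx h$, so enlarging $h$ does not help.

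\textbf{The missing idea.}
You need role consistency for all of $V(H)$, not just for an obstacle. Choose $\rho:V(G)\to V(H)$ at random and keep only the appearances whose isomorphism agrees with $\rho$; this retains a $k^{-k}$ fraction in expectation, as in Lemma~\ref{lem:role-cons}. Then repeatedly discard all appearances through any heavy $z$ with $\deg_{\calK}(z)<\gamma\deg(z)$. Each vertex triggers this at most once, so fewer than $2\gamma pn$ appearances are lost.

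The one-shot filter of Lemma~\ref{cl:full-degree} does not give this guarantee. That also undermines your Step~2 claim that heavy vertices keep $G^*$-degree at least $\delta\deg$.

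After this cleaning, two useful facts hold:
\begin{itemize}
\item Pieces taken from different appearances and glued at a common vertex are automatically vertex-disjoint, because their remaining vertices carry disjoint sets of roles.
\item A random neighbor of a heavy $z$ lands, with probability at least $\gamma$, on an edge of a prescribed block at $\rho(z)$ of some appearance.
\end{itemize}

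To reveal that block-part you need a structural handle rather than a black box. By Theorem~\ref{thm:2}, a testable block $B$ has $B\setminus I$ connected for every independent $I$. So, as in Lemma~\ref{lem:ub}, the block-part is exposed by exploring light vertices only, and heavy articulation points are crossed by sampling.

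The paper's own proof uses the block testers as black boxes in the same way. It addresses neither which copy is found nor disjointness, so it does not supply this step either.
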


\subsection{Testing $H$-freeness for 2-connected $H$}\label{sec:2-conn-main}

In what follows, we prove the main theorem that characterizes the 2-connected graphs $H$ for which $\mathcal{P}_H$ is one-sided error testable on the family of
$p$-degenerate graphs.
 By Theorem~\ref{thm:2-conn}, this implies a full characterization of graphs $H$ for which $\mathcal{P}_H$ is testable.

\begin{theorem}
  \label{thm:2}
  $\mathcal{P}_H$ is one-sided error testable for a
  $2$-connected graph $H$ if and only if $H$ has the following
  property: for any independent set $S \subseteq  V(H)$ the subgraph induced by  $V(H)\setminus S$ is connected.
\end{theorem}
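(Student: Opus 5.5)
We prove the two directions separately: testability via the semi-bipartite cleaning, non-testability via a generalisation of the $C_4$ hub construction.

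\textbf{Sufficiency.} Assume every independent set $S\subseteq V(H)$ leaves $H[V(H)\setminus S]$ connected.
\begin{enumerate}[(i)]
\item By Remark~\ref{rem:props} and Lemma~\ref{cl:bipar}, fix $h=\Theta(p^2/\eps)$ and assume $G$ is semi-bipartite with respect to $\Heavy_h$.
\item Apply Lemma~\ref{cl:full-degree} with $\delta=\eps/(4kp^2)$. This gives $\Omega(\eps n/kp)$ edge-disjoint $\delta$-good $H$-appearances.
\item If a constant fraction of these appearances avoid $\Heavy_h$, Lemma~\ref{lem:all-light-test} finishes. Otherwise consider an appearance $H'$ meeting $\Heavy_h$.
\item Since $\Heavy_h$ is independent in $G$, the set $S=V(H')\cap\Heavy_h$ is an independent set of $H'$. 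By hypothesis $H'\setminus S$ is connected and consists of light vertices.
\item Because each appearance contains at least one light vertex, the canonical tester samples a light vertex $v$ of some good appearance with constant probability. The \textsf{Bounded-BFS} from $v$ then recovers all of $H'\setminus S$, since every light vertex has all its $<h$ neighbours sampled w.h.p. and $H'\setminus S$ is connected.
\end{enumerate}

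The remaining difficulty is the heavy vertices $u\in S$. The BFS reaches a light neighbour $w$ of $u$ and sees the edge $wu$, because $w$ is light. However, $u$ must be adjacent to all its $H'$-neighbours, and these are already in the explored light part. So all edges of $H'$ incident to $S$ are edges from light vertices. Since $H$ is 2-connected, every vertex of $S$ has a neighbour outside $S$ (indeed at least two), and these edges are all discovered from the light side. Hence the explored subgraph contains the entire appearance $H'$, and no sampling at heavy vertices is needed.

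This shows the $\delta$-good cleaning is perhaps unnecessary here. It may still be needed to argue that light vertices lie in appearances with constant probability: edge-disjointness and bounded light degree give $\Omega(n)$ distinct light vertices in appearances. The query complexity is $q_H=O(\max(kp/\eps,h)^{O(k)})$.

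\textbf{Necessity.} Suppose $S$ is an independent set with $H\setminus S$ disconnected. Split the components of $H\setminus S$ into two nonempty groups $X_1,X_2$. Write $H_1=H[S\cup X_1]$ and $H_2=H[S\cup X_2]$; these share exactly the vertex set $S$, and $S$ spans no edges.

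Construct $G$ as follows.
\begin{enumerate}[(i)]
\item For each $s\in S$ take a hub set $A_s$ of size $m=n^{1/|S|}$ (or roughly so).
\item For each tuple $\tau\in\prod_s A_s$, attach one fresh copy of $X_1$ wired to the hubs of $\tau$ as in $H_1$, and independently one fresh copy of $X_2$ wired as in $H_2$.
\end{enumerate}
The resulting graph has $\Theta(n)$ vertices. It is $O(k)$-degenerate: order the hubs first, then each gadget, whose internal degeneracy is at most $k$. It contains $m^{|S|}$ edge-disjoint copies of $H$, hence is $\Omega(1/k)$-far.

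For the lower bound, randomly permute the labels of the hubs within each $A_s$ independently on the $X_1$-side and on the $X_2$-side. This yields a distribution in which a copy of $H$ requires an $X_1$-gadget and an $X_2$-gadget that hit the same hub tuple.

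\textbf{The main obstacle} is proving this lower bound. A tester with $q$ queries that reaches hubs only through gadgets sees $O(q)$ gadgets of each type. By a birthday argument, two of them agree on a full tuple (even on a single coordinate) with probability $O(q^2/m)$.

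One must also rule out copies of $H$ formed non-canonically, i.e.\ mixing several gadgets or using hub-to-hub paths. For this I would make each gadget part generic, for instance by subdividing nothing but choosing tuples via a random bijection. I would then argue that any $H$-copy in $G$ must use two gadgets sharing all $|S|$ hubs, since $H$ is 2-connected and $S$ is independent. This case analysis is the delicate part. If full coincidence is hard to enforce, I would fall back to $|S|=2$-style constructions handled via Lemma-style reductions.
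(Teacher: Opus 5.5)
\textbf{Sufficiency.} Your argument is the paper's (Lemma~\ref{lem:ub}): after semi-bipartite cleaning, the heavy vertices of an appearance form an independent set, so the light part is connected. Every edge at a heavy vertex is then seen from its light endpoint. You are right that the $\delta$-good cleaning is not needed there.

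\textbf{Necessity: the gap.} You leave the lower bound unproved, and the route you sketch would not close it.
\begin{enumerate}[(a)]
\item The structural claim you plan to prove is false. You want every copy of $H$ in your graph to use two gadgets on the same full hub tuple. Take $H=K_{2,t}$ with $t\ge 3$ and $S$ the $t$-side, which is an obstacle. Hubs $x\in A_{s_1}$ and $y\in A_{s_2}$ have $2m^{t-2}$ common gadget neighbours. This gives copies of $K_{2,t}$ on $t$ gadgets that pairwise share only $x$ and $y$.
\item Your birthday claim (``agree \ldots{} even on a single coordinate with probability $O(q^2/m)$'') is false for adaptive testers. All gadgets obtained by querying one hub $x$ agree on $x$'s coordinate with certainty.
\item The fallback to $|S|=2$ cannot be general, since $K_{3,3}$ has no independent separating set of size $2$.
\item ``Permuting hub labels independently on the two sides'' is not a safe randomization. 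The tester may query any vertex by name, so the gadget vertices themselves must be uniformly randomly labelled, as the paper does with $\bpi_\ell$.
\end{enumerate}

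\textbf{The missing invariant.} With it, your construction does work. Call each copy of a component of $H\setminus S$ a \emph{piece}, and track the bipartite piece--hub incidence graph of the explored part.
\begin{itemize}
\item A new piece appears only through an effectively uniform vertex sample (a query to an unseen random label) or through a query at a hub $x$.
\item In the hub case, its remaining hub coordinates are uniform over sets of size $m$. So it touches an already-seen hub other than the one it was reached through with probability $O_H(q/m)$. Over the whole run this gives $O_H(q^2/m)$.
\item If this never happens, the incidence graph stays a forest. Then every hub adjacent to two explored pieces is a cut vertex of the explored graph.
\item Hence every $2$-connected explored subgraph lies inside one piece plus its at most $|S|$ hubs. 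That is fewer than $|V(H)|$ vertices, because $H\setminus S$ has a second component.
\end{itemize}
This excludes canonical and non-canonical copies at once, using only the $2$-connectivity of $H$, and gives $q=\Omega(n^{1/(2|S|)})$.

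\textbf{Comparison with the paper.} The paper keeps two variable hub sets of size $\Theta(\sqrt n)$ and fixes the other $|S|-2$ roles to single shared vertices, arguing via an augmented oracle. Your fully variable hubs give a weaker exponent but are the more robust choice. Shared fixed vertices can create cheap non-canonical copies: for $K_{2,3}$ with $S$ the $3$-side, a hub $i\in A$ and the fixed vertex have $2m$ common neighbours, so a constant-depth BFS from a random vertex already finds $K_{2,3}$.
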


To establish necessary condition of Theorem~\ref{thm:2}, we define a distribution of $p$-degenerate graphs that are far from $\calP_{H}$ but difficult to distinguish from $H$-free graphs. The following definition will be useful for describing the lower bound.

\begin{definition}\label{def:obstacle}
    Let $H$ be a $2$-connected graph, $S\subseteq V(H)$ an independent set in $H$.  If $S$ is a minimal-separating set, we call $S$ an {\em obstacle} for $H$.
\end{definition} 

\begin{definition}[Lower bound construction]\label{def:LB_construction}
Let $H$ be a graph and $S=\{v_1,\ldots,v_r\}$ be an obstacle such that $H\setminus S$ consists of $t\ge2$  components $C_1,\ldots, C_t$. Set $m$ to be the largest integer such that $2m+(r-2)+m^2\sum_{\ell=1}^t|V(C_\ell)|\le n$ and note that $m=\Theta\left(\sqrt{\frac{n}{k}}\right)$. We define a graph $G(H,S)$ as follows.
\begin{itemize}
    \item Let $A$,$C$ be two disjoint sets of vertices of size $m$, and $W$ be a set of $r-2$ fixed vertices. 
    \item For each component $C_\ell$ of $H\setminus S$, we define a disjoint set of vertices $L_\ell$ of size $|V(C_\ell)|\cdot m^2$.
    \item For every pair $(i,j)\in A\times C$, we form an edge disjoint copy of $H$ where $i$ takes the role of $v_1$, and $j$ takes the role of $v_2$, the set $W$ takes the roles of $\{v_3,\ldots,v_r\}$ and $t$ unique, vertex-disjoint components from $L_1,\ldots, L_t$ take the roles of $C_1,\ldots,C_t$ (see Figure~\ref{fig:lb1} for an illustration).
    \item The distribution $\calD$ is generated by applying $t$ independent uniform permutations $\bpi_1,\ldots, \bpi_t$ to the labels of the vertices within each $L_1,\ldots, L_t$.
\end{itemize}
    
\end{definition}

\begin{lemma}
  \label{lem:lb}
  Let $H$ be a $2$-connected  graph, $S\subseteq V(H)$ an obstacle. Then there is a distribution
  $\calD$ on $n$-vertex graphs, each being $p$-degenerate and
  $\eps$-far from $\mathcal{P}_H$, such that any one-sided error test
  making $q=o(n^{1/4})$ queries  finds a copy of $H$ with probability $o(1)$.
\end{lemma}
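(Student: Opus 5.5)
We use the distribution $\calD$ of Definition~\ref{def:LB_construction} and verify three things: every $G\sim\calD$ is $p$-degenerate, every such $G$ is $\eps$-far from $\mathcal{P}_H$, and a $q$-query one-sided test finds an $H$-copy only with probability $O(q^4/n)$.

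\textbf{Degeneracy.} Order the vertices as $A\cup C\cup W$ first, then all of $L_1\cup\dots\cup L_t$. A vertex of some $L_\ell$ lies in exactly one gadget copy of $H$, so all its neighbours come from that copy. Hence it has at most $k$ neighbours preceding it, and $G$ is $k$-degenerate. This gives $p$-degeneracy for every $p\ge p^*=k$, in line with the Important Note.

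\textbf{Farness.} The $m^2=\Theta(n/k)$ gadget copies are edge-disjoint. Each copy must lose at least one edge, so at least $m^2=\Omega(n/k)$ deletions are needed, and $G$ is $\eps$-far for $\eps=\Theta(1/k^2)$ (constant).

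\textbf{Hardness, main step.} Consider any $q$-query canonical test; adaptivity can be handled by the usual deferred-decision argument. The plan is to show that, except with probability $O(q^4/n)$, the explored subgraph contains no copy of $H$. I first need a structural claim: every $H$-appearance in $G$ is (essentially) a gadget copy, and in particular uses two hubs $i\in A$, $j\in C$ together with $L$-vertices from \emph{at least two distinct} components $L_\ell$ attached to the same pair $(i,j)$. Here the minimality and independence of $S$ (being an obstacle) and the $2$-connectivity of $H$ should force any $H$-copy to route through vertices playing distinct components. I expect proving this claim, that no "spurious" $H$-copies mixing different gadgets exist, to be the main obstacle. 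I would try a counting/cycle argument: any cycle through a hub pair alternating gadgets would have to use two different $L$-components at a hub, but each $L$-piece connects a single pair $(i,j)$. If the general claim is delicate, I would weaken it to "any $H$-copy contains two $L$-pieces of different gadgets or indices sharing both hubs", which suffices for the next step.

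\textbf{Birthday-paradox bound.} Each query from a hub returns a uniformly random incident $L$-piece. By the independent permutations $\bpi_\ell$, the hub pair that a newly seen piece belongs to is uniform among pairs not yet determined. A vertex sampled in $L_\ell$ likewise reveals a uniformly random pair $(i,j)$. The explored subgraph, of size at most $q^2$ (from the Bounded-BFS parameters), therefore reveals at most $q^2$ random pairs in $A\times C$ per component type. A witness requires two pieces from different $L_\ell$'s with the same pair, which for a fixed pair of pieces happens with probability $1/m^2$. A union bound over $(q^2)^2$ pairs of pieces gives $O(q^4/m^2)=O(q^4k/n)$, which is $o(1)$ for $q=o(n^{1/4})$. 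This matches the $O(\ell^2/\sqrt n)$ heuristic of the overview, where $\ell=q^2$ counts explored pieces.

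Since a one-sided test can reject only after finding a copy of $H$, it rejects with probability $o(1)$, which proves Lemma~\ref{lem:lb}.
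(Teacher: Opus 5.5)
\paragraph{The birthday step fails.} You assert that the hub pair of a newly seen piece is uniform among the undetermined pairs, so that two fixed pieces share a pair with probability $1/m^2$. This contradicts your own previous sentence. A piece returned by querying a hub $i$ is attached to $i$ with certainty, and only its partner in $C$ is random. So two such pieces of different types share their pair with probability about $1/m$, not $1/m^2$.

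Reaching pieces through already known hubs is exactly what \textsf{Bounded-BFS}, or any adaptive tester, does. Already for $H=C_4$, suppose a tester reaches a hub $i$ and spends $O(s)$ queries sampling neighbours of $i$ and reading off their partners. It finds a matched pair with probability $\Theta(s^2/m)$. For $s\approx q=m^{1/4}$ this is $\Theta(m^{-1/2})$, whereas you claim $O(q^4/m^2)=O(m^{-1})$.

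Your final threshold $q=o(n^{1/4})$ survives only because of a compensating overcount. A $q$-query test reveals at most $q$ edges, hence at most $q$ pieces, not $q^2$. With $q^2$ pieces and the correct per-pair bound you would only get $O(q^4/m)$, i.e.\ $q=o(n^{1/8})$. Your consistency check is also off: putting $\ell=q^2$ into $O(\ell^2/\sqrt n)$ gives $O(q^4/\sqrt n)$, not $O(q^4/n)$.

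What is missing is the paper's sequential argument. Under deferred decisions, when the $(\ell+1)$-st piece is revealed through a hub, at most $\ell$ of the $\Theta(m)$ pieces at that hub complete a match with what was already seen. So this step succeeds with probability $O(\ell/m)$, or $O(\ell/m^2)$ for a piece reached from a fresh random vertex. Summing over $\ell\le q$ gives $O(q^2/m)=O(q^2\sqrt{k/n})$.

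\paragraph{The structural claim is false as stated.} You flag it as the main obstacle and leave it unproven, but it fails even in your weakened form.

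For $|S|=2$, let $H=\theta_{4,4,4}$, three internally disjoint paths of length $4$ between two branch vertices. Let $S=\{u_1,u_3\}$ be the first and third internal vertices of one path; this is an obstacle. The pieces for the component $\{u_2\}$, together with $A\cup C$, form a $1$-subdivided $K_{m,m}$. This contains the $1$-subdivision of $K_{2,3}$, which is isomorphic to $H$, using six pieces at six distinct pairs, no two of which share both hubs. What $2$-connectivity of $H$ does force is that the pieces used by any copy close a cycle in the hub skeleton on $A\cup C$, with a double edge being the gadget case. The sequential bound above controls the discovery of any such cycle.

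For $|S|\ge 3$ even this breaks, because each $w\in W$ is adjacent to pieces of every gadget. Take $H=K_{2,3}$ with $S$ its three-vertex side. Then $\{i,w\}$ together with any three pieces at $i$ is a copy of $H$, and $O(1)$ queries find it. So for this obstacle the distribution of Definition~\ref{def:LB_construction} is not hard at all, and one has to work with a different obstacle.

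The paper's proof does not address this either. It is careful only for $C_4$, and its augmented-oracle reduction tacitly assumes that every copy is a gadget copy. So this step is missing from both arguments.
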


\begin{proof} To better understand our lower bound construction, we start with the extremely simple case where $H$ is the {\em labeled}  $C_4 = (a,b,c,d)$, with obstacle $S=\{a,c\}$. 
The construction in Definition~\ref{def:LB_construction} defines a graph $F$ on $2m + 2m^2$ vertices with two disjoint sets $A, C$ of size $m$, and two disjoint sets $L_1, L_2$ of size $m^2$. For each $(i,j) \in A \times C$ there is a unique $v=v_{i,j} \in L_1$ that is connected to both $i$ and $j$, and similarly, a unique $v'_{i,j} \in L_2$ that is connected to $i, j$. Note that for every pair $(i,j) \in A \times C$, $\{i, j, v_{i,j}, v'_{i,j}\}$ form a $C_4$ in $F$. Thus, there are $m^2$ such $C_4$-copies that are edge disjoint. 
This construction implies that $F$ is $2$-degenerate and $1/4$-far from $\mathcal{P}_H$. Now the distribution $\mathcal{D}$ is generated by permuting the vertices in $L_1$ and $L_2$ via two independent random permutations $\bpi_1, \bpi_2$ of $[m]$.

To prove the lower bound, we note that each query targets either a vertex $v \in L_1 \cup L_2$ or a vertex $i \in A \cup C$. In the first case, assuming without the loss of generality that $v \in L_1$, the oracle reveals the unique neighbors $i \in A$ and $j \in C$ such that $\bpi_1(i,j) = v$. In the second case, a random neighbor query to $i \in A \cup C$ returns a vertex $v \in L_1 \cup L_2$ and its other neighbor $j$, which similarly identifies a triplet $(i, v, j)$. Since $\bpi_1$ and $\bpi_2$ are uniform and independent, the specific labels of vertices in $L_1 \cup L_2$ provide no information beyond identifying the $2$-path between $i$ and $j$ through either $L_1$ or $L_2$. Thus, we may assume each query simply discovers a pair $(i, j) \in A \times C$ connected via a vertex in one of the two sets.

Let $Q_\ell$ be the set of pairs discovered after $\ell$ queries, partitioned into $P_1$, the pairs connected via $L_1$,  and $P_2$  - the pairs connected via $L_2$. A $C_4$ is detected only if there exists some $(i, j)$ such that $v_{i, j} \in P_1$ and $v_{i, j} \in P_2$, both that are connected to the same two vertices  $i,j$, are discovered. Consider the $(\ell+1)$-th query to a neighbor of $i \in A$, and note that there are $2m$ neighbors incident to $i$: $m$ in $L_1$ and $m$ in $L_2$.
If the oracle returns a vertex $v=v_{i,j} \in L_1$, a $C_4$ is discovered only if the vertex  $v'_{i, j}$ was already discovered in $P_2$. Because $\bpi_1$ and $\bpi_2$ are independent, the mapping of neighbors in $L_1$ is unknown regardless of the vertices found in  $P_2$. At step $\ell+1$, there are at most $|P_2| \leq \ell$ such ``matching" vertices in $L_1$. Since the oracle selects from $2m$ total neighbors for vertex $i$, the probability that the query returns a neighbor completing a $C_4$ is at most $\frac{\ell}{2m}$.

The total success probability after $\ell$ queries is bounded by:
$\sum_{j=1}^{\ell} \frac{j}{2m} = \frac{\ell(\ell+1)}{4m}$.
For this probability to be $\Omega(1)$, we require $\ell = \Omega(\sqrt{m})$. Given $n \approx 2m^2$, we have $m \approx \sqrt{n/2}$, yielding a lower bound of $\ell = \Omega(n^{1/4})$. Thus, any tester making $o(n^{1/4})$ queries fails to find a $C_4$ with high probability.

Consider now the case where the $2$-size obstacle $S=\{a,b\}$, and $H \setminus S$ contains $t$ components $C_1, \ldots , C_t$ of sizes $c_r$ for $r\in[t]$. 
We will define $\mathcal{D}$ in
a similar conceptual way. We start with a fixed graph where  $A,C$ will contain $m = \Theta(\sqrt{n})$ 
nodes as before. We further will have $t$ sets $L_1, \ldots ,L_t$ each
of size $c_r\cdot m^2$ for $r\in[t]$, with the intention, as in the
very simple case above, that for each $(i,j) \in A \times C$, taking
the role of $a,b$ respectively, we will form a unique $H$-appearance
by forming $t$ vertex disjoint components, $C_r(i,j)$ isomorphic to $C_r$, $~r\in[t]$, where for $C_r$ we use  vertices from $L_r$. Thus
for each $(i,j)$ we will get a disjoint copy of $H$ except for the
vertices $i,j$. 

Again for $m = O(\sqrt{n})$ we will have $m^2$ such edge disjoint appearances on a graph of size $2m+ k m^2$. Hence the graph constructed is $\Omega(1/k)$-far from being $H$-free, in addition to being $2$-degenerate. Now, the distribution is exactly as before by permuting independently the vertices in  each $L_r$ for $r\in[t]$. In order to reduce to the previous simple case, one can consider a augmented oracle which returns the \emph{whole} component $C_\ell$ for any query made to a node in the component. By the same reasoning as above,  making $o(n^{1/4})$ queries will find a corresponding matching pair of $H_1,H_2$ with $o(1)$ probability. 

Finally, for general $S=\{v_1, \ldots v_r \}$ in  the {\em labeled} 
$H$, we use the construction in Definition~\ref{def:LB_construction} as follows: We again take two sets $A, C$ of vertices where $|A| = |C| = m = \Theta (\sqrt{n})$, with the intention that for each $(i,j)  \in [m]^2$ we will form an edge disjoint $H$-copy in the fixed graph. Now, every pair $(i,j) \in [m]^2$ will take the role of $v_1, v_2 \in V(H)$, while $W=\{v_3, \ldots ,v_r\}$  are {\em fixed} and are the same for all copies. Thus again, to find a copy one would need to find two (or more components of $H \setminus S$) that with the {\em known} $v_3, \ldots ,v_r$ but {\em unknown}  $i,j$ form an $H$ copy. Note that the resulting graph is $\Omega(1/k)$-far from $H$ and $k$-degenerate. See Figure~\ref{fig:lb1}  for such a construction for an example of 3-size $S = \{a,b,c\}$. The rest of the construction is as before: we permute the sets $L_1, .... , L_t$ by uniform and independent permutations $\bpi_1,\ldots,\bpi_t$, and the analysis is obtained similarly using the augmented oracle as before. \end{proof}

\ignore{
\ilan{old part is hidden below here}
consider again, first $H = K_{2,r}$, (the bipartite graph with $2$ vertices on one side each connected to all $r$ vertices on the other side).

We set $m$ the largest such that $m+ 2{m \choose r}\leq n'$. We again define $\Heavy_\alpha$ a vertex set of size $m$ and $L_1,~ L_2$ be two disjoint sets, each indexed by all $r$-sets in $[m]$. Conceptually, as in the very simple case above (for $H=C_4$), we connect every $r$-set $R$ in $L_1$ and in $L_2$ to all members $u \in \Heavy_\alpha$ for which $u \in R$. This defines a graph $F$ on (approx) $n$ vertices, which is $\frac{1}{2r}$-far from $H$-freeness, and that is $(1,r)$-admissible. 

By a similar argument as for the simple case, it can be shown that after making $\ell$ queries, the success probability of finding an $H$-copy is at most $O(\frac{\ell}{m})$. Hence here two making $o(\sqrt{m})$ queries, the overall success probability is $o(1)$. Since $m=n^{1/r} \geq n^{k}$ the claimed bound follows.

Now, to generalize for a general $H$ of size $k$, we again mimic the two vertices in the $2$-side part of $K_{2,r}$ above, by the corresponding $H_1 = H \setminus S,~ H_2 = H \setminus S$ as done for the general $2$-case. We leave further details in this draft.
}

\begin{figure}[ht!]

  \centering
\includegraphics[width=13cm]{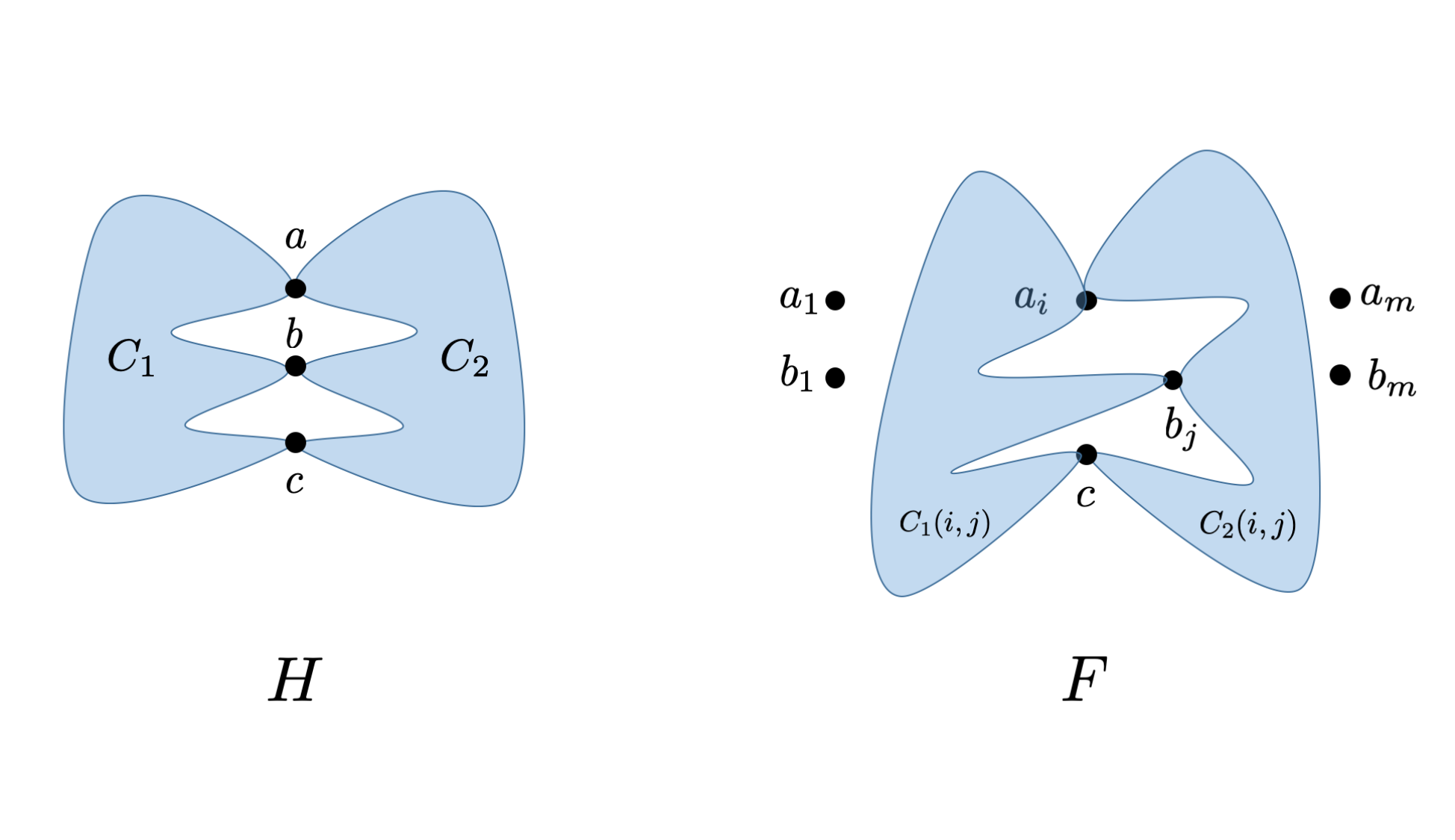} 
   \caption{The graph $H$ has a separation set $S=\{a,b,c\}$. The graph $F$ has $m^2$ copies of $H$, one for every $(i,j) \in [m^2]$. Note that the vertex $c$ is present in each of the above copies.} \label{fig:lb1}
\end{figure}

\begin{lemma}
  \label{lem:ub}
   Assume that for a $2$-connected $H$, no independent  set  is a separating set. Then $\mathcal{P}_H$ is one-sided error testable for the family of $p$-degenerate graphs.
\end{lemma}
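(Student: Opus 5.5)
The plan is to show that the $q$-canonical tester, with suitable constants $h$, $\delta$, $s$, finds an $H$-appearance with constant probability whenever $G$ is $\eps$-far from $\calP_H$. By Lemma~\ref{cl:bipar} and Remark~\ref{rem:props}, we may assume $G$ is semi-bipartite with respect to $\Heavy_h$ with $h\ge 4p^2/\eps$. We may also assume it carries the edge-disjoint collection $\calH(G)$ of size $\Omega(\eps n/kp)$. If a constant fraction of $\calH(G)$ consists of all-light appearances, Lemma~\ref{lem:all-light-test} finishes. Otherwise we pass to the $\delta$-good subcollection $\calH'(G)$ of Lemma~\ref{cl:full-degree}, choosing $\delta=\eps/(4kp^2)$ so that $|\calH'(G)|=\Omega(\eps n/kp)$ and every heavy vertex in these copies has a $\delta$-fraction of its edges lying in copies of $\calH(G)$.

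The structural step is the following. For $H'\in\calH'(G)$, let $S=V(H')\cap\Heavy_h$. By semi-bipartiteness, $S$ is an independent set in $H'\cong H$, and it is nonempty and proper, since the two endpoints of an edge cannot both be heavy. By the hypothesis of the lemma, $H'\setminus S$ is connected and consists only of light vertices. Moreover, by 2-connectivity each heavy vertex $u\in S$ has a neighbor in $H'\setminus S$. Hence every $H$-copy in $\calH'(G)$ is a connected light ``core'' $L(H')$ of at most $k$ vertices of degree $<h$, with each heavy vertex attached to it by edges.

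The tester samples $\Theta(kp/\eps)$ vertices. With constant probability it hits a light vertex $v$ of some $H'\in\calH'(G)$. Since these copies are edge-disjoint and each vertex of $\Light_h$ lies in fewer than $h$ of them, light vertices of such copies number $\Omega(\eps n/(kph))$; we choose the sample size accordingly. Bounded-BFS from $v$ to depth $k+1$ with parameter $s\gg h\log(k)$ then explores, with high probability, all edges among light vertices within distance $k$. It therefore reveals all of $L(H')$ and, for each heavy $u\in S$, all edges from $L(H')$ to $u$, because those edges are incident to light vertices, and querying the light side returns $u$. Since $H'$ has no heavy--heavy edges, every edge of $H'$ is incident to a light vertex. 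So the union of explored edges already contains $H'$, and the tester rejects.

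The main obstacle is reconciling the sample-size bound with the $h$-dependence. We must argue that a random vertex lands in some light core of $\calH'(G)$ with probability $\Omega(\eps/(kph))$, independent of $n$. We must also make sure the canonical BFS explores light vertices exhaustively, with failure probability at most $k h (1-1/h)^s$ via a union bound over the at most $h^{k}$ light vertices reached. This argument suggests the $\delta$-good cleaning is not even needed here, because heavy vertices are never expanded, only discovered from their light neighbors. I would double-check this point. If the Bounded-BFS in the appendix only records edges it traverses from a queried vertex, the argument still works, since each edge to a heavy vertex is traversed from its light endpoint. The final query complexity is $q_H=O\big((kph/\eps)\cdot (hs)^{k}\big)$, which is constant in $n$.
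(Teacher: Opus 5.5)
Your proposal is correct and follows essentially the paper's argument: pass to the semi-bipartite subgraph $G'$, note that the heavy vertices of any copy form an independent set so its light part is connected, sample a light vertex of some copy, and let \textsf{Bounded-BFS} exhaust that light core, so that every edge of the copy is recorded from its light endpoint. The only difference is in the counting. The paper lower-bounds the set $U$ of light vertices lying in $H$-appearances by observing that deleting all edges at $U$ (at most $h|U|$ edges) destroys every copy, whereas you count via the edge-disjoint collection. As you suspected, the detours through Lemma~\ref{lem:all-light-test} and $\delta$-goodness are unnecessary; the paper does not use them.
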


\begin{proof}
Let $H$ be a $k$ vertex graph as in the lemma.   Let $G=(V,E)$ be a $p$-degenerate graph that is
  $\eps$-far from $\mathcal{P}_H$. By setting $h\ge4p^2/\eps$ and using Lemma~\ref{cl:bipar}, there is a subgraph $G'$ of $G$ which is semi-bipartite with respect to $\Heavy_h$ and is $\eps/2$-far from $\calP_H$. Since $G'$ is semi-bipartite, we can assume that any $H$ appearance in $G'$ has at least one vertex of degree at most $h$. Let $U$ be the union of all such vertices. By definition, if we remove from $G'$ every edge incident on a vertex from $U$, then the total number of edges removed from $G'$ is at most $h\cdot|U|$, and the resulting graph is $H$-free. Since $G'$ is $\eps/2$-far from $\calP_H$, we have $h\cdot |U|\ge \eps n /2$, and hence $|U|\ge \eps n/2h$.

The discussion above implies the following basic test. Choose
$\bv \in V(G)$ uniformly at random and run \textsf{Bounded-BFS}$(G,\bv,k,h\log(10\cdot\frac{h^{k+2}}{\eps} )$. Reject if and only if an $H$ appearance is discovered.  Then, with probability at least $\frac{\eps }{2h}$
this vertex $\bv\in U$ (and in particular participates in an $H$ appearance). 
Since $H\setminus S$ is connected, the BFS can traverse the entire ``light" component of $H \setminus S$,  by following only edges between low-degree vertices. Because these vertices have degree at most $h$, a \textsf{Bounded-BFS} can exhaustively explore their local neighborhoods. The heavy vertices $S$ are then discovered naturally as neighbors of this light component, without the BFS ever needing to explore the neighborhood of heavy vertices\footnote{except only to conclude that a vertex is  heavy.}.

By assumption, $H \setminus S$ is
connected for every independent set $S\subseteq V(H)$, running \textsf{Bounded-BFS}$(G,\bv,k,h\log(10\cdot\frac{h^{k+2}}{\eps} ))$ finds a copy of $H$ with probability at least $9/10$. This basic test has query complexity $\tilde O(h^{k+1})$, and will find an $H$-appearance with probability $\Omega(\frac{\eps}{h})$. Repeating the 
test $O(\frac{h}{\eps})$ will find an $H$-appearance with success
probability at least $2/3$ and query complexity
$\tilde O(\frac{h^{k+2}}{\eps})$.\end{proof}

\section{Testing $\mathcal{H}$-freeness for a family $\mathcal{H}$ of forbidden subgraphs} \label{sec:family}
As discussed in Section~\ref{sec:intro}, the only one-sided error testable properties in the random neighbor model are those characterized by $\mathcal{H}$-freeness for a family of constant-sized forbidden subgraphs $\mathcal{H}$. While we characterized testability for a single forbidden graph in Section~\ref{sec:H-free}, it is important to note that the testability of each individual $H \in \mathcal{H}$ is a sufficient, but not a necessary condition for the testability of $\mathcal{H}$-freeness. 

Consider, for example, the family $\mathcal{H} = \{C_4, \textrm{ST}_{10}\}$, where $ \textrm{ST}_{10}$ denotes a star with $10$ leaves. A graph is $\mathcal{H}$-free if it does not contain a $C_4$ and has a maximum degree of at most $9$. While $C_4$-freeness is not testable on its own (as evidenced by the $2$-degenerate graph construction in Section~\ref{sec:H-free}), the combined property $\mathcal{H}$-freeness is testable. 
If $G$ contains $\Omega(\epsilon n)$ vertices of degree at least $10$, a violation of $\textrm{ST}_{10}$-freeness is discovered with high probability via uniform vertex sampling. If $G$ does not contain many high-degree vertices, the graph is effectively $9$-bounded degree. In this bounded-degree regime, testing  $C_4$-freeness is easy since $G$ contains $\Omega(\epsilon n)$ vertex-disjoint copies of $C_4$. 

The simple example above does not fully capture the complexity of determining whether $\mathcal{H}$-freeness is testable when $\mathcal{H}$ contains a graph $H_1$ that is individually non-testable. A clear necessary condition for $\mathcal{H}$-freeness to be testable is that any graph $G$ that is $\epsilon$-far from $H_1$-freeness, and specifically chosen to be ``hard'' to test for $H_1$, must instead contain many copies of some other member $H \in \mathcal{H}$. 

This requirement forces us to characterize the family of input graphs that are hard to test with respect to a given $H_1$, but contain many copies of $H$. We approach this by first focusing on the specific ``hard'' constructions used to prove lower bounds for $H_1$-freeness. While these constructions do not represent all possible hard instances, they impose severe structural restrictions on any other $H \in \mathcal{H}$ that could potentially render the collective property $\mathcal{H}$-freeness testable. Finally, we demonstrate that for any input graph $G$ that is $\epsilon$-far from $H_1$-freeness, a tester can either efficiently find a copy of $H_1$,  or find a copy of a suitably restricted member $H \in \mathcal{H}$.

The remainder of this section is organized as follows. We first focus our analysis on two-member families $\mathcal{H} = \{H_1, H\}$, where $H_1$ is a $2$-connected graph for which $H_1$-freeness is non-testable, and $H$ is a graph for which $H$-freeness is testable. These results can then  be generalized to larger families using similar methods. 

For a fixed non-testable $H_1$, we characterize the structures of $H$ that render the collective property $\mathcal{H}$-freeness testable. This characterization depends fundamentally on the structure of $H_1$, specifically the size and configuration of its obstacle separating sets $S$, as defined in Definition~\ref{def:obstacle}.
Consequently, for the ensuing discussion, we assume $\mathcal{H} = \{H_1, H\}$ where $H_1$-freeness is non-testable and $H$-freeness is testable. We begin by establishing a necessary structural restriction on $H$ required to facilitate the testability of $\mathcal{H}$.

\begin{definition}[Cactus with respect to $(H_1, S)$]\label{def:cactus}
Let $H_1$ be a $2$-connected graph and $S \subset V(H_1)$ an obstacle. Let $\mathcal{C}$ be the components of $H_1 \setminus S$. A \emph{cactus} with respect to $(H_1, S)$ is a pair $(H, \Phi)$, where $H$ is a graph and $\Phi: V(H) \to V(H_1)$ is a homomorphism called  \emph{``role mapping''}, such that there exists a subset of vertices $L \subseteq V(H)$ satisfying:
\begin{enumerate}
    \item \textbf{Articulation Points:} $L$ is a set of articulation points in $H$ (not necessarily maximal), and their roles under $\Phi$ are contained in the obstacle set, i.e., $\Phi(L) \subseteq S$. For each $v \in L$, $\Phi(v)$ is referred to as its \emph{$S$-role}.
    \item \textbf{Petal Structure:} Every $L$-component of $H$ is isomorphic to a subgraph of some $S$-component $C \in \mathcal{C}$ of $H_1 \setminus S$. These $L$-components are called \emph{petals}.
\end{enumerate}
%
%
\end{definition}

\ignore{Essentially, a $t$-cactus $(H,\Phi)$ with respect to $(H_1,S)$ is a graph constructed via the inductive identification of $t$-petals. A $1$-petal cactus is a pair $(P,\phi_P)$ where $P$ is a graph and $\phi_P:V(P)\to V(H_1)$ is an injective homomorphism such that $\phi_P(V(P))\subseteq C\cup S$ for some component $C\in \calC$ of $H_1\setminus S$. For any vertex $v\in V(P)$, its role in $H_1$ is given by $\phi_P(v)$, and we are particularly interested in the set of vertices $v$ such that $\phi_P(v)\in S$, which serve as the potential attachment point for the inductive step.
 A $t$-cactus is then the result of $t-1$ successive identifications. In each step, a new petal $(P,\phi_P)$ is attached to an existing cactus $(H',\Phi_{H'})$ by identifying a vertex $v\in V(P)$ and $u\in V(H')$ into a new vertex $w$, provided they share the same role in the separator (i.e., $\phi_P(v)=\Phi_{H'}(u)=s\in S$).
 }

See Figure~\ref{fig14} for a $4$-petal cactus with respect to a non-testable $H_1$,
with $S =\{a,b\}$, and Figure~\ref{fig15} for a more complicated cactus with respect to an non-testable $H_1$ and $S=\{a,b,c\}$. We note that the decomposition of a cactus $H$ into petals is not necessarily unique. As seen in Figure~\ref{fig14}, the combination of the petals $P_3,P_4$ can be reversed, so that $P_4$ will connect to $P_2$ via $b$, and $P_3$ will connect to $P_4$ via $a$. In Figure~\ref{fig15} the graph $H$ has a decomposition into two different cacti, with the same petals, but with different role mapping $\Phi$.  For the discussion and characterization that follows, the petal structure will not be of importance, but rather just the $S$-role of the $L$ vertices as defined by $\Phi$.

\begin{figure}[ht!]
\label{fig14}
  \centering
\includegraphics[width=12cm]{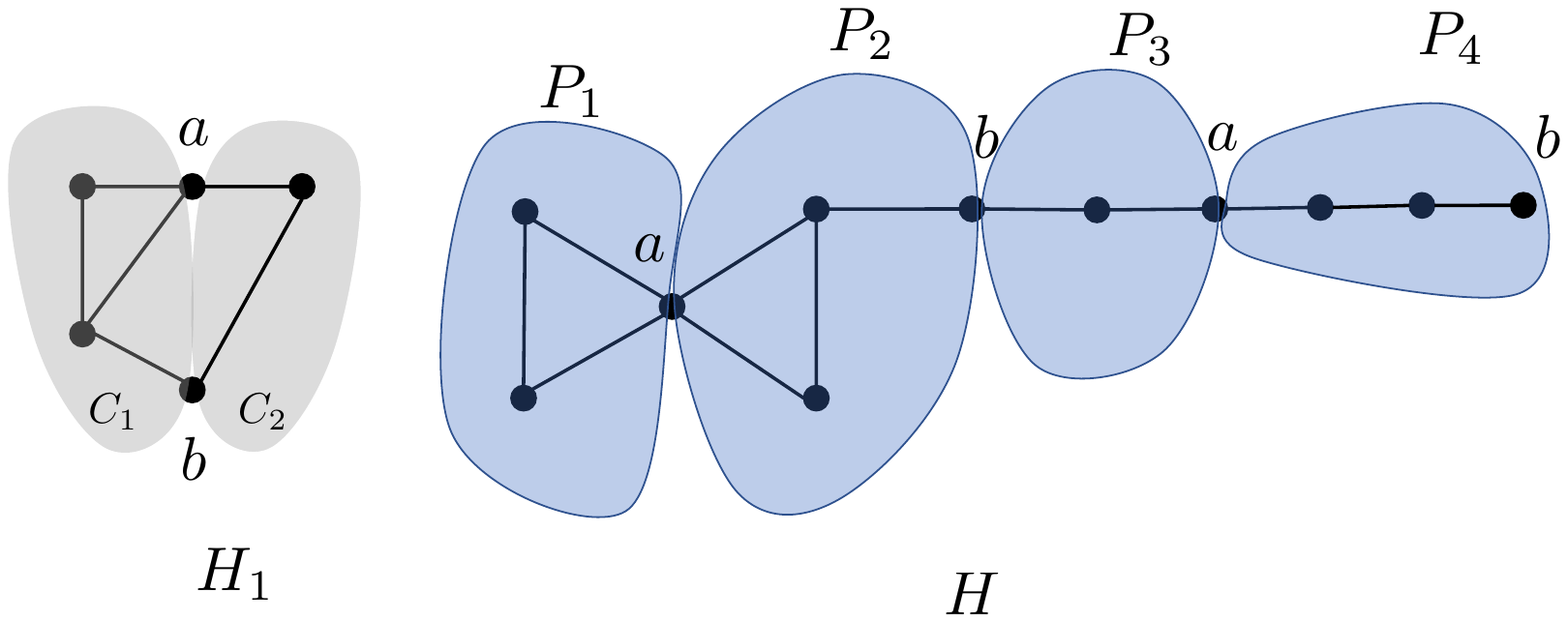} 
   \caption{The graph $H_1$ has an obstacle set $S=\{a,b\}$. The
     graph $H$ is a $4$-petal cactus, where petal $P_1$ is connected to petal
     $P_2$ with a vertex taking the role of $a$, $P_2$ is connected to
     $P_3$ with a vertex taking the role of $b$, $P_3$ is connected to
     $P_4$ via $a$.} \label{fig14}
\end{figure}

\begin{figure}[ht!]
\label{fig15}
  \centering
\includegraphics[width=15cm]{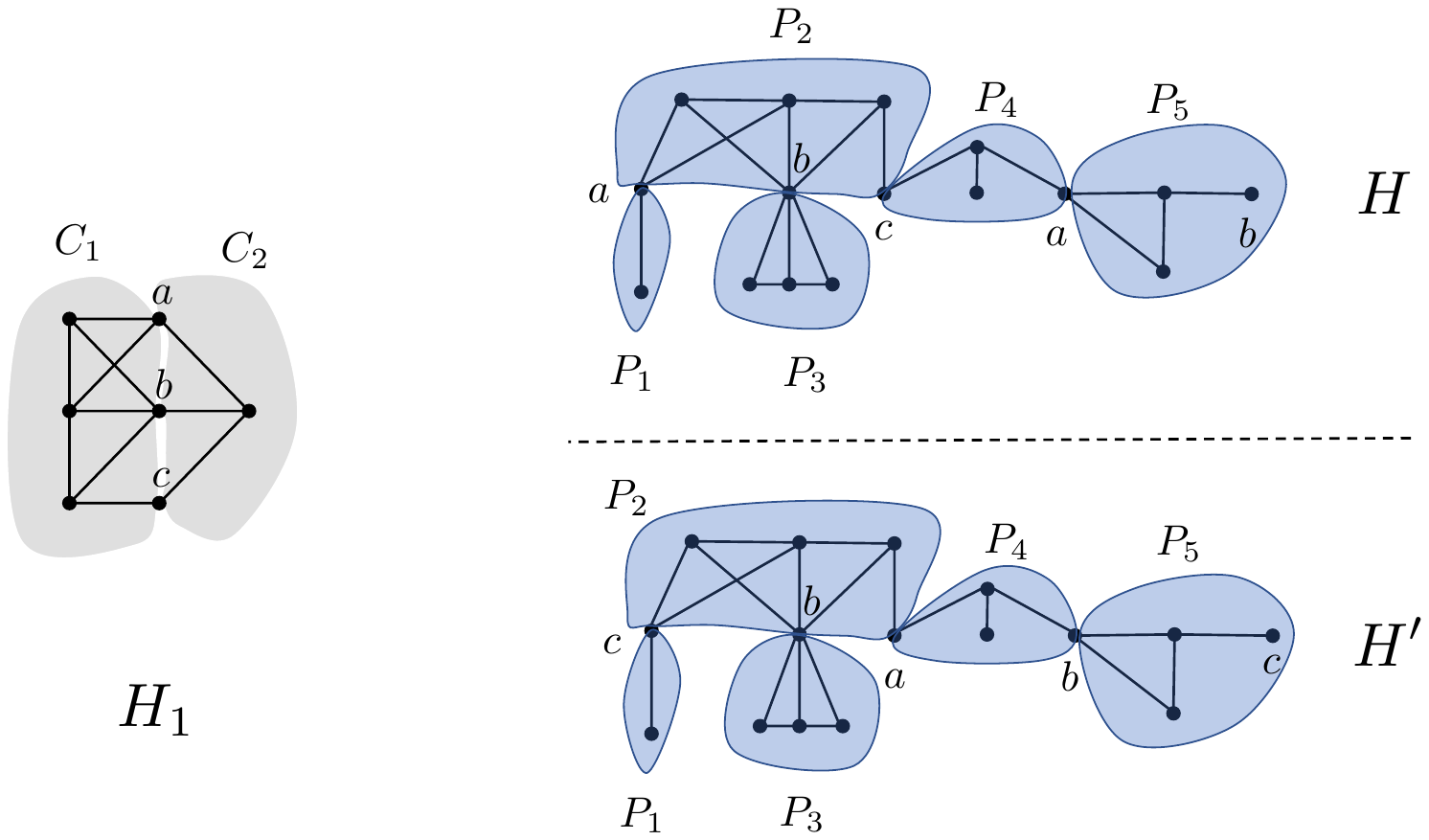} 
   \caption{The graph $H_1$ has a separation set $S=\{a,b,c\}$. The
     graph $H$ is a $5$-petal cactus, where petal $P_2$ is connected to petal
     $P_1$ with a vertex taking the role of $a$, to $P_3$  with a
     vertex taking the role of $b$, and to $P_4$ with $c$. Then, $P_4$
     is connected to $P_5$ with a vertex with role $a$.  Note that $P_1,$ which is a subgraph of the
     component $C_1$, could also be considered a subgraph of the
     component $C_2$.  Thus, the structure of $H$ as a cactus is not unique. In particular, $H'$ is isomorphic to $H$, but its articulation points have different $S$-roles.} \label{fig15}
\end{figure}
\medskip

\ignore{
The following observation follows readily from the definition of a cactus.
\begin{observation}\label{obs:cactus-tree} Let $(H,\Phi)$ be a $t$-cactus with respect to $(H_1,S)$ as defined in Definition \ref{def:cactus}. There exists a rooted tree $\calT$ with $t$ nodes corresponding to the petals $\{P_1,\ldots, P_t\}$ such that:
\begin{enumerate}
    \item Each edge $(i,j)\in E(\calT)$ represents a unique vertex $w\in V(H)$ formed by the identification of $u\in V(P_i)$ and $v\in V(P_j)$ where $\Phi(w)=s\in S$.
    \item For every node $i\in \calT$, the mapping $\Phi|_{V(P_i)}$ is an injective homomorphism $\phi_{P_i}$ into $C\cup S$ for some $C\in \calC$.
    \item The graph $H$ is the union of edge-disjoint subgraphs $\{P_1,\ldots,P_t\}$. For any two petals $P_i$ and $P_j$, the set $V(P_i)\cap V(P_j)$ is non-empty if and only if $i$ and $j$ are adjacent in $\calT$, in which case $|V(P_i)\cap V(P_j)|=1$.
\end{enumerate}
\end{observation}
}
 \begin{lemma}\label{lem:necessary1} Let $H_1$ be a $2$-connected  graph which is non-testable in the random neighbor model.
If $\{H_1, H\}$-freeness is testable, then for every obstacle $S$ of $H_1$, there exists a role mapping $\Phi$ such that $(H, \Phi)$ is a cactus with respect to $(H_1, S)$.
  \end{lemma}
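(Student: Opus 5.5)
We argue by contraposition. Fix an obstacle $S$ of $H_1$ and suppose that for no role mapping $\Phi$ is $(H,\Phi)$ a cactus with respect to $(H_1,S)$. We will show that $\{H_1,H\}$-freeness is not testable. The tool is the hard distribution $\calD$ of Definition~\ref{def:LB_construction} and Lemma~\ref{lem:lb}, built from $(H_1,S)$. Every graph in its support is $p$-degenerate and $\Omega(1/k)$-far from $H_1$-freeness, hence also $\Omega(1/k)$-far from $\{H_1,H\}$-freeness (being far from $H_1$-freeness can only make it farther from the family). By Lemma~\ref{lem:lb}, a tester making $o(n^{1/4})$ queries finds an $H_1$-copy only with probability $o(1)$. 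Since a one-sided tester rejects only after exhibiting a copy of $H_1$ or of $H$, it suffices to show that graphs $G(H_1,S)$ in the support of $\calD$ contain no copy of $H$ at all. More modestly, we could show that every $H$-copy must use the fixed hub set $W$ in a way that makes it as hard to find as $H_1$; we aim for the clean statement.

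\textbf{Main step: every subgraph of $G=G(H_1,S)$ is a cactus.} Let $H'\subseteq G$ be a copy of $H$. Define $\Phi:V(H')\to V(H_1)$ by sending each vertex of $G$ to the role it plays in its $H_1$-copy. This is well defined:
\begin{itemize}
\item vertices of $A$ play $v_1$;
\item vertices of $C$ play $v_2$;
\item the fixed vertices of $W$ play $v_3,\dots,v_r$;
\item each vertex of $L_\ell$ lies in exactly one copy and has a unique role in $C_\ell$.
\end{itemize}
Edges of $G$ are images of edges of $H_1$, so $\Phi$ is a homomorphism. Take $L$ to be the set of vertices of $H'$ lying in $A\cup C\cup W$; then $\Phi(L)\subseteq S$. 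Two facts remain to check. First, each $L$-component of $H'$ lies inside a single gadget $C_\ell(i,j)$ together with its attached $S$-vertices, because every path between different gadgets passes through $A\cup C\cup W$, and $S$ is independent so no edges lie inside $A\cup C\cup W$. Hence each petal is isomorphic to a subgraph of an $S$-component of $H_1$. Second, the vertices of $L$ must be articulation points of $H'$.

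\textbf{Expected obstacle: the articulation-point condition.} This is where I expect the difficulty. A single vertex $w\in W$ is shared by all $m^2$ copies, and one vertex $i\in A$ is shared by $m$ copies. So $H'$ could contain cycles that pass through several gadgets, for example $i\to$ gadget $\to j\to$ gadget $\to i'\to\cdots\to i$. Such cycles make $H'$ not a tree of petals, and the vertices of $L$ on them need not be articulation points. To handle this I would not insist that all of $A\cup C\cup W$ be articulation points. Instead I would coarsen the petals: merge every set of gadgets joined by such cycles into one block.

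For $|S|=2$ the cycle argument works directly. A cycle through gadgets alternates between vertices of $A$ and $C$ using gadgets for different pairs $(i,j)$. The probabilistic choice of which vertices of $L_1,\dots,L_t$ realize which pair then makes such cycles expensive to find, exactly as in the birthday argument of Lemma~\ref{lem:lb}. Concretely, after $o(n^{1/4})$ queries the explored subgraph is, with probability $1-o(1)$, a forest of gadgets glued at hub vertices with no cycle through two gadgets. In that case the explored union is a cactus in the above sense, and a cactus-free $H$ cannot appear in it.

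This weaker target suffices. We only need that the tester does not \emph{discover} an $H$-copy; we do not need $G$ to be $H$-free. So the plan is:
\begin{enumerate}
\item Show that the discovered subgraph is, with high probability, a tree-like gluing of partial gadgets along $A\cup C\cup W$, reusing the augmented-oracle and birthday analysis of Lemma~\ref{lem:lb}.
\item Show that any $H$-copy inside such a subgraph yields a role mapping $\Phi$ making $(H,\Phi)$ a cactus, with $L$ taken to be the hub vertices that separate it.
\item Combine the two to contradict the existence of a constant-query one-sided tester.
\end{enumerate}
The fixed set $W$, common to all gadgets, needs extra care. Gluing many gadgets at one vertex $w\in W$ keeps $w$ an articulation point, but a cycle could run through two distinct vertices of $W$ and two different gadgets. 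Such a cycle closes cheaply, with no birthday cost. I would handle this by choosing $S$ among its obstacles so that it is minimal, or by showing that the two gadgets sharing $w_1,w_2$ already force a copy of $H_1$-like structure that the $W$-hubs render equally hard to find.
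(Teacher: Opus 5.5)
Your plan has a genuine gap at the step you flag yourself, and it is more serious than you suggest. Your Step~1 claims that after $o(n^{1/4})$ queries the explored subgraph is w.h.p.\ a tree-like gluing of gadgets. This is false whenever $|S|\ge 3$, even with $|W|=1$. Since $S$ is a minimal separating set, at least two components of $H_1\setminus S$ are adjacent to all of $S$. So in $G(H_1,S)$ every gadget built from such a component is adjacent to its $i\in A$, its $j\in C$, and the shared vertex $w\in W$. A tester can sample a random vertex and run a BFS inside its gadget to learn $i$ and $w$. It then queries $i$ a few more times and, with constant probability, obtains a second full gadget $C_{\ell'}(i,j')$, which is again adjacent to $w$. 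This exhibits the cycle $w\to C_\ell(i,j)\to i\to C_{\ell'}(i,j')\to w$ through two gadgets with $O(1)$ queries and no birthday cost. So your Step~2 has nothing to stand on. Nothing in your plan excludes an $H$ whose copies run through such cheaply closed cycles and are therefore easy to find in this distribution. Neither proposed fix works. In the contrapositive, $S$ is handed to you, and obstacles are already minimal separating sets. The multi-gadget structures through $W$ are easy to find, not hard.

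The missing ingredient is the standing assumption of this section, stated just before Definition~\ref{def:cactus}, that $H$-freeness is testable. The paper's proof invokes it explicitly, through Theorems~\ref{thm:2-conn} and~\ref{thm:2}. With it, your ``clean statement'' becomes true and fully deterministic, with no analysis of the explored subgraph:
\begin{itemize}
\item Let $H'$ be any $H$-copy in $G(H_1,S)$, and let $L$ be its set of hub vertices; $L$ is independent because $S$ is.
\item For every 2-block $B$ of $H'$, the set $B\setminus L$ is connected, since otherwise $B$, and hence $H$, would be non-testable. So $B\setminus L$ lies in a single component of $H'\setminus L$, and that component lies inside a single gadget.
\item Hence no cycle of $H'$ meets two distinct components of $H'\setminus L$, and every hub vertex adjacent to two such components is an articulation point.
\item Keep only those hub vertices in $L$; the others are absorbed into the unique component they touch. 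Each resulting $L$-component is a subgraph of one gadget together with that copy's $S$-vertices, i.e., a petal.
\end{itemize}
Since $\{H_1,H\}$-freeness is testable and $H_1$ is hard to find by Lemma~\ref{lem:lb}, $H$ must occur in $G(H_1,S)$, and the argument above gives the cactus. Your birthday argument is a reasonable route when $|S|=2$, but it cannot be extended to $|S|\ge 3$.
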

  \begin{proof}
Suppose that $\mathcal{H}=\{H_1, H\}$-freeness is testable, and let $S=\{v_1,\ldots,v_r\}$ be an obstacle with respect to $H_1$. We consider a distribution $\calD$ on adversarial graphs $\bG$ on $n$ vertices, constructed as in the proof of Lemma~\ref{lem:lb}. Specifically, every $\bG$ will have two sets $A$ and $C$, each of size $m = \Theta(\sqrt{n/k})$. These correspond to the roles $v_1$ and $v_2$. A set $F$ of constant size $r-2$, corresponding to the remaining roles $\{v_3, \dots, v_r\}$. For each component $C_t$ of $H_1 \setminus S$, we create a set $L_t$ containing $m^2$ copies of $C_t$. Using independent random permutations $\bpi_t$, each individual copy of a component is attached to a unique pair $(i,j)\in A\times C$.

We have seen in Lemma~\ref{lem:lb} that an $H_1$ copy cannot be found with constant probability using $O(1)$ queries. Hence, the only way $\mathcal{H}$-freeness can be tested with respect to the distribution above, is by finding an $H$-subgraph.  However, since $H$ is testable, if an $H$-copy in $G$ is separated by the $S$-role vertices, then each such $S$-role vertex is an articulation point for the $H$-copy (as otherwise, by Lemma~\ref{lem:lb} $H$ will not be testable). We conclude that the $S$-role vertices $L$, in any $H$-appearance in $G$, is a set of articulation points for $H$, and that any $L$-component of $H$ is a subgraph of an $S$-component of $H_1$. Hence $H$ with the correspondence homomorphism defined by the $S$-role mapping, and the ``subgraph mapping" above, forms a cactus with respect to $(H_1,S)$.
\end{proof}

\ignore{
    Let $S = \{u_1, u_2, \ldots ,u_r\}$ be an obstacle
     of $H_1$, let $C_1, \ldots C_r$ be the components of $H_1 \setminus S$,   and $G$, a
    $(p,1)$-graph  that  is $\epsilon$-far from $H_1$ as in the lower
    bound in Lemma~\ref{lem:lb}, where $u_3, \ldots ,u_r$ are the
    fixed vertices in $G$, that appear in all $H_1$-copies, and $u_1,u_2$ are
    the  $S$ vertices from $A \cup C$ in the construction in
    Lemma~\ref{lem:lb}.

    Then in order for
    $\mathcal{P}_{\mathcal{H}}$ to be testable, $G$ must contain
    linearly many copies of $H$. By construction, the only high degree
    vertices in $G$ are these taking the role of $S$, and further, these $H_1$-copies are
    disjoint except for vertices taking the role of
    $S$-vertices. Hence any $|S|$-vertices in $G$ taking the role of
    $S$-vertices in a $H_1$-copy is a separating set for $G$, and
    hence for $H$ unless $H$ is a $1$-petal cactus which is a
    subgraph of a component $C_i$ for some $i \in [r]$.

For any $H$ that is not a $1$-petal with respect to $(H_1,S)$, since
we have concluded that the $S$ vertices are separating for $H$, and
since $H$-freeness  is testable by our assumption, it follows that
each $S$-vertex appearing in $H$ is a separation point.  Thus
any component $C$, of $H$, with its corresponding
separating points from $S$ is a petal.}

We note that Lemma~\ref{lem:necessary1} states a necessary condition
which might not be sufficient for the testability of
$\mathcal{H}$. In particular $\{H_1, H\}$ of Figure~\ref{fig15} is not
testable, as shown by the graph $G$ constructed in the lower bound of
Lemma~\ref{lem:lb} where $b$ is the fixed unique vertex in all copies.
This is since in an $H$-appearance we must have two distinct vertices taking the role
of $b$ in any mapping $\Phi$, while in $G$, all copies of $H_1$ share the same $b$.
Thus we now need to characterize what cacti are testable.

$\mathcal{H}=\{H_1,H\}$ -freeness could be testable for $G$ for two
reasons. It could be that for an input graph $G$, that is $\eps$-far
from being $H_1$-free, $G$ has linearly many edge disjoint copies of
$H$. But, it could also be the case that $G$ might be $H$-free, and
while $H_1$-freeness is not testable in general, for the input graph
$G$, one can find an $H_1$-copy easily. Thus we need, in a sense, to
characterize such input graphs. This is done in the following section.

\subsection{On $G$'s that are testable with respect to a hard to test $H_1$}\label{sec:char-G}





Let $G$ be $\eps$-far from being $H_1$-free. Recall that by using Lemma~\ref{cl:bipar} for $h\ge4p^2/\eps$, we may assume that $G$ is semi-bipartite with respect to $\Heavy_h$. Additionally, for $\delta\le \eps/4kp^2$, Lemma~\ref{cl:full-degree} states that there is a collection of at least $\Omega(\eps n/kp)$ edge-disjoint $H_1$ appearances for which every edge incident to a vertex $v \in \Heavy_h$ participates in one of the appearances. While this property guarantees high-density of $H_1$-appearances, it does not ensure that a heavy vertex $v\in \Heavy_h$ plays a consistent role across difference appearances. To facilitate the discovery of $H$, we will strengthen the above property to ensure role consistency with respect to the obstacle $S$.


\begin{definition}[Role-Preserving Property]\label{def:P3}
Let $H_1$ be a graph and $S = \{v_1, \dots, v_r\} \subset V(H_1)$  an obstacle. Let $\mathcal{K} = \{ (K, \phi) \}$ be a collection of appearances of $H_1$ in a graph $G$, where each $K \subseteq G$ is a subgraph and $\phi: V(H_1) \to V(K)$ is its corresponding role mapping (isomorphism). 

The collection $\mathcal{K}$ is \emph{role-preserving} with respect to $S$ if there exists a global role assignment $\rho: \Heavy_h \to S$ such that for every appearance $(K, \phi) \in \mathcal{K}$ and every vertex $v \in V(K) \cap \Heavy_h$:
\[ \phi^{-1}(v) = \rho(v). \]
We say that $G$ is \emph{role-preserving} with respect to $S$ if there exists a role-preserving collection $\calK$ with respect to $S$.
\end{definition}

\begin{lemma} \label{lem:role-cons} Let $H_1$ be a graph and $S=\{v_1,\ldots,v_r\}$ be an obstacle in $H_1$. Let $G=(V,E)$ be a graph which is $\eps$-far from $H_1$-freeness. Then, there exists a subgraph $G'\subseteq G$ and a collection of edge-disjoint $H_1$ appearances (see Definition~\ref{def:good}), $\calK'=\{(K,\phi)\}$ in $G'$ of size $|\calK'|\ge \frac{\eps n}{2kp\cdot r^r}$ that satisfy the role preserving property (Definition~\ref{def:P3}).
\end{lemma}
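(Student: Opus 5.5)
Start from the edge-disjoint collection $\calH(G)$ of $H_1$-appearances of size at least $\eps n/kp$ given by Lemma~\ref{lemma:edge-disjoint-app}; these can be taken after the reduction of Lemma~\ref{cl:bipar}, so that $\Heavy_h$ is independent. Each appearance comes with a role mapping $\phi:V(H_1)\to V(K)$. Restricting attention to heavy vertices that play roles in $S$, the aim is a random global assignment $\boldsymbol{\rho}:\Heavy_h\to S$ that keeps a large fraction of appearances consistent.

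\textbf{Step 1 (forcing heavy vertices into $S$-roles).} First we discard appearances in which a heavy vertex plays a role outside $S$, or else show they are few. The cleanest route is to define the target collection so that every heavy vertex in it lies in $\phi(S)$. If many appearances have a heavy vertex in a non-$S$ role, that case is set aside: the subsequent analysis (Section~\ref{sec:char-G}) treats separations by heavy vertices, and only appearances whose heavy vertices sit inside $S$-roles are relevant to the obstacle $S$. Whether the lemma implicitly assumes this reduction is the main point I would need to settle. My first attempt would be to restrict $\calH(G)$ to the sub-collection whose heavy vertices all map into $S$. I would argue this sub-collection still has size $\Omega(\eps n/kp)$ in the regime of interest, and otherwise fall back to the case handled by Lemma~\ref{lem:all-light-test} or the upper-bound argument of Lemma~\ref{lem:ub}.

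\textbf{Step 2 (random role assignment).} Choose $\boldsymbol{\rho}(v)\in S$ independently and uniformly for each $v\in\Heavy_h$. Fix an appearance $(K,\phi)$; it contains at most $r$ heavy vertices, all in $S$-roles. The probability that $\boldsymbol{\rho}(v)=\phi^{-1}(v)$ for all of them is at least $r^{-r}$. By linearity of expectation, the expected number of consistent appearances is at least $|\calH(G)|/r^r$, so some fixed $\rho$ achieves this. Let $\calK'$ be the consistent appearances; they remain edge-disjoint as a sub-collection. Set $G'$ to be the union of their edges, or $G$ with the inconsistent appearances' edges removed; either is a subgraph satisfying Definition~\ref{def:P3}. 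A derandomized alternative is to pick, for each heavy $v$, the most popular $S$-role among its appearances, weighted appropriately. The product over several heavy vertices is what makes the random argument simpler.

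\textbf{Step 3 (counting).} With $|\calH(G)|\ge \eps n/kp$ and the loss in Step 1 being at most a factor $2$, we get $|\calK'|\ge \eps n/(2kp\, r^r)$, matching the statement. The main obstacle is justifying Step 1's factor-$2$ loss; the role-assignment step itself is a routine averaging argument.
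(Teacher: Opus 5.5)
Your Step~2 is essentially the paper's argument. The paper colors every vertex uniformly from $[r]$ and keeps the appearances in which $\phi(v_1),\dots,\phi(v_r)$ get distinct colors (probability $r!/r^r$). It then fixes the most popular color-to-role bijection $\pi$, losing a factor $r!$, and sets $\rho=\pi\circ\ell^*$. This amounts to your uniform random $\boldsymbol{\rho}$ with success probability $r^{-r}$.

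The gap is Step~1, on which your Steps~2 and~3 rest. The paper has no such step. Its factor $2$ comes from starting with the $\delta$-good sub-collection of Lemma~\ref{cl:full-degree}, of size $(\frac{\eps}{kp}-2\delta p)n\ge\frac{\eps n}{2kp}$ for $\delta=\eps/4kp^2$. Its coloring only synchronizes vertices that occupy $S$-roles, so what it actually establishes is $\rho(v)=\phi^{-1}(v)$ for $v\in\phi(S)\cap\Heavy_h$. You never justify that discarding appearances with a heavy vertex in a non-$S$ role costs at most a factor $2$. The fallbacks do not close the case either:
\begin{itemize}
\item Lemma~\ref{lem:all-light-test} concerns appearances with no heavy vertex at all.
\item Lemma~\ref{lem:ub} assumes $H_1$ has no separating independent set, which contradicts the existence of the obstacle $S$.
\item Both lemmas produce testers, not the collection $\calK'$ whose existence is claimed.
\end{itemize}

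Moreover, the factor-$2$ claim is false in general, so Step~1 cannot be repaired quantitatively. Take $H_1=C_6$ on $1,\dots,6$ with obstacle $S=\{1,4\}$. Let $G$ be the graph of Definition~\ref{def:LB_construction} built for the other obstacle $\{1,3\}$: every hub pair $(i,j)\in A\times C$ is joined by a $2$-path and a $4$-path through degree-$2$ vertices. Then:
\begin{itemize}
\item $G$ is $2$-degenerate and $\Omega(1)$-far from $C_6$-freeness.
\item Its only $6$-cycles are the $m^2$ canonical ones.
\item In each of them the two heavy hubs are at distance $2$, so under every isomorphism one of them gets a role outside $\{1,4\}$.
\end{itemize}
Thus no appearance qualifies under the literal Definition~\ref{def:P3}. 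Passing to a subgraph $G'$ does not help: a hub that is light in $G'$ lies in fewer than $h/2$ of the edge-disjoint appearances, so at most $O(h\sqrt n)$ appearances could qualify.

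Your suspicion that something had to be settled here was right. The resolution, however, is to weaken the requirement to what the paper's proof delivers: consistency only for heavy vertices in $\phi(S)$. Under that reading, drop Step~1 and run Step~2 only on $\phi(S)\cap\Heavy_h$, with success probability $r^{-|\phi(S)\cap\Heavy_h|}\ge r^{-r}$. Starting from the $\delta$-good collection then gives exactly $\frac{\eps n}{2kp\,r^r}$, as in the paper.
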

\begin{proof} Fix $\delta=\frac{\eps}{4kp^2}$, and consider a collection $\calK=\{(K,\phi)\}$ of at least $\eps/2pk$ $\delta$-good edge-disjoint $H_1$-appearances in $G$ as guaranteed by Lemma~\ref{cl:full-degree} and $\phi:V(H_1)\to V(K)$ is an isomorphism that assigns the roles to the vertices of $K$.

To isolate a sub-collection where the appearances are roles-preserving, we use a probabilistic argument. We assign to every vertex $v \in V(G)$ a color $\bell(v)$ chosen uniformly at random from the set $[r]$. An appearance $(K,\phi)$ is \emph{distinctly colored} if all the vertices in the obstacle $S$ receive different colors. Since $|S|=r$, the probability that $\{\bell(\phi(v_1)),\ldots,\bell(\phi(v_r))\}$ are all distinct is $\frac{r!}{r^r}$. 

Let $\bZ$ be the number of distinctly colored appearances in $\calK$. Then, $\Ex[\bZ]=|\calK|\cdot \frac{r!}{r^r}$. Fix a coloring $\ell^*$ achieving this. In each such appearance, there are $r!$ possible bijections between the $r$ colors and the $r$ roles in $S$. Therefore, there must exist a bijection $\pi$ that is used by at least $1/r!$-fraction of these appearances. Let $\calK'$ be this collection whose size is $|\calK'|\ge \frac{1}{r!}\left(\frac{\eps n}{2pk}\cdot \frac{r!}{r^r}\right)=\frac{\eps n}{2pk\cdot r^r}$. Thus, for any $v\in\Heavy_h$, setting $\rho(v)=\pi(\ell^*(v))$ ensures that $\phi^{-1}(v)=\rho(v)$ for every $(K,\phi)\in \calK'$.
\end{proof}

We will need to apply an additional pruning step in order to ensure that each high degree vertex participate in many appearances in $\calK$ (this way our tester will be able to discover such appearances with sufficient probability). For a vertex $v$, we define the \emph{degree} of $v$ in $\calK$ as $\deg_{\calK}(v)= |\{(K,\phi)\in \calK: v\in V(K)\}|$.

\begin{lemma}\label{lem:role+good} Fix $\delta=\eps/4kp^2$, $h=4p^2/\eps$ and let $H_1$ be a graph and $S=\{v_1,\ldots,v_r\}$ be an obstacle in $H_1$. Let $G=(V,E)$ be a graph which is $\eps$-far from $H_1$-freeness. Then, for any $0<\gamma\le \frac{\delta}{2k^k}$, there exists a subgraph $G'\subseteq G$ and a collection of $H_1$ appearances $\calK=\{(K,\phi)\}$ in $G'$ satisfying the following:
\begin{enumerate}
    \item $|\calK|\ge \frac{\eps n}{4pk^{k+1}}$.
    \item $\calK$ is role preserving with respect to $S$.
    \item If $v\in \Heavy_h\cap V(K)$ for some $K\in \calK$, then $\deg_\calK(v)\ge \gamma \deg(v)$.
\end{enumerate}
We call such a collection \emph{$\gamma$-good role-preserving}.
\end{lemma}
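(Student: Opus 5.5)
Start from the role-preserving collection of Lemma~\ref{lem:role-cons}, then prune it iteratively so that every heavy vertex that survives has $\calK$-degree proportional to its full degree. Concretely, apply Lemma~\ref{cl:bipar} (with $h=4p^2/\eps$) and Lemma~\ref{lem:role-cons} to obtain $G'\subseteq G$ and an edge-disjoint, role-preserving collection $\calK_0$ of $\delta$-good $H_1$-appearances, with $|\calK_0|\ge \frac{\eps n}{2kp\cdot r^r}\ge \frac{\eps n}{2pk^{k}}$ (using $r\le k$). Role preservation is inherited by every sub-collection, so the same global assignment $\rho$ witnesses property~2 throughout, and only properties~1 and~3 need work.

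\textbf{Pruning.} While some heavy vertex $v$ lies in an appearance of the current collection but has $\deg_{\calK}(v)<\gamma\deg(v)$, delete all appearances containing $v$. A vertex, once pruned, has no appearances left and is never touched again, so each heavy vertex is pruned at most once. When it is pruned it costs fewer than $\gamma\deg(v)$ appearances. The total loss is therefore at most $\gamma\sum_{v\in\Heavy_h}\deg(v)\le 2\gamma pn$, by $|E(G)|\le pn$. The process stops at a collection $\calK$ satisfying property~3, and $G'$ is taken to be the union of the edges of $\calK$, or simply the earlier $G'$.

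\textbf{Counting.} It remains to check property~1: $\frac{\eps n}{2pk^k}-2\gamma pn\ge\frac{\eps n}{4pk^{k+1}}$. With $\gamma\le\delta/(2k^k)=\frac{\eps}{8k^{k+1}p^2}$ we get $2\gamma p n\le \frac{\eps n}{4pk^{k+1}}$, so the bound holds since $\frac{1}{2k^k}-\frac{1}{4k^{k+1}}\ge\frac{1}{4k^{k+1}}$. This arithmetic is the step where the constants must line up. If the bound $r^r\le k^k$ or the factor from Lemma~\ref{lem:role-cons} leaves less room than stated, the fallback is to count surviving edges instead of appearances: the removed edges number at most $k^2\cdot 2\gamma pn$, and since the appearances are edge-disjoint this translates back into a count of appearances. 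The only conceptual subtlety is that pruning is monotone, in that removing appearances can only lower other vertices' $\calK$-degrees. This is exactly why the argument charges each vertex once at the moment of its own removal, rather than trying to bound cascades.
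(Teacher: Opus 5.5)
Your proof is correct and is essentially the paper's argument: the paper says it is obtained ``in the exact same manner as Lemma~\ref{cl:full-degree},'' i.e., by charging the discarded appearances to low-degree heavy vertices for a total loss of at most $\gamma\sum_{v}\deg(v)\le 2\gamma pn$, and your iterative pruning, charging each vertex at the moment it is removed, is the right way to make property~3 hold for the final collection, which a one-shot removal would not guarantee because of the cascade you point out. One small slip: $r\le k$ by itself only gives $|\calK_0|\ge \frac{\eps n}{2pk^{k+1}}$ rather than $\frac{\eps n}{2pk^{k}}$, but this still yields $\frac{\eps n}{2pk^{k+1}}-\frac{\eps n}{4pk^{k+1}}=\frac{\eps n}{4pk^{k+1}}$, so no fallback is needed (and the edge-counting fallback would only lose more).
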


The proof of the above is obtained in the exact same manner as Lemma~\ref{cl:full-degree}.

Assuming that every $G$ that is $\epsilon$-far from being $H_1$-free
has the role-preserving property with respect to (some) $S$, we make the following definition that will
provide a sufficient condition under which we can find an $H_1$ appearance in $G$ although testing
$H_1$-freeness  is not testable in general. For this we need the following
definition that is stronger than what is directly needed for the purpose
above, but that will be needed to the testability of the family
$\{H_1, H\}$.

\begin{definition}[Dependency Digraph] \label{def:digraph}
Let $\calK'=\{(K,\phi)\}$ be a role-preserving collection of $H_1$-appearances with global role mapping $\rho:\Heavy_h\to S$. For a fixed $\gamma\in(0,1)$, the \emph{Dependency Digraph} $D_\gamma(\calK')$ is a directed graph on the vertex set $S$ defined as follows:
For any ordered pair of roles $(s_i, s_j) \in S \times S$, there exists a directed edge $s_i \to s_j$ in $D_{\gamma}(\calK')$ if there exists a subset $U_i \subseteq \{v \in \Heavy_h : \rho(v) = s_i\}$ such that:
\begin{enumerate}
\item 
     $\sum_{u \in U_i} \deg_{\calK'}(u) \ge \gamma \sum_{\{v \in \Heavy_h : \rho(v) = s_i\}}
\deg_{\calK'}(v)$ 
    \item For every $u \in U_i$, there exists a partner $v \in \Heavy_h$ with $\rho(v) = s_j$ such that \[|\{(K,\phi)\in \calK': u\in V(K)\; \text{and }\phi(s_j)=v\}|\ge \gamma\deg_{\calK'}(u).\]
\end{enumerate}
\end{definition}

\begin{definition}[locked edge]Fix $\gamma\in(0,1)$ and  let $\calK'=\{(K,\phi)\}$ be a role-preserving collection of $H_1$-appearances. We say that an edge $s_i\to s_j$ in $D_\gamma(\calK)$ is \emph{locked} if for every $u\in U_i$, there exists a partner $v \in \Heavy_h$ with $\rho(v) = s_j$ such that \[|\{(K,\phi)\in \calK': u\in V(K)\; \text{and }\phi(s_j)=v\}|=\deg_{\calK'}(u).\]
\end{definition}


E.g., for a $2$-set $S = \{a,b\}$,  the possible digraphs could
be: (1) - the empty graph, (2) the graph that contains only one edge,
say $a \rightarrow b$ and (3) the graph that contains two anti-parallel edges. 

To see the significance of $D_\gamma(\calK')$, consider the example of the $2$-size set $S$, and the graph $H_1$ from
Figure~\ref{fig14}. If $G$ has just two fixed vertices $a,b$ connected
to $n/3$ components of size $2$, and $n/3$ additional distinct
vertices, $G$ will be $1/3$-free from being $H_1$-free. Further, its
associate digraph is the type (3), as in all appearances $a,b$ appear
together, all edges of the digraph are locked.  This graph is clearly testable for $H_1$, as finding one
component $C$ of $(H_1 \setminus \{a,b\})$, will find $a,b$ and then all other
components by taking neighbors of $a$ or $b$ and BFS from there.

The extreme case is the type (1) digraph: this occurs e.g., in
our lower bound graph as constructed in the proof of
Lemma~\ref{lem:lb}. Thus in this case, we cannot find an $H_1$ in $G$ with high probability. The
``in-between" case of type (2) is similar to type (3), and in which it
is easy to find an $H_1$-appearance as follows:  Finding a
component $C$ with $a,b$ will let us find another component $C'$
attached to the same pair $a,b$, by making BFS from, say $a$ if we
have the edge $a \rightarrow b$. This will make sure that we find the same $b$
(although $b$ does not always appear with $a$), since $a$ appears
mostly with the same $b$.

The following theorem generalizes the observation that if the corresponding digraph is a tournament (possibly with some antiparallel edges), and in which all edges are locked, then a $H_1$-appearance can be easily found. The idea is that while in general, the relation formed by the digraph above may not be transitive. In the case where $a \rightarrow b,~ b \rightarrow c$ are two locked edges, it forces the existence of the locked edge $a \rightarrow c$.

\begin{lemma}\label{thm:characterization-G,H_1} Fix $\delta=\frac{\eps}{4kp^2}$ and $0<\gamma\le\frac{\delta}{2k^k}$. Let $H_1$ be a $2$-connected graph with an obstacle $S=\{s_1,\ldots, s_r\}$. Suppose that $G$ is $\eps$-far from $H_1$-freeness and let $\calK=\{(K,\phi)\}$ be a $\gamma$-good role preserving collection with respect to $S$ as guaranteed by Lemma~\ref{lem:role+good}.
If the dependency digraph $D_\gamma(\calK)$ is a tournament (with possibly anti-parallel directed edges) where all edges are locked, then there exists a constant (depending on $k,p,\eps$) query tester that finds an $H_1$ appearance in $G$ with probability at least $2/3$.
\end{lemma}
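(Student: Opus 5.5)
The plan is to design a tester that (i) samples a light vertex lying in some appearance of $\calK$, (ii) uses Bounded-BFS on the light part to recover one full $S$-component of that appearance together with the heavy vertices attached to it, and (iii) uses the locked tournament structure to jump from a discovered heavy vertex to the remaining heavy vertices and components of the \emph{same} appearance. We may assume, by Lemma~\ref{lem:all-light-test} and Lemma~\ref{lem:role+good}, that $G$ is semi-bipartite with respect to $\Heavy_h$ and that $\calK$ is $\gamma$-good and role preserving. Since $S$ is an obstacle, it is independent, so every vertex of $H_1\setminus S$ mapped into $\Heavy_h$ would contradict role preservation. Hence, after discarding appearances where non-$S$ roles are heavy (these are handled as in Lemma~\ref{lem:ub}, since removing an independent set that is not separating leaves a connected light part), we may assume that in each $(K,\phi)\in\calK$ exactly the vertices $\phi(S)$ can be heavy. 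Every component $C$ of $H_1\setminus S$ is then realised by light vertices.

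The key structural step is to use lockedness. First I would show that if $s_i\to s_j$ is locked, then for every $u\in U_i$ there is a unique partner $v=:\beta_{ij}(u)$ with $\phi(s_j)=v$ in \emph{every} appearance of $\calK$ containing $u$. Because $D_\gamma(\calK)$ is a tournament, for every pair $\{s_i,s_j\}$ at least one direction is locked. I would then prune $\calK$ (losing only a $\gamma$-dependent constant fraction of appearances, using condition 1 of Definition~\ref{def:digraph} and the $\gamma$-goodness bound $\deg_\calK(u)\ge\gamma\deg(u)$) to the appearances whose heavy vertices all lie in the relevant sets $U_i$. In the pruned collection, the heavy tuple $\phi(S)$ is determined by any single one of its members, up to chaining partners. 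If $s_i\to s_j$ and $s_j\to s_l$ are locked, I would check transitivity as well: $u$ determines $v$ and $v$ determines $w$, so every appearance through $u$ uses $w$ in role $s_l$. The same argument handles $s_j\to s_i$, using the reverse direction.

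Next comes the tester. Sample $O(1/\eps')$ random vertices. With constant probability we hit a light vertex in a component $C$ of some pruned appearance $K$, and Bounded-BFS explores $C$ and reveals its heavy attachments $\phi(S)\cap N(C)$. Let $u$ be such a heavy vertex, which is a source in the tournament order or else is reached via chaining. Query $u$'s random neighbours $O(k/\gamma^2)$ times. Since $\deg_\calK(u)\ge\gamma\deg(u)$ and every appearance through $u$ has the same heavy tuple, each sample lands in a light component of some appearance with the \emph{same} heavy tuple with probability at least $\gamma$. A BFS from each sample recovers that component and its heavy neighbours. Edge-disjointness of $\calK$ guarantees that distinct samples give distinct component copies, and by a coupon-collector argument over the $t$ component types we collect a copy of each component $C_1,\dots,C_t$ attached to the same tuple. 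Their union is an $H_1$ appearance, and repeating a constant number of times yields success probability $2/3$.

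I expect the main obstacle to be the claim that the heavy tuple is determined by a \emph{single} reachable heavy vertex when some components of $H_1\setminus S$ attach only to subsets of $S$. The tournament only gives pairwise determinations in one direction, so I must pick, within the discovered attachments, a vertex from which every other role is reachable by locked edges. My first attempt would use the fact that every tournament has a king: a vertex reaching all others within two steps. Combined with transitivity of locked edges, this vertex reaches all others directly. Since the $U_i$ sets carry a $\gamma$ fraction of the $\calK$-degree, the king role is hit with constant probability by sampling components adjacent to it.
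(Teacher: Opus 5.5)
Your route matches the paper's. You take a root role from which every role is reachable in the locked tournament (the paper uses the start of a Hamiltonian path, you use a king). You reach that root's hub by \textsf{Bounded-BFS} from a sampled light vertex, then sample the hub's neighbours to assemble all components on one separator. The step that fails is making a single hub determine the whole heavy tuple. A locked edge $s_i\to s_j$ constrains only the hubs of its own witness set $U_i$. That set depends on $j$ and carries only a $\gamma$-fraction of the $s_i$-mass. Your transitivity step needs $\beta_{ij}(u)$ to lie in the witness set of $s_j\to s_l$, and your king needs its hub to lie in the witness sets of several outgoing edges at once. Condition~1 of Definition~\ref{def:digraph} controls each witness set separately, and sets of relative mass $\gamma$ can be disjoint. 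So your pruning ``to the appearances whose heavy vertices all lie in the relevant sets $U_i$'' can delete every appearance, and whatever survives need not be a tournament.

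Here is a concrete instance. Let $H_1$ be two disjoint triangles $x_1x_2x_3$, $y_1y_2y_3$ plus $S=\{a,b,c\}$, with $a\sim x_1,y_1$, $b\sim x_2,y_2$, $c\sim x_3,y_3$. Build $G$ from three gadgets with $M=\Theta(\sqrt n)$, giving every appearance fresh triangles. The heavy triples $(\phi(a),\phi(b),\phi(c))$ are $(u_i,v_i,w_j)$, $(u'_j,v'_i,w'_i)$ and $(u''_i,v''_j,w''_i)$ for $i,j\in[M]$.

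\begin{itemize}
\item All six directed edges of $D_\gamma(\calK)$ exist and are locked. Their witness sets are the paired hubs of the corresponding gadget, each carrying a third of the mass.
\item No hub has a unique triple. The witness sets of $a\to b$ and $a\to c$ are disjoint, and the partner of a hub in the witness set of $a\to b$ never lies in that of $b\to c$.
\item Restricting to the first gadget leaves $\{a,c\}$ and $\{b,c\}$ independent.
\item The designated appearances are the only $H_1$-copies, so after random relabelling the birthday argument of Lemma~\ref{lem:lb} forces $\Omega(n^{1/4})$ queries.
\end{itemize}

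So with ``locked'' as literally defined, the statement itself fails for $|S|\ge 3$. The paper's proof has the same hole in the sentence ``by transitivity \ldots\ the entire obstacle $\Phi$ is uniquely determined''.

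The missing idea is strong locking: every $s_i$-hub of the collection, not only those in $U_i$, has a unique $s_j$-partner. The collection $\calK'_{i+1}$ of Lemma~\ref{lem:iter-prune} has this property for the edge it prunes. The case where such pruning destroys the tournament belongs to the surrounding case analysis, not to this lemma. Under strong locking your king-plus-chaining argument works with no pruning at all.

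Your stated worry is moot: minimality of $S$ forces $N(C)=S$ for every component $C$ of $H_1\setminus S$, so the king hub sees every component type. For $|S|=2$ no chaining arises and your plan goes through. In every case you still need the collected components to be vertex-disjoint, not merely distinct, and edge-disjointness of $\calK$ does not give that by itself.
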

\begin{proof} Consider the collection $\calK=\{(K,\phi)\}$ of edge-disjoint $H_1$-appearances that is good role-preserving with respect to $S$. By definition, there exists a global role mapping $\rho:\Heavy_h\to S$, where each heavy vertex $v$ appearing in $\calK$ is assigned a unique, fixed role $\rho(v)\in S$ across all its appearances in the collection. By the assertion of the lemma $D_\gamma(\calK)$ is a tournament and every pair of roles $(s_i,s_j)$ is \emph{locked}. 

Recall that $h=\frac{4p^2}{\eps}$ and consider the set $U$ of light vertices which participate in an appearance in $\calK$. By the fact that the collection is edge-disjoint, $|U|\ge \frac{|\calK|}{h}\ge \frac{\eps n}{4phk^{k+1}}=\eps_1n$. Therefore, with probability at least $\eps_1$ a uniformly random vertex $\bv\sim V(G)$ belongs to $U$. 

Since $D_\gamma(\calK)$ is a tournament, it contains a spanning arborescence\footnote{An arborescence is a directed graph where there exists a vertex $r$ (called the root) such that, for any other vertex $v$, there is exactly one directed path from $r$ to $v$.} rooted at some role $s_i \in S$ (specifically, the arborescence is an Hamiltonian path). By transitivity, for any physical vertex $x \in V(G)$ satisfying $\phi(s_i) = x$, the entire obstacle $\Phi= \{\phi(s_1), \dots, \phi(s_r)\}$ is uniquely determined. Specifically, for any role $s_j \in S$, the physical vertex $v_j = \phi(s_j)$ is uniquely fixed by the composition of the mappings $f$ along the unique directed path from $s_i$ to $s_j$ in the arborescence.

In order to show that the canonical tester finds a copy of $H_1$, suppose $\bv$ belongs to an  $S$-component $C$ of an appearance $(K, \phi)$. By the $2$-connectivity of $H_1$ and the minimality of $S$, there exists a path of length at most $k$ from $\bv$ to the specific hub $x = \phi(s_i)$.  By $\gamma$-goodness, at each step, the probability of staying within $\calK$ is at least $\gamma$. Therefore, a \textsf{Bounded-BFS} identifies this root $x$ with probability at least $\gamma^k$. Once $x$ is identified, the physical separator $\Phi$ is fixed for all appearances in $\calK$ containing $x$ as role $s_i$. 

 When the tester queries a random neighbor of $x$, with probability at least $\gamma$, it hits an appearance $(K', \phi') \in \calK$ where $\phi'(s_i) = x$. Since the  tournament is locked, we have $\phi'(S) = \Phi$. Since $H_1\setminus S$ has constant number of components, then by performing $O(\gamma^{-k})$ queries from $x$, the tester identifies physical subgraphs isomorphic to all $m$ components. Since all such components share the identical physical separator $\Phi$, their union in $G$ forms a valid copy of $H_1$. \end{proof}

For cases where the dependency graph $D_\gamma(\calK)$ is not a locked tournament graph (when $|S|>2$), we apply an additional pruning to the set $\calK$.

\begin{lemma}\label{lem:iter-prune}
Let $\calK_0$ be a $\gamma_0$-good role-preserving collection of $H_1$ appearances as guaranteed by Lemma~\ref{lem:role+good}, and let $S$ be the set of roles. For $t<|S|^2$, there exists a sequence of collections $\calK_0 \supseteq \calK_1 \supseteq \cdots \supseteq \calK_t$ and a sequence of constants $\gamma_0 > \gamma_1 > \cdots > \gamma_t > 0$ such that for all $0 \le i < t$, the following hold: 
\begin{enumerate}
    \item the volume satisfies $|\calK_{i+1}| \ge \frac{4}{5} \gamma^2 |\calK_i|$. 
    \item the collection $\calK_{i+1}$ is $\gamma_{i+1}$-good for $\gamma_{i+1} = \frac{\gamma^2 |\calK_i|}{10pn}$.
    \item the digraph $D_\gamma(\calK_{i+1})$ contains at least one more locked directed edge than $D_\gamma(\calK_i)$ while preserving all previously locked edges.
    \item Either $D_\gamma(\calK_t)$ contains an independent pair of vertices (i.e., with no edge between them), or $D_\gamma(\calK_t)$ is a tournament where every directed edge is locked.
\end{enumerate}
\end{lemma}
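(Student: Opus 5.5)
We will build the sequence by induction on $i$. Each step makes one unlocked edge locked by restricting to a sub-collection. The process stops once condition 4 holds. Since $D_\gamma$ has at most $|S|(|S|-1)<|S|^2$ possible directed edges, and previously locked edges are never destroyed, the process stops after $t<|S|^2$ steps.

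\textbf{Inductive step.} Suppose $\calK_i$ is $\gamma_i$-good and role-preserving, and condition 4 fails for $D_\gamma(\calK_i)$. Then every pair of roles is joined by at least one edge, but some edge $s_a\to s_b$ is not locked. By Definition~\ref{def:digraph} there is a set $U_a$ of heavy vertices of role $s_a$ with the following properties. It carries at least a $\gamma$-fraction of the $\calK_i$-degree of role $s_a$. Each $u\in U_a$ has a partner $f(u)$ of role $s_b$ that accounts for at least $\gamma\deg_{\calK_i}(u)$ of the appearances through $u$.

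We define $\calK_{i+1}'$ by keeping, for each $u\in U_a$, only the appearances $(K,\phi)$ with $u\in V(K)$ and $\phi(s_b)=f(u)$. We discard every other appearance that contains a role-$s_a$ vertex. Summing over $U_a$ gives at least $\gamma\cdot\gamma$ times the $s_a$-volume of $\calK_i$. The $s_a$-volume is at least $|\calK_i|$, since every appearance contains exactly one role-$s_a$ vertex, and that vertex is heavy in the hard case. So $|\calK'_{i+1}|\ge\gamma^2|\calK_i|$.

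In $\calK'_{i+1}$ every surviving role-$s_a$ vertex $u$ sees only the partner $f(u)$. Hence $s_a\to s_b$ is locked, witnessed by all role-$s_a$ vertices, and the $\gamma$-volume condition holds trivially.

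Already-locked edges $s_c\to s_d$ stay locked. The locking condition is universal over the appearances through a vertex, so it is inherited by every sub-collection. The only risk is that the witness set $U_c$ loses its $\gamma$-fraction of volume. I would argue that we can take $U_c$ to be all surviving role-$s_c$ vertices, because locked means each vertex of $U_c$ has a unique partner for all its appearances. Vertices outside the old $U_c$ would need to be handled by the re-cleaning described next, and I would check that carefully.

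\textbf{Restoring goodness.} Deleting appearances may leave some heavy vertex $v$ with $\deg_{\calK'_{i+1}}(v)<\gamma_{i+1}\deg(v)$. We iteratively remove all appearances through such vertices, exactly as in Lemma~\ref{cl:full-degree}. The total number of removed appearances is at most $\sum_{v\in\Heavy_h}\gamma_{i+1}\deg(v)\le 2p n\gamma_{i+1}$. Choosing $\gamma_{i+1}=\frac{\gamma^2|\calK_i|}{10pn}$ makes this at most $\frac{1}{5}\gamma^2|\calK_i|$. This leaves $|\calK_{i+1}|\ge\frac45\gamma^2|\calK_i|$, which gives items 1 and 2. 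Removing appearances cannot unlock an edge, since locking is a universal statement, so item 3 survives the cleaning apart from the volume clause discussed above.

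Since $|\calK_i|$ remains linear in $n$ throughout, $\gamma_{i+1}$ is a positive constant depending only on $k,p,\eps$, and the sequence is decreasing.

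\textbf{Main obstacle.} The delicate point is the interaction between digraph edges and the fixed threshold $\gamma$. Pruning can create new edges, which is harmless. It can also destroy unlocked edges, or shrink a locked edge's witness volume below $\gamma$, and possibly turn a tournament into a digraph with an independent pair. The latter is acceptable, since it is one of the two outcomes of item 4.

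So I would state the termination argument as follows. At each step, either an independent pair appears, and we stop, or the number of locked edges strictly increases. To secure the locked-edge volume, I would restrict at each stage to role-$s_c$ vertices lying in the locked witness sets. This costs at most another factor of $\gamma$, which would be absorbed into the constants. I would verify that the stated bound $\frac45\gamma^2$ can be recovered, perhaps with $\gamma$ understood as the current $\gamma_i$.
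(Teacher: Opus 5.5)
Your proposal is essentially the paper's proof. You restrict to appearances whose role-$s_a$ vertex $u$ lies in $U_a$ and whose role-$s_b$ vertex is the designated partner, which gives $|\calK'_{i+1}|\ge\gamma^2|\calK_i|$; you clean with $\gamma_{i+1}=\gamma^2|\calK_i|/(10pn)$ at a loss of at most $2pn\gamma_{i+1}=\frac15\gamma^2|\calK_i|$ appearances; and you terminate by counting locked edges. Your iterative cleaning is more careful than the paper's one-pass removal of appearances through $B_{i+1}$, which alone does not guarantee $\gamma_{i+1}$-goodness.

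The preservation of old locked edges, which you leave open, is simply asserted in the paper. It closes once you note that every edge locked by the procedure is witnessed by \emph{all} surviving role-$s_a$ vertices, and this is inherited by every sub-collection. So only edges already partially locked in $\calK_0$ need attention. Handle these one per step as pruning targets, i.e.\ count only fully locked edges; each such step costs a factor $\gamma\ge\gamma^2$. Restricting to all witness sets simultaneously, as you suggest, can cost far more than one factor of $\gamma$, since the restrictions for different edges need not overlap.
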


\begin{proof}The proof follows by induction. Let $\calK_0$ be the initial $\gamma_0$-good role-preserving collection. If $D_\gamma(\calK_0)$ already contains an independent pair (meaning a pair of roles $s_i, s_j$ with no directed edge in either direction) the lemma is satisfied for $t=0$. Otherwise, suppose we are at step $i$ where $D_\gamma(\calK_i)$ contains no independent pair but is not yet a fully locked tournament. This implies there exists a pair of roles $\{s_a, s_b\}$ such that a directed edge $s_a \to s_b$ exists in $D_\gamma(\calK_i)$ but is not yet locked.

By the definition of the dependency digraph, the existence of the edge $s_a \to s_b$ ensures there is a subset of vertices $U_a \subseteq V(G)$ such that $\sum_{u \in U_a} \deg_{\calK_i}(u) \ge \gamma |\calK_i|$, and for every $u \in U_a$, there exists a unique partner $v_u$ satisfying the  condition $$|\{(K, \phi) \in \calK_i : \phi(s_a) = u, \phi(s_b) = v_u\}| \ge \gamma \deg_{\calK_i}(u).$$ We define a pruned sub-collection $\calK'_{i+1}$ by selecting only those appearances that adhere to this specific pairing:
\[ \calK'_{i+1} = \{(K, \phi) \in \calK_i : \phi(s_a) = u \in U_a \text{ and } \phi(s_b) = v_u\} \]
Summing the individual vertex degrees over $U_a$ yields the global volume bound $|\calK'_{i+1}| \ge \gamma \sum_{u \in U_a} \deg_{\calK_i}(u) \ge \gamma^2 |\calK_i|$.

To maintain the structural requirement that the collection remains ``good", we perform a cleaning step to remove appearances involving heavy vertices that have lost too much local density. We define the threshold $\gamma_{i+1} = \frac{\gamma^2 |\calK_i|}{10pn}$ and identify a set of bad vertices $B_{i+1} = \{ v \in \Heavy_h : \deg_{\calK'_{i+1}}(v) < \gamma_{i+1} \deg(v) \}$. The final collection for this step is defined as $\calK_{i+1} = \{ (K, \phi) \in \calK'_{i+1} : V(K) \cap B_{i+1} = \emptyset \}$. The number of appearances removed is at most $\sum_{v \in B_{i+1}} \deg_{\calK'_{i+1}}(v) < \sum_{v \in \Heavy_h} \gamma_{i+1} \deg(v) \le \gamma_{i+1} \cdot 2pn$. By our choice of $\gamma_{i+1}$, this loss is at most $\frac{1}{5} \gamma^2 |\calK_i|$, ensuring that $|\calK_{i+1}| \ge \frac{4}{5} \gamma^2 |\calK_i|$. By construction, every heavy vertex remaining in the collection satisfies the $\gamma_{i+1}$-goodness property.

Regarding the structure of the digraph, the mapping $s_a \to s_b$ in $\calK_{i+1}$ is determined: for every appearance, $\phi(s_b)$ is uniquely determined by $\phi(s_a)$, rendering the edge $s_a \to s_b$ locked. Similarly, any edge $s_j \to s_k$ that was already locked in $\calK_i$ remains locked in $\calK_{i+1}$. Since each iteration strictly increases the number of locked edges and the total number of possible directed edges is bounded by $|S|^2$, the process must terminate in $t < |S|^2$ steps, resulting in a digraph $D_\gamma(\calK_t)$ which either contains an independent pair or is a tournament where all edges are locked.\end{proof}

\subsection{The testable cacti}\label{sec:testable-cacti}
The cacti $H$ that make the family $\calH=\{H_1,H\}$ testable depend on the intersection of the structural properties of $H$ and the obstacles $S$ of $H_1$. We start with the case where all obstacles of $H_1$ are of size $2$. In this simpler case, the testability of the family is guaranteed when $H$ behaves as a $k$-petal cactus relative to these separators, with not further restrictions.

\begin{theorem}\label{thm:2-size-H} Let $H_1$ be a $2$-connected non-testable graph where every obstacle $S$ has size $2$. If for every such $S$, $H$ is a cactus with respect to $(H_1,S)$, then $\{H_1,H\}$-freeness is testable.
\end{theorem}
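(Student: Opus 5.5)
The tester is the union of two canonical testers: one that searches for an $H_1$-appearance, and one that searches for an $H$-appearance. $H$ is testable on its own, since we are in the setting $\calH=\{H_1,H\}$ with $H_1$ non-testable and $H$ testable. Let $G$ be $\eps$-far from $\{H_1,H\}$-freeness. By Lemma~\ref{lemma:edge-disjoint-app}, $G$ is $\Omega(\eps)$-far from $H$-freeness or from $H_1$-freeness. In the first case the tester for $\calP_H$ succeeds, so assume $G$ is $\eps'$-far from $H_1$-freeness.

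We fix an obstacle $S=\{a,b\}$ of $H_1$. Using Lemma~\ref{cl:bipar} and Lemma~\ref{lem:role+good}, we pass to a semi-bipartite subgraph with a $\gamma$-good role-preserving collection $\calK$ of $H_1$-appearances, with global role map $\rho:\Heavy_h\to\{a,b\}$. This is the same reduction used earlier for Lemma~\ref{lem:ub}. If linearly many appearances in $\calK$ have at most one heavy $S$-vertex, or have heavy vertices forming no independent separating set, then $H_1$ minus its heavy vertices is connected. In that case the \textsf{Bounded-BFS} argument of Lemma~\ref{lem:ub} finds an $H_1$-appearance, so we may assume both $\phi(a)$ and $\phi(b)$ are heavy in essentially all of $\calK$.

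We then apply Lemma~\ref{lem:iter-prune}. Since $|S|=2$, the digraph $D_\gamma(\calK_t)$ is either a locked tournament (an edge $a\to b$ or $b\to a$ is locked) or has no edges at all. In the locked case, Lemma~\ref{thm:characterization-G,H_1} applies directly. Starting from a light vertex of a component, we reach the root hub $x$ with probability at least $\gamma^{k}$. Sampling $O(\gamma^{-k})$ neighbors of $x$ then recovers components sharing the same partner hub, and their union gives an $H_1$-appearance.

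The remaining case, where $D_\gamma(\calK_t)$ has no edges, is the main obstacle. Here we show that $G$ is forced to contain many easily discoverable $H$-appearances, built from the cactus structure of $H$ with respect to $(H_1,S)$. With no edge $a\to b$, a $(1-\gamma)$-fraction of the $\calK$-degree of a typical $a$-hub $u$ is spread over appearances whose $b$-hubs each carry less than a $\gamma$-fraction of it. The same holds for typical $b$-hubs. This gives the needed freedom. Each petal of $H$ is a subgraph of an $S$-component $C(S)$ of $H_1$, and each articulation point of $H$ has $S$-role $a$ or $b$. So we embed $H$ greedily along a tree decomposition of its petals. The first petal goes inside one appearance. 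At an articulation vertex with role $a$, mapped to a hub $u$, we attach the next petal using a different appearance through $u$ whose other hub and whose component are disjoint from everything used so far. Such an appearance exists because the $\calK$-degree of $u$ is large, at least $h\gamma$, and no single partner dominates. Edge-disjointness of $\calK$ keeps the components vertex-disjoint apart from their hubs.

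Two points need care. First, $|S|=2$ is exactly what makes this work: with a fixed third vertex shared by all appearances, as in Figure~\ref{fig15}, a role could not be duplicated. With two roles, every hub is free to be replicated. Second, the embedding must be discoverable. Here we adapt the random-walk argument of Lemma~\ref{thm:characterization-G,H_1}: from a random light vertex we reach a hub with probability at least $\gamma^k$, and each random neighbor of a hub lands in a fresh $\calK$-appearance with probability at least $\gamma$. Exploring the light component found there exposes the next petal and the next hub. A run of $O(|V(H)|)$ such steps produces an $H$-copy with probability at least $\gamma^{O(k|V(H)|)}$, and a constant number of repetitions boosts this to $2/3$.

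The remaining technical point is that a hub sampled twice could collide with a previously used appearance. The no-domination property bounds this probability by $O(|V(H)|\gamma)$ per step, and we absorb it by choosing $\gamma$ small relative to $1/|V(H)|$.
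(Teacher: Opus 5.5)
Your argument follows the paper's proof. It uses a $\gamma$-good role-preserving collection from Lemma~\ref{lem:role+good} and at most two rounds of Lemma~\ref{lem:iter-prune}, and hands the locked case to Lemma~\ref{thm:characterization-G,H_1}. When $D_\gamma$ is edgeless, it embeds $H$ petal by petal through hubs, and the chance of reusing an embedded hub is at most $|V(H)|\gamma$ per step because neither $a\to b$ nor $b\to a$ is present. Your preliminary split into ``far from $\calP_H$'' versus ``far from $\calP_{H_1}$'', and your disposal of appearances whose heavy set does not separate $H_1$, are sensible additions the paper leaves implicit.

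However, the sentence ``Edge-disjointness of $\calK$ keeps the components vertex-disjoint apart from their hubs'' is false, and your embedding step relies on it. Edge-disjoint appearances can share light vertices. Every vertex of the $2$-connected $H_1$ has degree at least $2$, so a light vertex $w$ lies in at most $\deg(w)/2<h/2$ members of $\calK$, and that many is possible. A hub, by contrast, is only guaranteed $\deg_{\calK}(u)\ge\gamma\deg(u)\ge\gamma h$ appearances. Hence the fresh appearance you sample at $u$ can meet a light vertex of an already embedded petal with probability bounded away from $0$; nothing prevents the appearances through $u$ from pairwise intersecting in light vertices. The no-domination property controls only the partner hub, so it gives no bound here. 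The same problem affects gluing components taken from different appearances in the locked case.

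The paper's own collision estimate has the same blind spot: it only counts appearances with $\phi(b)\in V_{emb}^{(i)}$. So this is a shared omission rather than a divergence, but the argument still needs an extra ingredient. One option is a degree-gap choice of thresholds. Pick the light threshold $h_i$ from a constant-length sequence $h_1<h_2<\cdots$ with $h_{i+1}\gg h_i$, so that only a small fraction of the appearances touch vertices of degree in $[h_i,h_{i+1})$. Discard those appearances and re-clean. A light-vertex collision at a hub then has probability $O(k|V(H)|h_i/(\gamma h_{i+1}))$.

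Relatedly, you correctly note that an empty $D_\gamma$ yields no-domination only for hubs carrying all but a $\gamma$-fraction of the $\calK$-volume; the paper asserts it for every hub. You then use it at every hub your exploration visits. You also need that the visited hubs are typical with good probability. One way is a stationarity argument for the walk that alternates between hubs and appearances chosen (roughly uniformly) through them.
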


\begin{proof}
	The proof proceeds in two phases: a structural phase that refines the collection, and an algorithmic phase that probabilistically embeds the cactus $H$.
	\paragraph{Phase 1: iterative pruning}
	Let $G$ be $\epsilon$-far from $\{H_1, H\}$-freeness. By Lemma~\ref{lem:role+good}, there exists a $\gamma_0$-good, edge-disjoint, role-preserving collection $\mathcal{K}_0$ of $H_1$-appearances with global mapping $\rho: V(G) \to S$, where $S=\{a, b\}$. 
	
	We apply the iterative pruning (Lemma~\ref{lem:iter-prune}) to generate a sequence of collections $\mathcal{K}_0 \supseteq \mathcal{K}_1 \supseteq \dots \supseteq \mathcal{K}_m$ and $\gamma_0>\gamma_1>\ldots >\gamma_m$, analyzing the dependency graph $D_\gamma(\mathcal{K}_j)$ at each step $j$:
	\begin{itemize}
		\item \textbf{Case A (Locked Tournament):} If $D_\gamma(\mathcal{K}_j)$ is a locked tournament, we invoke Lemma~\ref{thm:characterization-G,H_1} and find $H_1$.
		\item \textbf{Case B (Unlocked Tournament):} If $D_\gamma(\calK_j)$ is a tournament but there exists an unlocked edge, we prune to $\mathcal{K}_{j+1}$. By  Lemma~\ref{lem:iter-prune}, this strictly increase the number of locked edges.
		\item \textbf{Case C (Independent Pair):} If $D_\gamma(\mathcal{K}_j)$ is empty (contains no directed edges), we terminate the pruning phase and proceed to Phase 2, defining our terminal collection as $\mathcal{K}^* = \mathcal{K}_j$.
	\end{itemize}
	Since $|S|=2$, the maximum number of possible directed edges is $2$. Thus, the pruning process terminates in $m \le 2$ steps, ensuring $\mathcal{K}^*$ is $\gamma^*$-good for $\gamma^*\ge \gamma _2$.
	
	\paragraph{Phase 2: Probabilistic Cactus Embedding}
	 For a fixed physical  vertex $u\in V(G)$ with $\rho(u)=a$, 	let $\mathcal{K}^*_u \subseteq \mathcal{K^*}$ 
	be the set of appearances containing $u$. We define the set of physical partners of $u$ 
	for role $b$ within the collection $\mathcal{K^*}$ as:
	\[ V_b(u) = \{ v \in V(G) : \exists (K, \phi) \in \mathcal{K^*} 
	\text{ s.t. } \phi(a)=u, \phi(b)=v \} \]
	
	We partition $\mathcal{K}^*_u$ 
	by the physical vertex playing role $b$:
	\[ \mathcal{K}^*_u = \bigcup_{v \in V_b(u)} \mathcal{K}^*_{u,v}, \text{ where } 
	\mathcal{K}^*_{u,v} = \{ (K, \phi) \in \mathcal{K^*} : \phi(a)=u \text{ and }\phi(b) = v \}. \]
	Since $D_{\gamma}(\mathcal{K^*})$ is empty, for every $v \in V_b(u)$, the size 
	of each partition is bounded: $|\mathcal{K}^*_{u,v}| < \gamma \deg_{\calK^*}(u)$.

	The tester identifies an embedding $f:V(H)\to V(G)$ by following the cactus decomposition into petals. In particular, $H$ consists of petals $\{P_1, \dots, P_t\}$, connected to each other via articulation points in this order. We define the sequence of sub-cacti $H^{(1)}, \dots, H^{(t)}$ such that $H^{(i)} = \bigcup_{j=1}^i P_j$ is the subgraph formed by the first $i$ petals in this fixed topological ordering of the petals of $H$ and let $V_{emb}^{(i)} = f(V(H^{(i)}))$ be the set of physical vertices already embedded.

	Assume the tester has successfully found a physical embedding $f: V(H^{(i)}) \to V(G)$. To extend this to $H^{(i+1)}$, let $u \in V(H^{(i)})$ be the articulation vertex where the next petal $P_{i+1}$ attaches, and let $s = \Phi(u) \in S$ be its role. The tester identifies $P_{i+1}$ by sampling $Q = O(1/\gamma^*)$ neighbors of the physical vertex $f(u)$ and using \textsf{Bounded-BFS} for depth $r \ge \textrm{diam}(H_1)$ to find an appearance $(K, \phi) \in \mathcal{K}^*_{f(u)}$. Since the collection is  $\gamma^*$-good, this sampling succeeds with constant probability.
	
	The primary risk is a \emph{collision}: the event $\calbE_{i+1}$ that the sampled appearance $(K, \phi)$ contains a physical vertex $z$ already present in $V_{emb}^{(i)}$. Conditioned on the existing embedding $H^{(i)}$, the probability of hitting any $z \in V_{emb}^{(i)} \setminus \{f(u)\}$ is:
	\[ \Prx[\calbE_{i+1} \mid H^{(i)}] = \frac{|\{ (K, \phi) \in \mathcal{K}^*_{f(u)} : \phi(b) \in V_{emb}^{(i)} \setminus \{f(u)\} \}|}{|\mathcal{K}^*_{f(u)}|}. \]
	Since $V_{emb}^{(i)}$ is fixed by the conditioning, we sum over its vertices:
	\[ \Prx[\calbE_{i+1} \mid H^{(i)}] = \frac{\sum_{z \in V_{emb}^{(i)} \setminus \{f(u)\}} |\mathcal{K}^*_{f(u), z}|}{|\mathcal{K}^*_{f(u)}|} < \frac{\sum_{z \in V_{emb}^{(i)}} \gamma \deg_{\calK^*}(f(u))}{ \deg_{\calK^*}(f(u))} \le |V(H)| \cdot \gamma. \]
	
	By the chain rule, the probability of completing all $t$ petals without an accidental collision is at least:
	\[  \prod_{i=1}^t \Pr[\neg{\calbE_i} \mid H^{(i-1)}] \ge (1 - |V(H)| \cdot \gamma)^t \ge 1 - t \cdot |V(H)| \cdot \gamma. \]
	We choose $\gamma$ such that this probability is at least $2/3$, it finds an $H$-witness. The total query complexity is $t \cdot O(1/\gamma^*\gamma) = O_{H, H_1, \epsilon,p}(1)$.
\end{proof}

\begin{remark}\label{rem:cactus-struct}
{\bf An important point of discussion:}  We note that for a particular
graph $H$, and fixed $S$, being a cactus with respect to $(H_1,S)$ is
not unique. In particular, it could be the case that in a different
representation of $H$ as a cactus, the roles of the $S$ vertices can
be changed as noted before. Thus, to ensure testability of $\calH$,  it is enough that for {\bf every}
obstacle  $S$, there
is one decomposition of the cactus which is a valid cactus with
respect to $(H_1,S)$. 
\end{remark}

\subsection{Testable $\{H_1,H\}$-freeness - the general case}
  We move to $2$-size forbidden families $\mathcal{H}$, as above where $H_1$ has
  obstacles larger than $2$. Unlike for 2-size obstacles,
  there are more complex lower bounds. As a result, there are more
  restrictions on the cacti types that make $\mathcal{H}$-freeness
  testable. The main complication comes from the fact that while we
  fix $H_1$, a labeled  obstacle $S$ and $H$, a decomposition of $H$ into a
  cactus with respect to $(H_1,S)$ is not unique. In particular, see e.g., in
  Figure~\ref{fig15}, a decomposition of $H$, with respect to  $(H_1,S)$, with different $S$-roles of its articulation points.


The difficulty that this non-uniqueness creates is exemplified with
the following argument:  Consider the lower bound graph $G$ as discussed
in the proof of Lemma~\ref{lem:lb} for the case of $S$ of size $r=3$
and take $v_3= a$. Note that as in Figure~\ref{fig15} the $S$-label $a$
appears twice in  $H$, while all $H_1$-appearances in $G$ share the
same fixed vertex $a$. Hence no $H$ appears in $G$ with the above
$S$-labels, which might indicate that $\{H_1,H\}$-freeness is not
  testable. However, as seen in Figure~\ref{fig15}, the same $H$ can
  be decomposed as a cactus where $a$ does not appear twice. Hence the above argument does not prevent
  $\{H_1,H\}$-freeness to be testable. In fact, however, the above $\calH$ is not testable due
  to the fact that $b$ must appear twice (or more) in any $(H_1,S)$-cactus, and hence the lower bound graph $G'$ from Lemma~\ref{lem:lb}, with $v_3=b$,  serves as a lower bound
  for the family $\{H_1,H\}$).

 The above discussion implies an additional necessary condition for
 $\{H_1,H\}$-freeness to be testable, as stated in
 Lemma~\ref{lem:restriction2}.

 \begin{lemma}
   \label{lem:restriction2}
   Let $H$ be a cactus with respect to a $(H_1,S)$, where $H_1$ is a non-testable graph with a labeled obstacle $S$ of size $r\ge 3$. For the family $\calF=\{H_1, H\}$
   to be testable, $H$ must satisfy the following: for every subset $S'\subset S$ of size $|S'|=r-2$, there must exist a cactus representation $(H,\Phi)$ such that for every role $s\in S'$, the preimage satisfies $|\Phi^{-1}(s)|\le 1$. 
 \end{lemma}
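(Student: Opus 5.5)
We prove the contrapositive. Suppose there is a set $S'\subset S$ with $|S'|=r-2$ such that every cactus representation $(H,\Phi)$ with respect to $(H_1,S)$ has some role $s\in S'$ with $|\Phi^{-1}(s)|\ge 2$. We show that $\calF=\{H_1,H\}$-freeness is not testable by exhibiting a hard distribution. Write $S=\{v_1,\dots,v_r\}$ and relabel so that $S'=\{v_3,\dots,v_r\}$. We use exactly the construction of Definition~\ref{def:LB_construction}: the two large hub sets $A,C$ play $v_1,v_2$, and the fixed set $W$ of $r-2$ vertices plays the roles in $S'$. This graph is $p$-degenerate for suitable $p=O(k)$ and $\Omega(1/k)$-far from $H_1$-freeness. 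By Lemma~\ref{lem:lb}, a tester making $o(n^{1/4})$ queries finds an $H_1$-copy with probability $o(1)$. It therefore suffices to show that, with probability $1$, the constructed graph contains no copy of $H$ at all. Then a one-sided tester rejects with probability $o(1)$, while the graph is far from $\calF$-free, since it is far from $H_1$-free.

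\textbf{Step 1: give each vertex of the graph a canonical role.} Every vertex of $G(H,S)$ has a canonical role in $H_1$. A vertex in $A$ has role $v_1$, a vertex in $C$ has role $v_2$, the vertex $w_s\in W$ has role $s$, and a vertex of $L_\ell$ has the role of its preimage in the component $C_\ell$. This labelling is a homomorphism $\psi:V(G)\to V(H_1)$. Given an embedding $f:H\to G$, set $\Phi=\psi\circ f$, which is again a homomorphism.

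\textbf{Step 2: show that $(H,\Phi)$ is a cactus.} Take $L$ to be the set of vertices of $H$ mapped by $f$ into $A\cup C\cup W$. In $G$, the only connections between different $H_1$-copies pass through $A\cup C\cup W$, and each such copy minus these hubs is a disjoint union of copies of the components of $H_1\setminus S$. Consequently every $L$-component of $H$ maps into a single copy of some $S$-component, and so is isomorphic to a subgraph of it.

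It remains to show that each vertex of $L$ (or a suitable subset of $L$ carrying the same roles) is an articulation point of $H$. This is the step I expect to be the main obstacle, and it should be handled as in Lemma~\ref{lem:necessary1}. The argument goes through testability of $H$. If some $L$-vertices jointly separated $H$ without each being a cut vertex, then some block of $H$ would contain an independent separating set. By Theorem~\ref{thm:2} that block would be non-testable, and by Theorem~\ref{thm:2-conn} so would $H$. This needs the standing assumption of this section that $H$ is testable. Even if the full set $L$ does not consist of articulation points, we can merge non-separating parts into petals. The point to verify carefully is that the petals of $f(H)$ still embed into single $S$-components. I would check this using the fact that two distinct pairs $(i,j)$ share at most the $W$-vertices and one hub.

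\textbf{Step 3: derive the contradiction.} Since $(H,\Phi)$ is a cactus, the hypothesis gives some $s\in S'$ with $|\Phi^{-1}(s)|\ge 2$, that is, two distinct vertices $x\neq y$ of $H$ with $\psi(f(x))=\psi(f(y))=s$. The only vertex of $G$ with role $s\in S'$ is the single fixed vertex $w_s\in W$. Hence $f(x)=f(y)=w_s$, which contradicts the injectivity of $f$. So $G$ contains no $H$-copy, and the testability lower bound follows.
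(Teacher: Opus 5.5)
Your proposal is correct and follows the paper's proof: it uses the same lower-bound graph from Definition~\ref{def:LB_construction}, with $W$ playing the roles in $S'$, and the same observation that each role $s\in S'$ has a single physical vertex, so no representation with a repeated $S'$-role can embed. Your Steps 1--2 make explicit a point the paper leaves implicit: any $H$-copy in this graph induces a cactus representation $\psi\circ f$, which requires the standing assumption that $H$ is testable. The step you flag closes cleanly: $f^{-1}(A\cup C\cup W)$ is independent and every block of the testable $H$ stays connected after deleting it (Theorems~\ref{thm:2} and~\ref{thm:2-conn}), so each block, and each union of blocks glued at non-hub cut vertices, maps into a single copy of some $C_\ell(S)$; hence taking $L$ to be the hub-mapped cut vertices of $H$ gives a valid cactus.
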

Thus, considering the above example: we have seen two representations
where $a,c$ appear uniquely in one of them. We note that in both
representations $b$ appears twice and this cannot be avoided - hence,
the $H$ above does not facilitate the testability of $\{H_1,H\}$.
\begin{proof} Assume that for some subset $S'\subset S$ with $|S'|=r-2$, every valid cactus representation $(H,\Phi)$ requires at least one role $s\in S'$ to be assigned to two or more distinct vertices in $V(H)$. We demonstrate non-testability by utilizing the lower bound graph $G_{S'}$ from Lemma~\ref{lem:lb}.

The construction of $G_{S'}$ involves two sets of vertices $A$ and $C$ of size $m = \Theta(\sqrt{n})$. For each pair $(i, j) \in [m]^2$, we form an edge-disjoint copy of $H_1$. In this construction, the roles $v_1, v_2 \in S \setminus S'$ are mapped to the variable pairs $(i, j)$, while the roles in $S' = \{v_3, \ldots, v_r\}$ are fixed and remain identical for every $H_1$-appearance in the graph. This graph $G_{S'}$ is $\Omega(1/k)$-far from being $H_1$-free, yet finding any specific copy requires identifying the unknown $(i, j)$ pair among the known fixed vertices $S'$, which requires at least $\Omega(n^{1/4})$ queries.

Because every $H_1$-appearance in $G_{S'}$ shares the exact same physical vertices for all roles in $S'$, any structure formed by these appearances is restricted to using at most one physical vertex for each role $s \in S'$. By our assumption, every valid cactus representation $(H, \Phi)$ of $H$ requires at least one role $s \in S'$ to be mapped to at least two distinct vertices ($|\Phi^{-1}(s)| \geq 2$). Since $G_{S'}$ provides only a single physical vertex for each such role, $H$ cannot be embedded into $G_{S'}$ using these $H_1$ building blocks. Since $G_{S'}$ is $\Omega(1/k)$-far from being $\calF$-free but contains no copies of $H$, and $H_1$ remains hard to detect, the distribution proves that $\calF$-freeness is not testable. 
\end{proof}

The following lemma establishes that the structural restriction identified in Lemma~\ref{lem:restriction2} is not only necessary but also sufficient for testability. 
\begin{lemma}\label{lem:restriction-Reverse} Let $H_1$ be a non-testable graph with a labeled obstacle $S$ of size $r\ge 3$. If for every subset $S'\subset S$ of size $r-2$, the cactus $H$ admits a representation $(H,\Phi)$ where each role $s\in S'$ appears at most once (i.e., $|\Phi^{-1}(s)|\le 1$), then $\{H_1,H\}$-freeness is testable.
\end{lemma}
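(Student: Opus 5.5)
I would follow the two-phase template of the proof of Theorem~\ref{thm:2-size-H}, now driven by the general pruning of Lemma~\ref{lem:iter-prune}. Let $G$ be $\eps$-far from $\{H_1,H\}$-freeness. If $G$ is far from $H$-freeness we are done, since $H$ is testable. Otherwise $G$ is $\Omega(\eps)$-far from $H_1$-freeness. By Lemma~\ref{lem:role+good} we obtain a $\gamma_0$-good role-preserving collection $\calK_0$ with global role map $\rho:\Heavy_h\to S$. We then run the iterative pruning of Lemma~\ref{lem:iter-prune} for fewer than $r^2$ rounds; each round loses only a constant factor in volume and in goodness. The result is a collection $\calK^*$ that is $\gamma^*$-good, with $\gamma^*$ depending only on $k,p,\eps$, and for which one of two cases holds. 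If $D_\gamma(\calK^*)$ is a fully locked tournament, Lemma~\ref{thm:characterization-G,H_1} directly yields an $H_1$-appearance with constant queries. Otherwise $D_\gamma(\calK^*)$ contains an independent pair $\{s_1,s_2\}\subseteq S$.

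In the second case, set $S'=S\setminus\{s_1,s_2\}$, which has size exactly $r-2$. By hypothesis we may fix a cactus representation $(H,\Phi)$ in which every role of $S'$ is used at most once. The tester then embeds $H$ petal by petal, in a fixed topological order of the petal tree, exactly as in Phase~2 of Theorem~\ref{thm:2-size-H}. For each new petal, it samples $O(1/\gamma^*)$ random neighbors of the physical articulation vertex $f(u)$ and runs a \textsf{Bounded-BFS} of depth $\textrm{diam}(H_1)$. This discovers some appearance $(K,\phi)\in\calK^*_{f(u)}$. The tester then takes from $K$ the $S$-component that contains the petal, as the petal is a subgraph of an $S$-component of $H_1$, together with the physical vertices $\phi(s)$ for the $S$-roles of that petal's articulation points. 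The first petal is found by sampling a light vertex of some appearance, which happens with probability $\Omega(\eps_1)$, and walking to a hub as in Lemma~\ref{thm:characterization-G,H_1}.

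The crux, and the step I expect to be hardest, is controlling collisions. A new petal attached at a vertex of role $s$ must not reuse an already embedded physical vertex in a role where $H$ needs a distinct vertex. For roles in $\{s_1,s_2\}$, the argument of Theorem~\ref{thm:2-size-H} carries over. Because there is no edge $s\to s_i$ in the digraph, for all but a $\gamma$-fraction (in $\calK^*$-degree) of hubs $u$ of role $s$, no single partner of role $s_i$ captures a $\gamma$-fraction of $u$'s appearances. A union bound over the at most $|V(H)|$ embedded vertices then bounds the collision probability by $|V(H)|\gamma$. For roles in $S'$, collisions are allowed: each such role occurs at most once in $H$, so sharing the same physical vertex, or even a locked partner, is harmless and in fact consistent with the embedding. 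I would need to handle carefully the case where the edges between $s$ and a role of $S'$ are locked, so that the new petal's $S'$-vertex is forced. My plan there is to argue that uniqueness of that role in $\Phi$ means the forced vertex is either not yet embedded or coincides with the single intended image.

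The remaining concern is the bad hubs for which the independence of $s_1,s_2$ fails locally. I would absorb them by one more goodness-cleaning step, in the style of Lemma~\ref{cl:full-degree}, which discards appearances through such hubs at a loss of $O(\gamma p n)$. Choosing $\gamma$ so that $t\,|V(H)|\,\gamma<1/3$ and repeating the whole procedure $O(1/\eps_1)$ times then gives a rejection probability of at least $2/3$ with $O_{H,H_1,\eps,p}(1)$ queries. The tester is one-sided because it only rejects upon exhibiting an $H_1$- or $H$-copy.
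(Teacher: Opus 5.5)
Your route is the paper's own. You take a $\gamma$-good role-preserving collection, prune it with Lemma~\ref{lem:iter-prune}, and apply Lemma~\ref{thm:characterization-G,H_1} if you reach a locked tournament. Otherwise you take an independent pair $\{s_1,s_2\}$, set $S'=S\setminus\{s_1,s_2\}$, and embed $H$ petal by petal. The gap is in the collision step. You bound the chance that a new petal reuses an embedded vertex in a role $s_i\in\{s_1,s_2\}$ by the absence of an edge $s\to s_i$ in $D_\gamma$, where $s$ is the role of the attaching articulation vertex $w$. That absence is only known when $\{s,s_i\}=\{s_1,s_2\}$. The case $s\in S'$ cannot be avoided in general, because the unique $S'$-vertex of $H$ may carry several petals. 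In that case $D_\gamma$ may contain a locked edge $s\to s_1$, and then every appearance through $f(w)$ has the same $s_1$-vertex. If two petals at $w$ each contain an $s_1$-vertex, these are distinct in $H$, so the second petal collides with probability $1$, not $|V(H)|\gamma$. Your locked-edge caveat only covers edges into $S'$, which is the harmless direction. The paper's proof has the same hole: it silently names the attaching role $a$.

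This cannot be patched, because the statement itself fails. Let $H_1$ consist of triangles $u_1u_2u_3$ and $w_1w_2w_3$, with $s_1,s_3$ adjacent to all six triangle vertices and $s_2$ adjacent to $u_1,u_2,w_1,w_2$. Then $S=\{s_1,s_2,s_3\}$ is its only obstacle. Let $H$ be two copies of $K_5-e$ glued at an endpoint $y$ of the missing edge: $y$ is joined to disjoint triangles $A,B$, $x$ is joined to $A$, and $x'$ to $B$. $H$ is testable by Theorems~\ref{thm:2} and~\ref{thm:2-conn}. It is a cactus with $L=\{y\}$, each block mapping onto $H_1[\{u_1,u_2,u_3,s_1,s_3\}]$ with $\{y,x\}\mapsto\{s_1,s_3\}$. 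Choosing $\Phi(y)=s_1$ or $\Phi(y)=s_3$ makes that role unique, and $s_2$ is never used, so the hypothesis holds for every $S'$.

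Now take hubs $a_i,c_i,b_j$ ($i,j\in[m]$). For each $(i,j)$, attach a private copy of the two triangles as in $H_1$, with $s_1\mapsto a_i$, $s_3\mapsto c_i$, $s_2\mapsto b_j$. The result is $5$-degenerate and $\Omega(1)$-far from $H_1$-free. Its only $H_1$-copies are the designed ones: $s_1,s_3$ must go to some pair $a_i,c_i$, and $s_2$ then forces both triangles to have the same $j$. So the birthday argument of Lemma~\ref{lem:lb}, with $\{a_i,c_i\}$ acting as one hub, gives an $\Omega(n^{1/4})$ bound for finding one. Yet the graph is $H$-free. Non-hub vertices have degree at most $5$, so $y$ is a hub, and $N(b_j)$ is triangle-free. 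If $y=a_i$, then $A,B$ are designed triangles whose only other common neighbour is $c_i$, which forces $x=x'=c_i$; the case $y=c_i$ is symmetric. Here $D_\gamma$ has $s_1\leftrightarrow s_3$ locked and $s_2$ isolated, which is exactly the situation described above. A correct sufficient condition must also rule out hard instances in which hubs of different roles are locked to one another, not only those in which the roles of $S'$ are fixed to single vertices.
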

\begin{proof} We follow the same strategy used in the proof of Theorem~\ref{thm:2-size-H}.  Let $G$ be $\eps$-far from $H_1$-freeness. For a suitable choice of $\gamma_0=\gamma_0(H_1,H,\eps,p)$, we utilize the $\gamma_0$-good role-preserving collection $\calK_0$ provided by Lemma \ref{lem:role+good}. 

We examine the dependency digraph $D_{\gamma}(\calK_0)$. If the the digraph is a tournament (with possible some anti-parallel edges) with all edges locked, we apply Lemma~\ref{thm:characterization-G,H_1} to find an $H_1$ appearance.

If the digraph is not a locked tournament, we use Lemma~\ref{lem:iter-prune} to successively prune the collection and obtain a sequence of collections $\calK_0\supseteq \ldots \supseteq \calK_m$ and $\{\gamma_\ell\}_{\ell=0}^m$. If at step $\ell$ the digraph $D_\gamma(\calK_\ell)$ is not a tournament, then there exists at least one pair of roles $\{a,b\}\subset S$ such that there is no directed path from $a$ to $b$ and no directed path from $b$ to $a$ in $D_{\gamma}(\calK_\ell)$. We define $S'=S\setminus\{a,b\}$. By hypothesis, $H$ admits a representation $(H, \Phi)$ where $\Phi: V(H) \to V(H_1)$ is a homomorphism and $|\Phi^{-1}(s)| \leq 1$ for all $s \in S'$.

The tester attempts to construct an embedding $f: V(H) \to V(G)$ inductively. Let $\{P_1, \dots, P_t\}$ be the petal decomposition of $H$, and let $H^{(i)} = \bigcup_{j=1}^i P_j$ be the sub-cactus formed by the first $i$ petals. Let $V_{emb}^{(i)} = f(V(H^{(i)}))$ denote the set of physical vertices already fixed in $G$.

To extend the embedding to $H^{(i+1)}$, let $w \in V(H^{(i)})$ be the articulation vertex where $P_{i+1}$ attaches, and let $a = \Phi(w) \in S$. The tester samples $Q = O(1/\gamma_\ell)$ neighbors of the physical vertex $f(w)$ to identify an appearance $(K, \phi) \in \calK_\ell$. We define $\calbE_{i+1}$ as the event that the sampled appearance contains a physical vertex $z$ already present in $V_{emb}^{(i)} \setminus \{f(w)\}$. 

Conditioned on the existing embedding $H^{(i)}$, the set $V_{emb}^{(i)}$ is fixed. Since $|\Phi^{-1}(s)| \le 1$ for $s \in S'$, any vertex in $V(H^{(i)})$ mapping to a role in $S'$ is unique and cannot force a collision. Thus, a collision only occurs if a vertex $z \in V_{emb}^{(i)}$ plays role $b$ in the new appearance. Because $D_{\gamma}(\calK_\ell)$ lacks the edge $a \to b$, the number of appearances where $f(w)$ plays role $a$ and $z$ plays role $b$ is at most $\gamma \deg_{\calK_\ell}(f(w))$. The conditional probability of a collision is:
\[ \Prx[\calbE_{i+1} \mid H^{(i)}] = \frac{\sum_{z \in V_{emb}^{(i)} \setminus \{f(w)\}} |\{ (K, \phi) \in \calK_\ell : \phi(a)=f(w), \phi(b)=z \}|}{|\{ (K, \phi) \in \calK_\ell : \phi(a)=f(w) \}|}. \] Thus:
\[ \Prx[\calbE_{i+1} \mid H^{(i)}] < \frac{|V(H)| \cdot \gamma \deg_{\calK_\ell}(f(w))}{ \deg_{\calK_\ell}(f(w))} \le |V(H)| \cdot \gamma. \]

By the chain rule and a union bound over the $t$ petals, the probability that the entire embedding $f: V(H) \to V(G)$ is completed successfully is at least:
\[ \prod_{i=1}^t \Pr[\neg{\calbE_i} \mid H^{(i-1)}] \ge (1 - |V(H)| \cdot \gamma)^t \ge 1 - t |V(H)| \gamma. \]
For $\gamma < \frac{1}{3t|V(H)|}$, this probability is at least $2/3$. As each exploration step requires $O_{H_1,\eps,p}(1)$ queries in the \textsf{Bounded-BFS}, the total query complexity is $Q_{\text{total}} = O_{H, H_1, \epsilon,p}(1)$, and the proof is complete.
\end{proof}

    

Combining Lemma~\ref{lem:restriction2} and Lemma~\ref{lem:restriction-Reverse}, we obtain the following characterization.

\begin{theorem}
\label{thm:charac-2-size-family}
Let $H_1$ be a non-testable graph and $H$ be a testable graph. The family $\{H_1, H\}$-freeness is testable if and only if for every labeled obstacle $S$ of $H_1$, and for every $S' \subset S$ with $|S'| = |S| - 2$, $H$ has a cactus representation $(H, \Phi)$ where each role in $S'$ appears uniquely or not at all (i.e., $|\Phi^{-1}(s)| \leq 1$ for all $s \in S'$).
\end{theorem}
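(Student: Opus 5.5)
The plan is to assemble Theorem~\ref{thm:charac-2-size-family} from the preceding lemmas, treating the two directions separately and handling the case $|S|=2$ through Lemma~\ref{lem:necessary1} and Theorem~\ref{thm:2-size-H}. When $|S|=2$, the only $S'\subset S$ with $|S'|=0$ is the empty set, so the condition just says that $H$ admits a cactus representation with respect to $(H_1,S)$.

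\textbf{Necessity.} Assume $\{H_1,H\}$-freeness is testable and fix any labeled obstacle $S$ of $H_1$.

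1. Lemma~\ref{lem:necessary1} gives a role mapping $\Phi$ such that $(H,\Phi)$ is a cactus with respect to $(H_1,S)$. This settles the case $|S|=2$.

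2. For $|S|\ge 3$, suppose some $S'$ with $|S'|=|S|-2$ admits no such representation. Then the contrapositive of Lemma~\ref{lem:restriction2} applies, using the graph $G_{S'}$ whose roles in $S'$ are fixed vertices, and shows the family is not testable, a contradiction.

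3. Because the lower-bound distributions are built per obstacle, this works obstacle by obstacle.

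\textbf{Sufficiency.} Here the tester runs Bounded-BFS from random vertices and rejects on seeing either $H_1$ or $H$. Given $G$ that is $\eps$-far from $\{H_1,H\}$-freeness, Lemma~\ref{lemma:edge-disjoint-app} yields linearly many edge-disjoint appearances of one member.

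1. If that member is $H$, the tester succeeds because $H$ is testable.

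2. Otherwise $G$ is $\Omega(\eps/2)$-far from $H_1$-freeness. Fix an obstacle $S$ and run the refinement: Lemma~\ref{lem:role-cons} and Lemma~\ref{lem:role+good}, then the iterative pruning of Lemma~\ref{lem:iter-prune}.

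3. If the pruning ends in a locked tournament, Lemma~\ref{thm:characterization-G,H_1} finds $H_1$.

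4. Otherwise there is an independent pair $\{a,b\}$. For $|S|=2$ the argument of Theorem~\ref{thm:2-size-H} embeds $H$. For $|S|\ge3$, take $S'=S\setminus\{a,b\}$ and apply the hypothesis together with Lemma~\ref{lem:restriction-Reverse} to embed $H$ petal by petal with small collision probability.

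5. All query complexities are constants depending on $H,H_1,\eps,p$. Running both procedures and taking the union gives a one-sided tester.

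\textbf{Main obstacle.} The delicate point is how the independent pair is chosen. The sufficiency argument only knows $\{a,b\}$ after pruning, and this pair depends on $G$. That is why the hypothesis must hold for \emph{every} $S'$ of size $|S|-2$: the tester must be prepared for any pair.

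A second subtlety is how the hypothesis quantifies over obstacles. A single obstacle $S$ is used in the refinement, so arguably one obstacle would suffice for sufficiency. However, necessity genuinely needs the condition for all obstacles, so I would state both directions with "for every obstacle" and note that sufficiency only uses one of them.

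Finally, I would check that the collision bound in Lemma~\ref{lem:restriction-Reverse} covers collisions at every role. Roles in $S'$ occur at most once in $H$, so they cannot collide. Role $b$ is controlled by the missing edge $a\to b$. A repeated use of role $a$ at a different articulation point is handled symmetrically through the missing edge $b\to a$. This case analysis is where I expect to spend the effort.
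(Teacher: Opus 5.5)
Your assembly follows the paper's own route. The paper proves this theorem by combining Lemma~\ref{lem:restriction2} and Lemma~\ref{lem:restriction-Reverse}. You add Lemma~\ref{lem:necessary1}, which Lemma~\ref{lem:restriction2} tacitly presupposes because it assumes $H$ is already a cactus, and Theorem~\ref{thm:2-size-H} for $|S|=2$; this fills in what that one line leaves implicit, and your necessity half is fine.

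The genuine problem is sufficiency step~2, ``fix an obstacle $S$'', together with your remark that sufficiency needs only one obstacle. That remark is false, as the following example shows.

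Take $H_1=K_{2,3}$ with sides $\{x,y\}$ and $\{1,2,3\}$; its obstacles are $S_1=\{x,y\}$ and $S_2=\{1,2,3\}$. Let $H$ be the double star: an edge $uv$, with leaves $\ell_1,\ell_2$ at $u$ and $\ell_3,\ell_4$ at $v$. $H$ is a tree, hence testable. Take $L=\{u\}$ and set $\Phi(u)=1$, $\Phi(v)=y$, $\Phi(\ell_1)=\Phi(\ell_2)=x$, $\Phi(\ell_3)=2$, $\Phi(\ell_4)=3$. Then $(H,\Phi)$ is a cactus with respect to $(K_{2,3},S_2)$ in which each of $1,2,3$ has exactly one preimage, so the condition holds for $S_2$ and every $S'$.

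However, $H$ is not a cactus with respect to $S_1$. Both adjacent degree-$3$ vertices would have to lie in $L$ with roles in the independent set $\{x,y\}$. Indeed, the lower-bound graph of Lemma~\ref{lem:lb} for $S_1$ contains no copy of $H$, since its vertices of degree at least $3$ are the pairwise non-adjacent hubs. So $\{K_{2,3},H\}$-freeness is not testable.

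Concretely, the refinement of Lemma~\ref{lem:role-cons} cannot be run for an arbitrary fixed $S$. In that graph every heavy vertex plays role $x$ or $y$ in every appearance, so no $\rho:\Heavy_h\to S_2$ is consistent with any appearance. The obstacle has to be read off from $G$, from the roles its heavy vertices actually occupy, just like the pair $\{a,b\}$. Your own observation that necessity needs every obstacle was the warning sign: ``one obstacle suffices'' plus necessity would make the condition for one obstacle imply it for all. Lemma~\ref{lem:restriction-Reverse} is stated for a single obstacle and is refuted by the same example, so citing it does not close this step. You need an additional argument that extracts from $G$ an obstacle relative to which a role-preserving collection exists.

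Your planned collision analysis is also incomplete. Petals can attach at an articulation vertex $w$ whose role $s$ lies in $S'$; for instance, $u$ above carries three petals. Uniqueness of $s$ rules out collisions at $s$. But when you extend from $f(w)$, the new petal's $a$- and $b$-vertices are read off appearances through $f(w)$. Collisions with earlier $a$- or $b$-vertices are therefore governed by the edges $s\to a$ and $s\to b$ of $D_\gamma$, about which the independence of $\{a,b\}$ says nothing. If $s\to a$ is locked, two petals at $w$ that each contain an $a$-role vertex are forced onto the same physical vertex.

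So the pair must be chosen so that no role has a significant edge into $a$ or $b$. The remaining configurations must be routed to an argument that finds $H_1$ directly; an example is $|S|=3$ with $c\to a$ and $c\to b$ both locked, where $\{a,b\}$ is the only non-adjacent pair yet the digraph is not a tournament. The tournament/independent-pair dichotomy of Lemma~\ref{lem:iter-prune} does not supply this. The paper's proof of Lemma~\ref{lem:restriction-Reverse} has the same blind spot, since it silently takes the attachment role to be $a$.
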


\subsection{Extension to finite family of forbidden graphs}
The characterization for $2$-size families provided in Theorem~\ref{thm:charac-2-size-family} can be generalized to any finite family of graphs $\calF=\{H_1,\ldots,H_\ell\}$. This generalization rests on two fundamental observations regarding the interaction between non-testable graphs and the collective structural properties of the family. 

When a family contains multiple non-testable graphs (e.g., $H_1$ and $H_2$), their respective obstacles do not ``interact" in a way that creates new conditions for testability. Non-testability is a ``fragile" property: for the family to be testable, the tester only needs to find any member of the family with $O(1)$ queries. Therefore, if a family is non-testable, every non-testable member must remain ``hidden" simultaneously. This means the sufficiency conditions we derived must simply be checked against every non-testable graph in the family individually. 

In a size $2$ family $\{H_1,H\}$, the single graph $H$ was solely responsible for ``breaking" every possible obstacle of $H_1$. In a larger family, this responsibility can be shared. If $H_1$ is the non-testable member, and we have a set of testable graphs $\{H_2,\ldots,H_\ell\}$, the family is testable if every potential obstacle of $H_1$ is ``covered" by \emph{at least one} of the other graphs. Specifically, the condition for testability is relaxed as follows: For every labeled obstacle $S$ of $H_1$, and for every subset $S'\subset S$ of size $|S'|=|S|-2$, there must exist \emph{some} graph $H_j\in \calF\setminus H_1$ that admits a cactus representation $(H_j,\Phi)$ where each role in $S'$ appears at most once.

In the inductive proof of Lemma~\ref{lem:restriction-Reverse}, we showed that if $H$ satisfies the singleton condition for $S'$, a tester can find $H$ in any graph $G$ that is $\eps$-far from being $\{H_1,H\}$-free. In a large family, if $G$ is $\eps$-far from $\calF$-freeness, it is by definition, $\eps$-far from being $H_j$-free for every $j$. If a particular obstacle $S'$ is present in $G$, the tester does not need a single graph $H$ to handle every possible $S'$. As long as there is some $H_j$ in the family that can ``fit" into the structure constrained by $S'$, the tester will find that $H_j$ and successfully reject the graph. Formally, we have the following theorem.

\begin{theorem}
\label{thm:generalized-characterization}
Let $\mathcal{F} = \{H_1, \dots, H_k\}$ be a finite family of graphs. The property of being $\mathcal{F}$-free is testable if and only if for every non-testable member $H_i \in \mathcal{F}$ and every labeled obstacle $S$ of $H_i$, the following condition holds: For every subset $S' \subset S$ of size $|S|-2$, there exists at least one graph $H_j \in \mathcal{F} \setminus \{H_i\}$ that admits a cactus representation $(H_j, \Phi)$ with respect to $(H_i, S)$ such that for every role $s \in S'$, the preimage satisfies $|\Phi^{-1}(s)| \leq 1$.
\end{theorem}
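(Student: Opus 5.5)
We prove the two directions separately, reducing each to the two-member case (Lemma~\ref{lem:restriction2} and Lemma~\ref{lem:restriction-Reverse}). Call a member $H_j$ \emph{testable} if $\calP_{H_j}$ is testable; by Theorem~\ref{thm:2-conn} and Theorem~\ref{thm:2} this is determined by the obstacles of its 2-blocks. The argument does not rely on the non-testable $H_i$ being 2-connected; as in Lemma~\ref{cl:2-conn}, a hard distribution for one 2-block is lifted to $H_i$. Before starting, we check that a cactus witness $H_j$ in the condition may be taken to be testable. If $H_j$ is non-testable, some obstacle of one of its 2-blocks is not separated by articulation points. The cactus structure makes every 2-block of $H_j$ lie inside a petal, which is a subgraph of some $S$-component of $H_i$. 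We then argue that $H_j$ is also absent from the relevant hard graphs, exactly as in the proof of Lemma~\ref{lem:necessary1}. Equivalently, the condition can be read as quantifying over cacti, and we treat it that way.

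\emph{Necessity.} Suppose the condition fails for some non-testable $H_i$, some labeled obstacle $S$, and some $S'\subset S$ with $|S'|=|S|-2$. Take the distribution $G_{S'}$ of Lemma~\ref{lem:lb}, where the roles in $S'$ are fixed vertices and the two roles in $S\setminus S'$ range over $A\times C$. Each graph in the support is $\Omega(1/k)$-far from $H_i$-freeness, and so also far from $\calF$-freeness. By Lemma~\ref{lem:lb}, $o(n^{1/4})$ queries find $H_i$ only with probability $o(1)$. We next show that no other member $H_j$ occurs in $G_{S'}$ at all (or only in a way that is equally hidden). Every subgraph of $G_{S'}$ that avoids a single $H_i$-block is a union of component copies glued at vertices of $A\cup C\cup W$. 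Any $H_j$-appearance is therefore either contained in an $S$-component, and then it is a one-petal cactus with every role appearing at most once, or it induces a role map $\Phi$ making $(H_j,\Phi)$ a cactus. If some $H_j$ embeds, the argument of Lemma~\ref{lem:necessary1} gives such a representation. Since $W$ contributes a single vertex per role in $S'$, that representation has $|\Phi^{-1}(s)|\le1$ for all $s\in S'$, contradicting the failure of the condition. A one-sided tester must exhibit a forbidden subgraph, so it fails with probability $1-o(1)$.

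\emph{Sufficiency.} Let $G$ be $\eps$-far from $\calF$-free. By Lemma~\ref{lemma:edge-disjoint-app}, some single $H_i\in\calF$ has $\Omega(\eps n/k|\calF|p)$ edge-disjoint appearances. If $H_i$ is testable, we run its tester. Otherwise we run the proof of Lemma~\ref{lem:restriction-Reverse} with $H_i$ and an obstacle $S$ of a 2-block of it; Lemma~\ref{cl:2-conn} and Lemma~\ref{cl:1-conn-inverse-(1,p)} reduce the general case to the 2-connected one. First obtain a $\gamma_0$-good role-preserving collection (Lemma~\ref{lem:role+good}), then iterate pruning (Lemma~\ref{lem:iter-prune}). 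If we reach a locked tournament, Lemma~\ref{thm:characterization-G,H_1} finds $H_i$. Otherwise there is an unrelated pair $\{a,b\}$. We set $S'=S\setminus\{a,b\}$ and pick the $H_j$ guaranteed by the hypothesis for this $S'$. We then embed $H_j$ petal by petal, and the collision bound $|V(H_j)|\gamma$ is exactly as in Lemma~\ref{lem:restriction-Reverse}. The tester does not know which case occurs, so it runs every branch for every $(H_i,S,S',H_j)$. There are finitely many branches, all one-sided, so the query complexity stays $O_{\calF,\eps,p}(1)$. We choose $\gamma$ below $1/(3t\max_j|V(H_j)|)$ uniformly over all branches.

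\emph{Main obstacle.} The delicate step is necessity, namely the claim that every embedding of $H_j$ in $G_{S'}$ arises from a cactus representation with singleton roles on $S'$. An $H_j$-copy may use edges of several $H_i$-copies in overlapping ways, and a single component copy might map non-injectively onto roles. I would first try to prove that every block of a subgraph of $G_{S'}$ lies inside one $S$-component copy together with its attachment vertices. This holds because the vertices of $A\cup C\cup W$ separate distinct component copies; in the 2-connected case, $S$ being an independent set ensures that no edge joins two of these hub vertices. From that, the role map is read off directly from the vertex roles.
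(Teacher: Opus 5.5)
Your overall route is the paper's: necessity via the construction of Lemma~\ref{lem:lb}, with the fixed vertices $W$ playing the roles in $S'$ (Lemma~\ref{lem:multi-lb}), and sufficiency via a role-preserving collection, iterated pruning, and a petal-by-petal embedding of the sentinel (Lemma~\ref{lem:sentinel-suff}). The gap is in the step you flag yourself. You propose to prove that every block of a subgraph of $G_{S'}$ lies inside one $S$-component copy, because the hubs separate distinct copies and are pairwise non-adjacent. That claim is false: the $H_i$-copies in $G_{S'}$ are themselves $2$-connected and meet $t\ge 2$ component copies. A separator made of several hubs does not confine blocks to one copy; only a single-vertex separator would. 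Moreover, the independence of the hubs is exactly what turns the hubs of a crossing block into an independent separating set of that block. That is the obstruction to testability, not a reason why crossing cannot occur.

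The paper closes this step (second bullet of Lemma~\ref{lem:multi-lb}, as in Lemma~\ref{lem:necessary1}) by using that $H_j$ is \emph{testable}. A block of an $H_j$-copy that meets two component copies would have an independent separating set, contradicting Theorem~\ref{thm:2} together with Theorem~\ref{thm:2-conn}. Hence every block lies in one $S$-component copy, and the role map read off from the hubs is a cactus with $|\Phi^{-1}(s)|\le 1$ for $s\in S'$. Your preliminary reduction applies testability to the wrong graphs: the witnesses never need it, but the non-witness members do.

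For non-testable non-witness members, the claim is not merely unproved but false, and your escape clause ``or only in a way that is equally hidden'' is not available. Take $\calF=\{K_{2,3},C_4\}$ and $H_i=K_{2,3}$ with sides $\{x,y\}$ and $S=\{a,b,c\}$; $S$ is an obstacle. Take $S'=\{c\}$. Since $C_4$ has no articulation point and is not a subgraph of the $S$-component $K_{1,3}$, it is not a cactus with respect to $(K_{2,3},S)$, so the condition fails at this triple. But in $G_{S'}$ every degree-$3$ vertex is adjacent to its $i\in A$, its $j\in C$ and the fixed vertex $w$. So $w\,x_{ij}\,i\,x_{ij'}\,w$ is a $C_4$ for all $j\neq j'$. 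A tester finds one with $O(1)$ queries: sample a degree-$3$ vertex, read its three neighbours, take a random neighbour $z$ of $i$, and read the neighbours of $z$.

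Thus starting from an arbitrary failing triple does not yield a hard distribution. You must either assume that the non-witness members are testable, which is what the paper's argument effectively uses, or choose the failing triple more carefully. In this example $H_i=C_4$ works, since $K_{2,3}$ does not occur in the $C_4$ instance of Lemma~\ref{lem:lb}.
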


\begin{definition}[Cactus Sentinel]
	\label{def:sentinel}
	Let $\calF$ be a family of graphs. For a non-testable member $H_i \in \calF$ let  $S=\{v_1,\ldots,v_r\} \subset V(H_i)$ be an obstacle and $S'\subset S$ a set of size $r-2$. A graph $H_j \in \calF$ is a \emph{sentinel} for $(H_i, S, S')$ if there exists a role mapping $\Phi: V(H_j) \to V(H_i)$ such that $(H_j, \Phi)$ is a cactus with respect to $(H_i, S)$ satisfying the following. For every role $s \in S'$, its preimage in $H_j$ satisfies $|\Phi^{-1}(s)| \le 1$.
\end{definition}

\begin{lemma}
	\label{lem:multi-lb}
	Let $\calF$ be a family such that for some non-testable $H_i\in\calF$ and some $S'\subset S$ of size $r-2$, there is no sentinel $H_j\in\calF\setminus H_i$ with respect to $(H_i,S,S')$. Then there exists a distribution on $p$-degenerate graphs that are $\eps$-far from $\calF$-freeness but requires $\Omega(n^{1/4})$ queries to test. 
\end{lemma}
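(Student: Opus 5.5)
The plan is to reuse the lower-bound distribution $G_{S'}$ from Lemma~\ref{lem:lb} and Lemma~\ref{lem:restriction2}, built for the non-testable $H_i$ with obstacle $S=\{v_1,\ldots,v_r\}$ and the given $S'$. The two roles $v_1,v_2\in S\setminus S'$ range over the hub sets $A,C$ of size $m=\Theta(\sqrt{n/k})$. The roles in $S'=\{v_3,\ldots,v_r\}$ are fixed single vertices $W$ shared by all copies. The components of $H_i\setminus S$ are placed in the sets $L_t$ and shuffled by independent uniform permutations $\bpi_t$. This graph is $k$-degenerate (order $A\cup C\cup W$ first, then the $L_t$ vertices, each with at most $k$ earlier neighbours). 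It carries $m^2$ edge-disjoint $H_i$-appearances on $\Theta(n)$ vertices, so it is $\Omega(1/k)$-far from $H_i$-freeness and hence from $\calF$-freeness. By Lemma~\ref{lem:lb}, a tester with $o(n^{1/4})$ queries finds an $H_i$-copy with probability $o(1)$. What remains is to show that \emph{no} member $H_j\in\calF$ is found with non-negligible probability.

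I would split $\calF\setminus\{H_i\}$ into three classes and handle each separately.

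\textbf{(a) Members that do not occur at all.} Suppose $H_j$ has no appearance in $G_{S'}$. Then a one-sided tester can never exhibit it, and nothing needs to be shown.

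\textbf{(b) Members that occur inside $G_{S'}$ and are not separated by the hubs.} Such a copy avoids $A\cup C$, or meets them only so that the hub vertices are not separating for it; in particular this covers subgraphs of a single petal, or subgraphs of $W\cup$ one component. I would argue that these are excluded by the hypothesis. The point is that such an $H_j$ trivially admits a cactus representation (at most one petal per $S'$-role, using the one physical vertex of $W$), so it would be a sentinel, contrary to assumption. Here I should double-check the degenerate cases: a single petal is a cactus with $L=\emptyset$, and then $|\Phi^{-1}(s)|\le1$ holds vacuously. This shows that every $H_j$-appearance in $G_{S'}$ must use a hub vertex of $A\cup C$ as a separating vertex.

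\textbf{(c) Members whose appearances in $G_{S'}$ are separated by hubs.} Take an appearance of such an $H_j$ and read off the roles of its physical vertices. This induces a homomorphism $\Phi$ into $H_i$: vertices in $A$, $C$, $W$ get roles $v_1$, $v_2$, $S'$ respectively, and component vertices get their component roles. Since each vertex of $W$ is a single physical vertex, $|\Phi^{-1}(s)|\le1$ for every $s\in S'$.

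\begin{itemize}
\item If the hub vertices used are all articulation points of $H_j$, then $(H_j,\Phi)$ is a cactus with respect to $(H_i,S)$ satisfying the singleton condition. That would make $H_j$ a sentinel, a contradiction.
\item Otherwise some hub set in the copy separates $H_j$ into two parts without being a set of articulation points. Then the same birthday-paradox argument as in Lemma~\ref{lem:lb} applies. To certify this copy the tester must observe two distinct components attached to the same unknown pair (or set) of hubs.
\end{itemize}

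The analysis uses the augmented oracle that reveals whole components. Each query exposes one (hub-pair, component) incidence. By the independence of the $\bpi_t$, the $(\ell+1)$-st query closes a match with probability at most $O(\ell k/m)$, so $q$ queries succeed with probability $O(q^2k/m)=o(1)$ for $q=o(n^{1/4})$. A union bound over the finitely many members of $\calF$ finishes the proof.

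The main obstacle is case (c): making precise that ``reading off roles'' from an arbitrary $H_j$-appearance yields exactly a cactus representation in the sense of Definition~\ref{def:cactus}. In particular, one must check that the $L$-components are subgraphs of $S$-components of $H_i$, and that an appearance using two distinct $A$-vertices but no matched pair still forces a birthday-type match. I would first try to handle this via the augmented-oracle reduction. In that reduction, any discovered subgraph is a union of revealed components glued at hubs, and gluing without a repeated hub pair yields a tree-like (cactus) structure.
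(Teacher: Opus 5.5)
You use the same setup as the paper: the distribution $G_{S'}$, hardness of $H_i$ taken from Lemma~\ref{lem:lb}, then ruling out the other members. The gap is in your cases (b) and (c). Both rest on the belief that, absent a repeated hub pair, an $H_j$-appearance in $G_{S'}$ is tree-like and hence a cactus. This fails because the vertices of $W$ lie in \emph{every} component copy and cost nothing to reach. Copies can therefore be glued through a $W$-vertex plus a single hub, or through two $W$-vertices when $|S'|\ge2$, with no collision at all.

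For a concrete example, let $H_i=K_{3,3}$ with $S=\{a,b,c\}$ one side, $S'=\{c\}$ and $W=\{w\}$. Write $x(i,j)$ for a copy vertex adjacent to $i\in A$, $j\in C$ and $w$.
\begin{itemize}
\item For $j\neq j'$, the $4$-cycle $i,x(i,j),w,x(i,j')$ is not separated by hubs. But $C_4$ is $2$-connected and is not a subgraph of the star $K_{1,3}$, the only $S$-component. So $C_4$ has no cactus representation and is not a sentinel for $(K_{3,3},S,\{c\})$. A tester still finds it with $O(1)$ queries: reach some $x(i,j)$, read off $i$ and $w$, and query $i$ once more. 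This refutes (b).
\item With $i'\neq i$ and $j'\neq j$, the hubs $\{i,j\}$ separate the $6$-cycle $i,x(i,j),j,x(i',j),w,x(i,j'),i$ without being articulation points. This cycle is also found with no collision, so the birthday bound in your second bullet of (c) is false.
\item Your first bullet of (c) fails the same way. Attach a pendant copy to $i$; then $i$ is an articulation point, but its $L$-component is the $C_4$ above, which is not a petal.
\end{itemize}
So for $\calF=\{K_{3,3},C_4\}$ the distribution $G_{S'}$ is easy. The lemma's conclusion still holds there, but only via $C_4$'s own lower bound. No case analysis on $G_{S'}$ can handle arbitrary $H_j$.

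The missing idea, which the paper's sketch uses (``since $H_j$ is testable''), is to exploit the testability of $H_j$.
\begin{itemize}
\item By Theorems~\ref{thm:2-conn} and~\ref{thm:2}, no $2$-block of $H_j$ is disconnected by deleting an independent set.
\item In $G_{S'}$ the role vertices $R=A\cup C\cup W$ form an independent set, and $G_{S'}-R$ is the disjoint union of the component copies.
\item Hence for every block $B$ of an $H_j$-appearance, $B-R$ is connected and lies in a single copy. Blocks sharing a cut vertex outside $R$ lie in the same copy.
\item Let $L$ be the cut vertices lying in $R$, and let $\Phi$ send each vertex to the vertex of $H_i$ whose role it plays. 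Then every $L$-component is, via $\Phi$, a subgraph of a single $S$-component of $H_i$.
\item Also $|\Phi^{-1}(s)|\le 1$ for $s\in S'$, since each such role has one physical vertex. So $(H_j,\Phi)$ is a sentinel.
\end{itemize}
Consequently a testable non-sentinel $H_j$ has \emph{no} appearance in $G_{S'}$. Everything collapses to your case (a) plus Lemma~\ref{lem:lb} for $H_i$, and no birthday argument about $H_j$ is needed.

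Non-testable members of $\calF\setminus\{H_i\}$ are not covered by this distribution, as the example shows. You must either restrict to testable $H_j$, as the paper tacitly does, or argue differently for them.

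The same free gluing through $W$ can even give cheap copies of $H_i$ itself when $|S'|\ge2$. For $H_i=K_{2,4}$ with $S$ its $4$-side, $W$ together with any four copy vertices forms a $K_{2,4}$. So taking hardness of $H_i$ on this particular $G_{S'}$ from the proof of Lemma~\ref{lem:lb} also needs care.
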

\begin{proof}[Proof sketch]
	Let $H_i, S, S'$ be as defined in the lemma statement. We construct a distribution $\calD$ of $n$-vertex graphs $G$ using the construction in Definition~\ref{def:LB_construction}, where we identify the set ``fixed'' vertices $W$ as the set $S'$. Note that the roles $\{v_1, v_2\} = S \setminus S'$ are mapped to hub sets $A$ and $C$ of size $m = \Theta(\sqrt{n})$. The $m^2$ copies of each $S$-component of $H_i$ are attached via independent random permutations $\bpi_\ell$.
	
	Suppose a tester finds a copy of some $H_j\in \calF$ with $o(n^{1/4})$ queries.  Note that if $H_j=H_i$, then by Lemma~\ref{lem:lb}, finding $H_i$ in a graph drawn from $\calD$ require $\Omega(n^{1/4})$ queries. If $H_j\neq H_i$, then, by our assumption, $H_j$ is not a sentinel for $(H_i,S,S')$. This means for any role mapping $\Phi:V(H_j)\to V(H_i)$, one of the following must be true:
	\begin{itemize}
		\item There exists $s\in S'$ such that $|\Phi^{-1}(s)|\ge 2$. In the construction of graphs from the distribution $\calD$, each role $s\in S'$ is represented by a \emph{single} vertex in $W$. Thus, $H_j$ cannot appear in such graphs, since it requires two distinct physical vertices to play the same fixed role $s$.
		\item $(H_j,\Phi)$ is not a cactus with respect to $(H_i,S)$. However, since $H_j$ is testable, if an $H_j$-copy in $G$ is separated by the $S$-role vertices, then each such $S$-role vertex is an articulation point for the $H_j$-copy (as otherwise, by Lemma~\ref{lem:lb}, $H_j$ will not be testable). Thus, for every $H_j$-appearance in $G$, the $S$-role vertices $L\subset V(H_j)$ are articulation points for $H_j$. Additionally,  any $L$-component of $H_j$ is a subgraph of an $S$-component of $H_i$. Hence $H_j$ with the correspondence homomorphism $\Phi$ defined by the $S$-role mapping, and the inherited $L$-component  ``subgraph mapping''  forms a cactus with respect to $(H_1,S)$. This contradicts our assumption.
	\end{itemize}
	Since $G$ is $O_{H_i}(1)$-far from being $\calF$-free, the lemma follows.
\end{proof}

\begin{lemma}
	\label{lem:sentinel-suff}
	Let $G$ be $\eps$-far from $\calF$-freeness. For any non-testable $H_i\in \calF$, let $\calK$ be a $\gamma$-good role-preserving collection of $H_i$-appearances. If for every obstacle $S$ of $H_i$ and $S'\subset S$ of size $|S|-2$,  there exists a sentinel $H_j\in \calF$, then a copy of some $H\in \calF$ can be discovered using the canonical tester with $O_{\calF,\eps,p}(1)$ queries.
\end{lemma}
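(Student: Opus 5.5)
The plan is to repeat the two-phase argument of Lemma~\ref{lem:restriction-Reverse} and Theorem~\ref{thm:2-size-H}. The only change is that the sentinel is now selected according to the final state of the pruning.

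First, reduce to the relevant case. Since $G$ is $\eps$-far from $\calF$-freeness, Lemma~\ref{lemma:edge-disjoint-app} gives $\Omega(\eps n/k|\calF|p)$ edge-disjoint appearances of a single member $H\in\calF$. If $H$ is testable, we run its canonical tester, built from Theorem~\ref{thm:2-conn}, Theorem~\ref{thm:2} and Lemma~\ref{lem:ub}, with proximity $\eps/|\calF|$, and we are done. Otherwise $H=H_i$ is non-testable, and we work in the semi-bipartite graph $G'$ of Lemma~\ref{cl:bipar}. We fix an obstacle $S$ of $H_i$ and take the $\gamma_0$-good role-preserving collection $\calK_0$ of Lemma~\ref{lem:role+good}. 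Crucially, $\gamma_0$ is chosen after fixing $\calF$: we require $\gamma < 1/(3t\max_j|V(H_j)|)$, where $t$ bounds the number of petals over all sentinels.

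Second, run the iterative pruning of Lemma~\ref{lem:iter-prune}. This produces $\calK_0\supseteq\cdots\supseteq\calK_t$ with $t<|S|^2$, and the goodness parameter and the size of the collection decrease only by constant factors depending on $\calF,\eps,p$. There are two terminal cases.

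\emph{Locked tournament.} Lemma~\ref{thm:characterization-G,H_1} yields an $H_i$-appearance with $O(1)$ queries.

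\emph{Independent pair.} There is a pair $\{a,b\}\subseteq S$ that is not connected by a directed edge in either direction in $D_\gamma(\calK_t)$. Set $S'=S\setminus\{a,b\}$. The hypothesis gives a sentinel $H_j$ for $(H_i,S,S')$, with a cactus representation $(H_j,\Phi)$ in which every role of $S'$ appears at most once. We embed $H_j$ petal by petal, exactly as in Lemma~\ref{lem:restriction-Reverse}:
\begin{itemize}
\item First, sample a uniform vertex. With constant probability it is a light vertex of some appearance in $\calK_t$.
\item By $2$-connectivity of $H_i$, run a \textsf{Bounded-BFS} with $\gamma$-goodness to reach a hub playing the role of the articulation vertex of the first petal.
\item At each articulation vertex $f(w)$, sample $O(1/\gamma_t)$ neighbours. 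Each sampled neighbour lands in an appearance of $\calK_t$ containing $f(w)$ in its $\rho$-role.
\item Explore the appearance with \textsf{Bounded-BFS}. Since the next petal is isomorphic to a subgraph of an $S$-component, it is realised there.
\end{itemize}

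Third, bound the collision probability. Roles in $S'$ occur at most once in $H_j$, so they cannot collide with a previously embedded vertex carrying the same role. Light vertices of distinct edge-disjoint appearances through a hub are disjoint, apart from the shared hubs. Hence a collision requires an already embedded vertex $z$ to play role $b$ (resp.\ $a$) when the current articulation has role $a$ (resp.\ $b$). The absence of both edges $a\to b$ and $b\to a$ bounds this probability by $|V(H_j)|\gamma$ for each petal.

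This step is the main obstacle. When the articulation role is some $s\in S'$, or when the articulation has role $a$ but the collision would occur through a role $c\in S'$ whose partner is forced (a locked edge $a\to c$), the independent pair gives no control. I would handle this as follows. If the edge $a\to c$ is locked and $c\in S'$ appears at most once in $H_j$, then re-hitting the same physical $\phi(c)$ only matters if $H_j$ needs a different vertex in role $c$, which it does not. So collisions through $S'$-roles are harmless as long as the petals in which $c$ appears attach consistently. The delicate point is that a role $s\in S'$ appearing once may still be reached through two different petals. To settle this, I would check that in the cactus every vertex of role $s$ is a single physical vertex shared by those petals. That vertex is fixed by role preservation, so any later sampled appearance containing it reuses it legitimately.

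A union bound over the $t$ petals then gives success probability at least $2/3$, with total query complexity $O_{\calF,\eps,p}(1)$. Since the canonical tester from Section~\ref{sec:caninical} simulates this procedure, and it is a one-sided tester that rejects only on finding a member of $\calF$, the lemma follows.
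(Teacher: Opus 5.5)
Your route is the paper's own: a $\gamma$-good role-preserving collection, the pruning of Lemma~\ref{lem:iter-prune}, and Lemma~\ref{thm:characterization-G,H_1} in the locked-tournament case. Otherwise you take an independent pair $\{a,b\}$, set $S'=S\setminus\{a,b\}$, and embed the sentinel petal by petal with a union bound over collisions; your preliminary reduction to a non-testable member is a harmless addition.

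The gap is the step you yourself flag, and your fix does not close it. Knowing that the unique vertex $x$ of role $s\in S'$ is reused consistently says nothing about the \emph{other} vertices of a petal sampled from $f(x)$. Such a petal may contain $a$- or $b$-role vertices. Whether these avoid the already embedded $a$/$b$-hubs is governed by the edges $s\to a$, $s\to b$ of $D_\gamma(\calK_t)$, and the independent pair $\{a,b\}$ says nothing about those edges.

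Concretely, let $c\in S'$ and suppose every $c$-hub of $\calK_t$ has a single $a$-partner. This is exactly what a pruning step locking $c\to a$ produces, and it is compatible with $\{a,b\}$ being independent. Let the unique $c$-vertex $x$ of $H_j$ lie in two petals $P,P'$ that both contain an $a$-role vertex. In any rooting of the petal tree, one of $P,P'$ hangs at $x$ and is sampled from $f(x)$. Both of their $a$-vertices are then forced onto the same partner hub of $f(x)$, so the embedding collides with probability $1$ rather than $O(|V(H_j)|\gamma)$. Some additional idea is needed for articulations carrying $S'$-roles. The paper's sketch does not provide one either, since it simply assumes every attachment vertex has role $a$.

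Two further steps are unjustified as stated; the paper glosses over them too.
\begin{itemize}
\item Light vertices of edge-disjoint appearances through a hub need not be disjoint. A light vertex of degree $d<h$ can lie in up to $d/2$ of them, so the probability of re-hitting an embedded light vertex is only bounded by roughly $h/(\gamma\deg(f(w)))$. This is not small for hubs of degree close to $h$; with the paper's parameters even $\gamma h<1$.
\item The absence of $a\to b$ does not bound $|\{(K,\phi)\in\calK_t:\phi(a)=u,\ \phi(b)=z\}|$ by $\gamma\deg_{\calK_t}(u)$ for \emph{every} $a$-hub $u$. It only says that the violating hubs carry less than a $\gamma$-fraction of the total $\calK_t$-degree of role $a$. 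You must therefore also argue that the hubs your embedding reaches are spread roughly proportionally to their $\calK_t$-degree. This is clear for the first hub, reached from a uniform light vertex, but not for later ones.
\end{itemize}
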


\begin{proof}[Proof sketch.] Fix a non-testable $H_i\in \calF$ and its $\gamma_0$-good role-preserving collection $\calK_0$ with global role mapping $\rho:\Heavy_h\to S$. We use the dependency graph $D_{\gamma}(\calK_0)$ to analyze the testability.  If $D_{\gamma}(\calK_0)$ is a locked tournament, then $H_i$ is discovered in $O_{\calF,\eps,p}(1)$ queries by Lemma~\ref{thm:characterization-G,H_1}. 

If the digraph is not a locked tournament, we apply Lemma~\ref{lem:iter-prune} to prune the collection and obtain a sequence of collections $\calK_0\supseteq \ldots \supseteq \calK_m$ and $\{\gamma_\ell\}_{\ell=0}^m$. If at step $\ell$ the digraph $D_\gamma(\calK_\ell)$ is not a tournament, then there exists at least one pair of roles $\{a,b\}\subset S$ such that there is no directed path from $a$ to $b$ and no directed path from $b$ to $a$ in $D_{\gamma}(\calK_\ell)$. We define $S'=S\setminus\{a,b\}$. By the sentinel hypothesis, there exists a sentinel $H_j \in \calF \setminus \{H_i\}$ admitting a cactus representation $(H_j, \Phi)$ such that $|\Phi^{-1}(s)| \leq 1$ for all $s \in S'$. Let $H_j = \bigcup_{k=1}^m P_k$ be the petal decomposition of $H_j$. We construct an embedding $f: V(H_j) \to V(G)$ inductively:
	
	Suppose we have an embedding for $H_j^{(k)} = \bigcup_{i=1}^k P_i$. To extend $f$ to $P_{k+1}$, let $w$ be the articulation vertex with $\Phi(w) = a$. We sample $O(1/\gamma_\ell)$ appearances $K \in \calK_\ell$ such that $\phi^{-1}(a) = f(w)$. We extend $f$ to $f'$ for all $v \in V(P_{k+1})$ by setting $f'(v) = \phi^{-1}(\Phi(v))$. 
	
	A collision occurs if $f'(v) \in f(V(H_j^{(k)}))$ for some $v \in V(P_{k+1})$. Since $|\Phi^{-1}(s)| \leq 1$ for all $s \in S'$, any vertex $v \in V(P_{k+1})$ playing a role in $S'$ is unique in $H_j$. Since no vertex in $V(H_j^{(k)})$ maps to $\Phi(v)$, no collision is structurally forced. Thus, a collision can only occur with respect to roles $\{a,b\}$. By the case assumption, the pair $(a, b)$ does not form a directed edge in the dependency digraph $D_{\gamma}(\mathcal{K})$. In a similar manner to the proof of Lemma~\ref{lem:restriction-Reverse}, if the sampled appearance $K$ contains a role $a$ mapped to a physical vertex in the existing embedding, then the tester finds an $H_i$-appearance in $G$ and rejects. Otherwise, if no collision occurs, by the union bound over the number of petals in $H_j$, the process successfully constructs a complete embedding $f:V(H_j)\to V(G)$ with high probability. As each exploration step requires $O_{H_1,\eps,p}(1)$ queries in the \textsf{Bounded-BFS}, the total query complexity is $Q_{\text{total}} = O_{H, H_1, \epsilon,p}(1)$, and the proof is complete.
    \end{proof}

We conclude with the following theorem which combines Lemma~\ref{lem:multi-lb} and Lemma~\ref{lem:sentinel-suff}.
\begin{theorem}[Characterization of $\calF$-freeness Testability]
	\label{thm:full-char}
	Let $\calF$ be a finite family of $p$-degenerate graphs. The property of being $\calF$-free is testable in the random neighbor oracle model if and only if for every $H_i \in \calF$ and every obstacle $S \subset V(H_i)$ for which $H_i$ is not testable, there exists a sentinel $H_j \in \calF$ with respect to $(H_i, S, S')$ for every subset $S' \subset S$ of size $|S|-2$.
\end{theorem}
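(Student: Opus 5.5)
Both directions will be obtained by assembling Lemma~\ref{lem:multi-lb} and Lemma~\ref{lem:sentinel-suff}, once we check that the condition in the theorem is exactly the negation of the hypothesis of the former and exactly the hypothesis of the latter. One notational point should be settled first. The phrase ``every obstacle $S$ for which $H_i$ is not testable'' is read via Theorem~\ref{thm:2}: $H_i$ is non-testable precisely when some block of $H_i$ has an independent separating set, hence an obstacle. By Theorem~\ref{thm:2-conn} and Lemma~\ref{cl:2-conn}, the non-testability of a non-2-connected $H_i$ is inherited from a non-testable 2-block. I would therefore reduce to the case where the obstacles are those of the 2-connected non-testable blocks, and treat $H_i$ through such a block $B$. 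The lower-bound graph built for $B$ is turned into one for $H_i$ by hanging pendant copies of the remaining blocks, as in Lemma~\ref{cl:2-conn}; this keeps degeneracy $O(p)$ and distance $\Omega(\eps/k)$.

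For necessity, assume the condition fails. Then there are a non-testable $H_i$, an obstacle $S$, and a set $S'\subset S$ with $|S'|=|S|-2$ such that no $H_j\in\calF\setminus\{H_i\}$ is a sentinel for $(H_i,S,S')$. Lemma~\ref{lem:multi-lb} applies directly. It gives a distribution of $p$-degenerate graphs that are far from $\calF$-free, yet any algorithm using $o(n^{1/4})$ queries finds a member of $\calF$ with probability $o(1)$. A one-sided tester rejects only after exhibiting a forbidden subgraph (canonical tester theorem), so it cannot have constant query complexity. In the write-up, the step inside Lemma~\ref{lem:multi-lb} that excludes non-sentinel $H_j$ has to cover two cases. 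If $H_j$ is testable, an appearance of $H_j$ in $G(H_i,S)$ whose $S$-role vertices are not articulation points would contradict Theorem~\ref{thm:2}. If $H_j$ is itself non-testable, it is simply not found with $o(n^{1/4})$ queries by the same birthday-paradox argument, since every copy in the construction again needs a matched $(i,j)$ pair.

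For sufficiency, let $G$ be $\eps$-far from $\calF$-free. By Lemma~\ref{lemma:edge-disjoint-app}, $G$ contains $\Omega(\eps n/|\calF|kp)$ edge-disjoint appearances of a single $H\in\calF$, and by Lemma~\ref{cl:bipar} we may assume $G$ is semi-bipartite. If $H$ is testable, the canonical tester for $H$ (Lemma~\ref{lem:ub} together with Lemma~\ref{cl:1-conn-inverse-(1,p)}) finds it. If $H=H_i$ is non-testable, we build the $\gamma$-good role-preserving collection of Lemma~\ref{lem:role+good} and run the pruning of Lemma~\ref{lem:iter-prune}. This ends in one of two ways. Either we reach a locked tournament, and Lemma~\ref{thm:characterization-G,H_1} finds $H_i$. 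Or we reach an incomparable pair $\{a,b\}$; then the sentinel for $(H_i,S,S\setminus\{a,b\})$ is embedded petal by petal, as in Lemma~\ref{lem:sentinel-suff}. The final tester runs the canonical testers for all members in parallel, with $\gamma$ fixed in terms of $\calF,\eps,p$. Its query complexity is therefore a constant, which is the maximum over the finitely many cases.

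I expect the main obstacle to be the sentinel-embedding step when the sentinel $H_j$ is itself non-testable or shares structure with $H_i$. Two things need checking there. First, collisions of the non-$S'$ roles other than $a,b$ are controlled by the incomparability of $a,b$ in $D_\gamma$; concretely, the collision bound $|V(H_j)|\gamma$ must hold along directed paths, not just for direct edges. Second, the $\gamma$ used in $D_\gamma$ must be smaller than $1/(3t|V(H_j)|)$ and still compatible with the degradation of the $\gamma_\ell$ across at most $|S|^2$ pruning steps. To handle this I would first fix $\gamma$ as a function of $\max_j|V(H_j)|$ and then choose $\gamma_0$ accordingly.
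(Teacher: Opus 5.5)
Your overall route is the paper's: combine Lemma~\ref{lem:multi-lb} with Lemma~\ref{lem:sentinel-suff}, and your sufficiency sketch is the argument of Lemma~\ref{lem:sentinel-suff}. The trouble is in the two steps you add to make the necessity direction rigorous. Both fail.

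\textbf{Non-testable $H_j$ in the construction.} You claim that a non-testable, non-sentinel $H_j$ is not found in $G(H_i,S)$ because every copy needs a matched $(i,j)$ pair. This is false, because the fixed vertices of $W$ and a hub can close short cycles through \emph{different} pairs. Take $\calF=\{K_{2,3},C_4\}$ and $H_i=K_{2,3}$ with sides $\{x_1,x_2\}$ and $S=\{a,b,c\}$. This $S$ is an obstacle: it is independent, and each of $a,b,c$ alone reconnects $x_1,x_2$. Take $S'=\{c\}$.
\begin{itemize}
\item The $S$-components are stars $K_{1,3}$. Neither $C_4$ nor $K_{2,3}$ has an articulation point, so $L=\emptyset$ is forced and each would have to embed in $K_{1,3}$. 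So there is no sentinel, and the condition fails for this triple.
\item In the construction, every neighbour of a hub $i\in A$ is also adjacent to the single vertex $w$ of role $c$. Any two neighbours $x\neq y$ of $i$ therefore give the $4$-cycle $w\,x\,i\,y$.
\item The canonical tester finds such a cycle with $O(1)$ queries: sample a degree-$3$ vertex, read its neighbours, take one random neighbour of $i$, and read its neighbours.
\end{itemize}
Here $\calF$ is still non-testable, but the witness has to be another triple, namely the obstacle $\{x_1,x_2\}$ or an obstacle of $C_4$; those constructions do hide $C_4$. So you cannot apply the lower bound to an arbitrary failing triple. You need a rule for choosing the witness, and an argument that its construction also hides the non-testable members of $\calF$. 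For testable $H_j$ the exclusion does work, in this form: every block of an $H_j$-copy minus the hub and $W$ vertices is connected, hence lies in a single component copy.

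\textbf{Reduction to a block via pendant copies.} Your reduction of a non-$2$-connected $H_i$ to a block $B$, hanging pendant copies of the other blocks as in Lemma~\ref{cl:2-conn}, is unsound for families: the pendant copies can themselves contain members of $\calF$. With obstacles and cacti taken relative to $B$, as you propose, $\calF=\{C_4\cup_a K_3,\;K_3\}$ refutes the ``only if'' direction. Here $C_4\cup_a K_3$ is a $4$-cycle and a triangle sharing the vertex $a$.
\begin{itemize}
\item Every cactus with respect to $(C_4,S)$ carries a homomorphism $\Phi$ to $C_4$, so it is bipartite. Neither member is bipartite, so the sentinel condition fails.
\item Yet $K_3\subseteq C_4\cup_a K_3$, so $\calF$-freeness is just $K_3$-freeness, which is testable by Theorem~\ref{thm:2}.
\item Your lower-bound graph indeed contains linearly many pendant triangles on light vertices, which the tester finds at once.
\end{itemize}
So the statement must be read with the $S$-components taken in all of $H_i$; then $K_3$ lies inside an $S$-component and is a one-petal sentinel. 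Accordingly, the hard instance must be built from $H_i\setminus S$ directly, not by pendant attachment.
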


\bibliography{biblio}
\bibliographystyle{alpha}
\newpage 

\appendix
\section{Bounded-BFS simulation using random-oracle}\label{sec:BFS}
A main building block in our algorithm is a variation of bounded depth BFS. 
For vertex $v\in V$ the procedure $\textsf{Bounded-BFS}$ simulates a BFS for $t$ iterations using the random neighbor oracle, while being ``query efficient". In particular, for a threshold $h=O(1)$, as the search progresses, if it reaches a vertex whose degree is at most $h$, it explores all of its neighbors; otherwise, it probes only $h$ randomly chosen neighbors. (see Figure~\ref{fig:simulation} for a description of \textsf{Bounded-BFS} subroutine)

Since the algorithm only has random-neighbour queries, we would like to guarantee that for each light vertex sampled (i.e., with degree at most $h$) during iterations $1,\ldots, \ell-1$ of \textsf{Bounded-BFS}, \emph{all of its neighbours} will be sampled as well.
\begin{figure}[ht!]
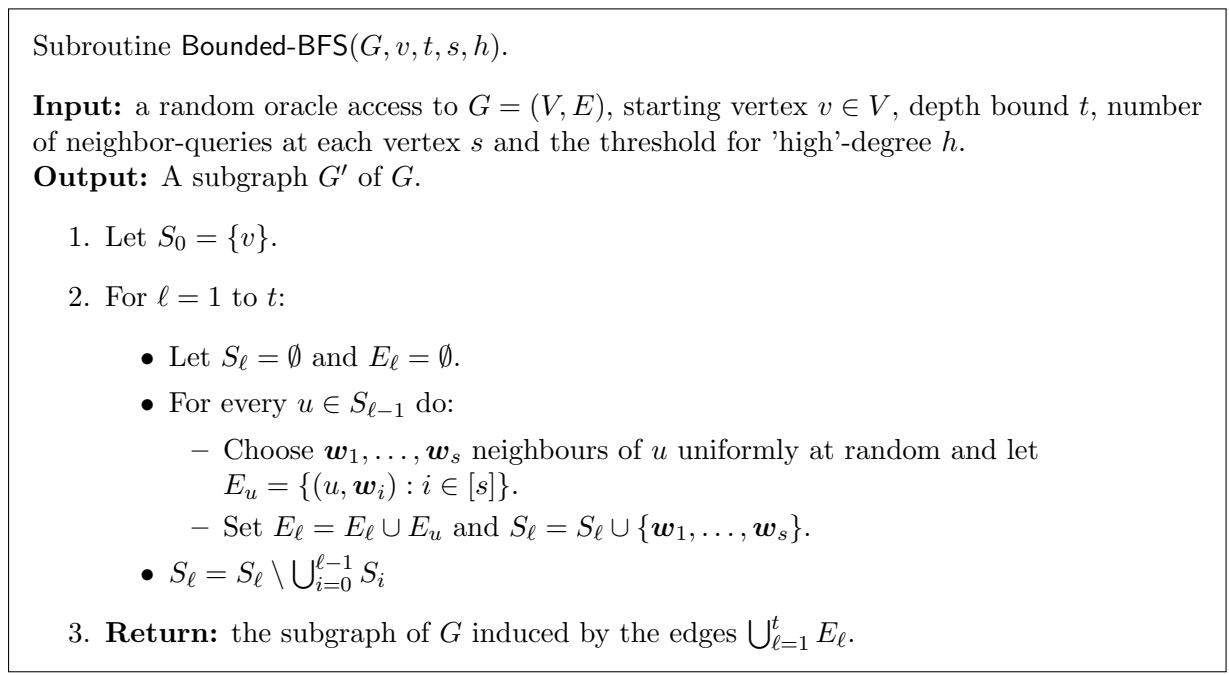

	\begin{framed}
		\noindent Subroutine \textsf{Bounded-BFS}$(G,v,t,s,h)$.  
		\begin{flushleft}
			\noindent {\bf Input:} a random oracle access to $G=(V,E)$, starting vertex $v\in V$, depth bound $t$,  number of neighbor-queries at each vertex $s$ and the threshold for 'high'-degree $h$.\\
			{\bf Output:} A subgraph $G'$ of $G$. 
			\begin{enumerate}
				\item Let $S_0=\{v\}$.
				\item For $\ell=1$ to $t$:
				\begin{itemize}
					\item Let $S_\ell=\emptyset$ and $E_\ell=\emptyset$.
					\item For every $u\in S_{\ell-1}$ do:
					\begin{itemize}
						\item Choose $\bw_1,\ldots,\bw_s$ neighbours of $u$ uniformly at random and let $E_u=\{(u,\bw_i): i\in [s]\}$.
						\item Set $E_\ell=E_\ell\cup E_u$  and  $S_\ell=S_\ell\cup\{\bw_1,\ldots,\bw_s\}$.
					\end{itemize}
					\item $S_\ell=S_\ell\setminus\bigcup_{i=0}^{\ell-1} S_i$
				\end{itemize}
				\item {\bf Return:} the subgraph of $G$ induced by the edges $\bigcup_{\ell=1}^t E_\ell$.
			\end{enumerate}
		\end{flushleft}\vskip -0.14in
	\end{framed}\vspace{-0.2cm}
	\caption{Simulation of bounded-depth BFS using random oracle.} \label{fig:simulation}
\end{figure}

By setting $s=h\log (2 h^{t+1}/\delta)$ and applying a union bound over at most $ 2h^{t}$ vertices sampled during the course of the algorithm, we have that with probability at least $1-\delta$, every light vertex sampled during iterations $1,\ldots, \ell-1$, all of its neighbours are also sampled. 

\section{Deferred proofs}\label{sec:deferred}
\begin{proof}[Proof of Lemma~\ref{cl:2-conn}]  Let $u, H_1, H_2$ as in the lemma. Let $\mathcal{D}$ be a distribution over $p$-degenerate $n$-vertex graphs that are $\eps$-far from $\mathcal{P}_{H_1}$, such that any one-sided error $\eps$-tester with $q$ queries fails to find a copy of $H_1$ with probability at least $2/3$ on a graph drawn according to $\mathcal{D}$.

Fix any $G\in \mathrm{supp}(\mathcal{D})$ and let $A$ be the set of nodes $v\in V(G)$ for which there exists a homomorphism $\phi:V(H_1)\to V(G)$ such that $\phi(u)=v$. We construct a new graph $G'=G\circ H_2$ as follows. For every vertex $v\in A$ we attach a $\deg(v)$ vertex disjoint copies  $H_2^{(v)}$ of $H_2$ by identifying the vertex $u \in H_2$ for which $\phi(u)=v$. Each such copy is vertex disjoint from $V(G)\setminus \{v\}$ and from all other copies $H_2^{(v')}$ for $v\neq v'\in A$. We let $n'=|V(G')|$. 
  
We note that $\sum_{v\in A} \deg(v) \leq pn$. More over, every copy is of constant size $k$. Hence the total number of copies and hence the size $n'$ of the graph $G'$ is $n' = O(n)$. 
 Further, we note that since $G$ is $p$-degenerate, and so is $H$, then the resulting $G'$ is also $O(p)$-degenerate. 

Since $G$ is $\eps$-far from $\mathcal{P}_{H_1}$, one has to make at least $\epsilon n$ modifications to make $G\in \mathcal{P}_{H_1}$. This implies that at least $\eps n$ modifications are necessary to make $G'\in \mathcal{P}_H$ (as we need to eliminate any copy of $H_1$ in $G'$). By normalizing, we have that $\dist(G',\mathcal{P}_{H}) = \epsilon' = \Omega ( \epsilon /k)$. 

Let $\mathcal{T}$ be any one-sided error $\epsilon/k$-tester for $\mathcal{P}_H$ on graphs with $n'$ nodes making $q$ queries. We simulate $\mathcal{T}$ on an input $G'$, by answering every oracle query as follows. If the query vertex is in $V(G)$, we answer according to the oracle of $G$, and if the query corresponds to a vertex in one of the $H_2$ copies, we answer according to the description of the specific $H_2$ copy (no access to $G$ is required for such query). Note that the simulation uses at most $q$ queries. 

By construction, if $G\in \mathcal{P}_{H_1}$ then $G'\in \mathcal{P}_H$, and therefore the simulation accepts $G$. On the other hand, if $\dist(G,\mathcal{P}_{H_1})\ge \epsilon$, then $\dist(G',\mathcal{P}_H)\ge\epsilon'$ so the simulation will reject $G$ with probability at least $2/3$. This implies a one-sided error $\epsilon$-tester with $q$ queries for $\mathcal{P}_{H_1}$ which is a contradiction.
 \end{proof}

 \begin{proof}[Proof of Lemma~\ref{cl:1-conn-inverse-(1,p)}] The proof is by induction on the number of 2-blocks $\ell$ in the decomposition $\calB$. The base case is trivial. Suppose that the lemma holds for every connected graph having a 2-block decomposition with at most $\ell-1$ blocks. Therefore, by Lemma~\ref{lem:2-block-decomp}, one can represent $H$ using the decomposition $(H'_1,H'_2)$ where $V(H)=V(H'_1)\cup V(H'_2)$, $V(H'_1)\cap V(H'_2)=\{v^*\}$, $H_1'$ is a 2-block, and $H'_2$ has a 2-block decomposition with at most $\ell-1$ 2-blocks. By the induction hypothesis $\calP_{H'_1}$ and $\calP_{H'_2}$ have one-sided error canonical testers.  

Let $T_{H'_1}$ be a $q_{H'_1}$-canonical $\eps/k$-tester for $\calP_{H'_1}$ with success probability amplified to $99/100$. Similarly, define $T_{H'_2}$ be a $q_{H'_2}$-canonical $\eps/k$-tester for $\calP_{H'_2}$ with success probability amplified to $99/100$.
We consider the following $q_H$-canonical tester $T_H$ where $q_H=\max(q_{H'_1}+q_{H'_2},(kp)^2/\eps)$.

Note that if $G\in \calP_H$, then $T_H$ accepts $G$ with probability $1$. Suppose that $G$ is $\eps$-far from $\calP_H$, and by Remark~\ref{rem:props} we can assume that $G$ is semi-bipartite with respect to $\Heavy_h$ where $h\ge \frac{4p^2}{\eps}$. Additionally, note that since $G$ is $\eps$-far from $\calP_H$, it holds that  $\dist(G,\calP_{H'_1})\ge \eps/k$ and $\dist(G,\calP_{H'_2})\ge \eps/k$ (as there are at most $k$ 2-blocks in the decomposition $\calB$).

By the distance guarantee, there exists a set $\calH(G)$ of $\eps n/kp$ $H$-appearances in $G$. Suppose that $\Omega(\eps n/kp)$ of $H'\in\calH$ are such that $V(H')\cap\Heavy_h=\emptyset$. Then, by Lemma~\ref{lem:all-light-test}, $T_H$ finds such an appearance with probability at least $2/3$.

Next, we consider the case where every $H'\in \calH(G)$ has at least one vertex in $\Heavy_h$. For $\delta=\eps/4kp^2$, using Lemma~\ref{cl:full-degree}, there exists a sub-collection $\calH'(G)\subseteq \calH(G)$ such that every $H'\in \calH'(G)$ is $\delta$-good and $|\calH'(G)|\ge (\eps/kp-2\delta p)n\ge \eps n/2pk$. For every $H'\in \calH'(G)$ let $\phi_{H'}:V(H')\to V(H)$ be an isomorphism. We classify the elements in $\calH'(G)$ with respect to whether $\phi^{-1}_{H'}(v^*)$ is in $\Light_h$. 

If at least $\eps n/4kp$ of $H'\in\calH'(G)$ are as above, then $T_H$ sample a vertex $u$ from such $H'_2$-appearance with probability at least $99/100$. Conditioned on this event, running $\textsf{Bounded-BFS}$ (as done in the tester $T_{H'_2}$) from $u$ must discover an $H'_2$-appearance with probability at least $99/100$ (as otherwise we get a contradiction to $\calP_{H'_2}$ being testable). At the point where $T_H$ discovers $\phi^{-1}_{H'}(v^*)$, in the following iterations the tester will sample all the neighbors of $\phi^{-1}_{H'}(v^*)$, and by the choice of $q_H$ during the next iterations of the algorithm, it will discover an $H'_1$-appearance with probability at least $99/100$ (as otherwise we get a contradiction to the testability of $\calP_{H'_1}$). Thus, the algorithm finds an $H$-appearance with probability at least $97/100$.

Next, consider the case where $\eps n/4kp$ of the members in $\calH'(G)$ have $\phi^{-1}_{H'}(v^*)\in \Heavy_h$. As before, with probability at least $98/100$ the tester finds an $H'_2$ appearance by using \textsf{Bounded-BFS}. Since $\phi^{-1}_{H'}(v^*)$ is $\delta$-good, and $q_H>\max(q_{H'_1}+q_{H'_2},(kp)^2/\eps)>3/\delta$ with probability at least $9/10$, the next iterations will discover an $H'_1$-appearance. Overall, the tester succeeds with probability at least $1-2/100-1/10>2/3$ and the lemma follows.
\end{proof}

\end{document}